\documentclass[lettersize,journal]{IEEEtran}
\usepackage{algorithmic}
\usepackage{algorithm}
\usepackage{array}
\usepackage[caption=false,font=normalsize,labelfont=sf,textfont=sf]{subfig}
\usepackage{textcomp}
\usepackage{stfloats}
\usepackage{url}
\usepackage{verbatim}
\usepackage{graphicx}
\usepackage{cite}
\usepackage{todonotes}
\usepackage{amsmath,amsfonts,amsthm}
\usepackage{amssymb}
\usepackage{fourier}
\usepackage{mathrsfs}
\usepackage{cases}

\newcommand{\babs}[1]{\left|{#1}\right|}

\newcommand{\leqs}{\leqslant}
\newcommand{\geqs}{\geqslant}
\newcommand{\pare}[1]{\left( {#1} \right)}
\newcommand{\bracket}[1]{\left[ {#1} \right]}
\newtheorem{defi}{Definition}[section]
\newtheorem{lem}{Lemma}[section]
\newtheorem{thm}{Theorem}[section]

\newtheorem{remark}{Remark}[section]

\begin{document}

\title{Two-Point Resolution in Spectral Super-Resolution }

\author{Xiaole He, Ping Liu, Junling Wang~\IEEEmembership{Member,~IEEE}
        % <-this % stops a space
\thanks{ XXX}% <-this % stops a space
\thanks{ XXX} }

% The paper headers
\markboth{Journal of \LaTeX\ Class Files,~Vol.~14, No.~8, August~2021}%
{Shell \MakeLowercase{\textit{et al.}}: A Sample Article Using IEEEtran.cls for IEEE Journals}

\IEEEpubid{0000--0000/00\$00.00~\copyright~2021 IEEE}
% Remember, if you use this you must call \IEEEpubidadjcol in the second
% column for its text to clear the IEEEpubid mark.

\maketitle

\begin{abstract}
Two-point super-resolution is an important problem in many signal processing applications. In this paper, we aim to establish a resolution theory for two-point super-resolution from a single snapshot. We consider a complex two-point model with unequal amplitudes and a nontrivial relative phase, and derive super-resolution upper bounds (SRUs) guaranteeing resolvability as well as super-resolution lower bounds (SRLs) below which stable reconstruction is impossible. 
The resulting bounds provide an explicit characterization of how the amplitude ratio and, more importantly, the relative phase affect the resolution limit for both source-number detection and location estimation.
In the in-phase regime, the classical resolution exponents are retained: \((\sigma/m)^{1/2}\) for source-number detection and \((\sigma/m)^{1/3}\) for location estimation. 
In the out-of-phase regimes, the phase term significantly changes the resolution limit: it acts as a direct subtractive term in the near-endpoint regime, and improves the scaling orders in the large-phase regime to \(\sigma/m\) for source-number detection and \((\sigma/m)^{1/2}\) for location estimation.
Extensive numerical experiments across different phase regimes and reconstruction algorithms validate the predicted scaling laws and theoretical resolution boundaries. 
Moreover, comparison with our resolution limit in all phase regimes reveals the optimality of $\ell_0$, ML, and ESPRIT algorithms, and the non-optimality of SVT, MUSIC, and the convex method—a finding that, to the best of our knowledge, has not been reported before. 
Collectively, our results show that the phase of amplitudes is not merely a nuisance in super-resolution, but a key factor that can be exploited to improve stable resolvability.

%demonstrate that phase information is not merely a nuisance parameter in complex-valued super-resolution, but a key factor that can be exploited to improve stable resolvability.

%In particular, in the large-phase regime, the \(\ell_0\), ML, and ESPRIT methods are observed to not only attain optimal scaling orders but also nearly achieve the corresponding SRLs (namely, the optimal resolution), demonstrating their superiority.

\end{abstract}

\begin{IEEEkeywords}
Super-resolution, two-point resolution, complex amplitudes, relative phase, source-number detection, location estimation, computational resolution limit. 
\end{IEEEkeywords}

\section{Introduction}
 
\IEEEPARstart{S}{pectral} super-resolution aims to recover point sources from Fourier samples at a resolution finer than the classical limit imposed by the observation bandwidth \cite{candes2014towards}. It is an inverse problem of great theoretical
and practical interest. A particularly important instance of this general problem is the two-point super-resolution model, in which the number of sources is $n=2$. Despite its apparent simplicity, this model arises pervasively in practical signal processing applications. In electromagnetic propagation, for example, the received signal is often well approximated by two dominant echoes, such as a direct-path echo and a ground- or sea-reflected echo, echoes returned from different interfaces of a layered medium, or two dominant skywave returns associated with different ionospheric propagation paths. Since these components may differ in delay, phase, amplitude, or angle of arrival, their relative parameters can be exploited to infer quantities of interest. Representative examples include low-elevation target height estimation \cite{liu2017height,liu2025signal}, layer-thickness measurement \cite{zhao2018super}, ice-thickness estimation \cite{mangilli2022new}, multilayer coating thickness inversion \cite{yakovlev2015non}, and over-the-horizon radar target tracking \cite{lan2017distributed}.

The two-point model also appears naturally in SAR/ISAR imaging through the layover phenomenon \cite{gini2003layover,martorella20143d}. A SAR/ISAR image can be viewed as a two-dimensional projection of three-dimensional scatterers. Consequently, for complex targets, two distinct scatterers may be projected onto nearly the same location in the image plane. This gives rise to two representative resolution scenarios. In the first scenario, the two scatterers are close but still separable in the image plane \cite{wang2021layover}. In the second scenario, which is typical in InSAR/InISAR, the scatterers remain indistinguishable in a conventional two-dimensional image, but become separable through phase differences induced, for instance, by multi-baseline configurations \cite{lombardini2003reflectivity} or multi-view observations \cite{liu2024multi}. From this perspective, layover resolution can be interpreted as a two-point resolution problem, often along the elevation dimension.

More generally, the two-point model may be regarded as the simplest unresolved-target configuration, where multiple physical targets or scatterers are separated by less than the nominal system resolution in range, angle, or Doppler and therefore occupy the same resolution cell \cite{blair2002unresolved,rogel2021time,kase2018doa}. The received signal is then a superposition of multiple target or scatterer echoes \cite{huang2023joint}. In the simplest case with two dominant contributors, the problem reduces to two-target resolution \cite{lee2015unambiguous,wang2020efficient}. More broadly, two-target resolution provides a fundamental building block for multitarget resolution in multitarget tracking \cite{nandakumaran2008joint,angle2021multiple} and high-resolution imaging \cite{he2017high}.

These examples demonstrate that the two-point model captures a common structural feature shared by many reconstruction problems.  Motivated by these examples, in this paper we study the two-point super-resolution problem with complex amplitudes from a single snapshot.
\IEEEpubidadjcol  %防止首页内容与下方版权IEEEpubid重叠

% In many practical scenarios, however, the number of available snapshots can be extremely limited. In particular, under a short coherent processing interval (CPI), rapid target maneuvering, or nonstationary conditions, the recovery task may even be reduced to a single observation record \cite{hacker2010single}. Meanwhile, several important applications can be locally modeled by two closely spaced point sources. For example, the same transmitted signal may generate multiple propagation paths under atmospheric refraction \cite{feng2019overcoming} or through-wall propagation \cite{ma2018interaction}. In super-resolution imaging and group target detection, echoes are often modeled as the superposition of a few closely spaced scattering centers, whose responses generally differ in both amplitude and phase \cite{tan2024radar,rezaei2022unambiguous}.  This makes the complex-amplitude point-source model particularly relevant in practice. Motivated by these considerations, in this paper we study the super-resolution problem of two one-dimensional point sources with complex amplitudes from a single snapshot.

\subsection{Related Work}

Super-resolution limits have been studied from several perspectives, including analyses tailored to specific reconstruction algorithms. 
Convex optimization (CVX) \cite{bertsekas2009convex,bubeck2015convex,li2023convex} provides one major approach to super-resolution, in which a nonconvex recovery problem can be relaxed into a tractable convex program. In particular, \cite{fernandez2016super} showed that super-resolution is achievable via convex programming when the minimum separation exceeds $1.26/f_c$, while \cite{da2020stable} studied the stable resolution limit of the Beurling--LASSO estimator for spike deconvolution with total variation regularization. Furthermore, \cite{duval2020characterization} established necessary and sufficient conditions under which Beurling–LASSO achieves stable super-resolution recovery of positive sources under the Laplacian kernel and several Gaussian sampling schemes.

Maximum likelihood (ML) methods \cite{pan2002maximum} form a class of high-resolution parametric estimators that recover the source parameters by optimizing the likelihood function. 
The resolution capability of ML-type estimators has been explicitly studied through the probability of resolution. 
In particular, \cite{mestre2020resolution} characterized the resolution probability of conditional and unconditional ML DoA estimators in the threshold region, where the SNR and/or the number of snapshots are limited. Along this line, \cite{schenck2020probability} derived an asymptotic characterization of the probability of resolution for partially relaxed deterministic maximum likelihood. 

Subspace methods \cite{katayama2005subspace,lee2012subspace,liao2016iterative} provide yet another algorithmic perspective, exploiting the orthogonality between the signal and noise subspaces to estimate spectral locations or source parameters. Representative examples include MUSIC (Multiple Signal Classification) \cite{agarwal2016multiple}, and ESPRIT (Estimation of Signal Parameters via Rotational Invariance Techniques) \cite{roy2002esprit}, both of which are particularly effective in multi-snapshot spectral estimation. For single-snapshot data, spatial smoothing may be used to induce a multi-snapshot structure, thereby allowing subspace methods to be applied to the model in \eqref{equ:modelsetting0}. In this direction, \cite{li2022stability} analyzed the stability and super-resolution performance of MUSIC and ESPRIT in a multi-snapshot setting, with explicit dependence on the noise level, the number of snapshots, and the super-resolution factor (SRF). More recently, \cite{ding2024esprit} proved that, under suitable assumptions on the bias and high-noise regime, the localization error of ESPRIT can attain the optimal scaling $\widetilde{O}\!\left(n^{-3/2}\right)$ with respect to the cutoff frequency $n$. 

In recent years, motivated by major advances in super-resolution microscopy \cite{hell1994breaking,STED,hess2006ultra,PALM,STORM} and by the rapid development of super-resolution algorithms in applied mathematics \cite{duval2015exact,poon2019,tang2014near,morgenshtern2020super,denoyelle2017support}, the inherent super-resolving capacity of the imaging problem has become increasingly popular, and the one-dimensional case was well studied \cite{liu2024mathematical}. For sparse recovery from band-limited Fourier measurements, the minimax reconstruction error was shown to scale polynomially with the SRF, indicating that the deterioration in resolution is fundamentally caused by the measurement model rather than by any specific reconstruction algorithm \cite{demanet2015recoverability}. This viewpoint was further sharpened by relating the resolution limit to the smallest singular value of partial Fourier or Vandermonde matrices with closely spaced nodes. In particular, when the support contains clustered sources, the smallest singular value decays according to an SRF-dependent power whose exponent is determined by the local cluster cardinality \cite{batenkov2020conditioning, li2021stable}. Extending this perspective to the off-the-grid setting, sharp minimax recovery rates were established for near-colliding point sources, showing that clustered nodes and their amplitudes obey fundamentally worse scaling laws than isolated ones under the same band-limited observation model \cite{batenkov2021super}. More recently, the Cramér–Rao lower bound has been employed to characterize the transition between well- and ill-conditioned regimes through the Fisher information matrix, thereby providing a statistical interpretation of a Rayleigh-type resolution threshold \cite{hockmann2024analysis}. These results suggest that the ultimate resolution limit is largely governed by the bandwidth-limited observation operator and the local geometry of the support, whereas specific algorithms differ mainly in how closely they approach this intrinsic limit. 

Overall, the above results mainly characterize the resolution limit through the scaling of reconstruction error with the SRF, rather than through an explicit separability statement. 
To obtain a more direct characterization, Liu introduced the concept of the computational resolution limit (CRL) in \cite{liu2024mathematical,liu2021theory,liu2022rslpositive}, which provides explicit bounds on the minimum separation required for stable super-resolution in terms of the noise level and the minimum source amplitude. 
However, existing CRL analyses based on equal-amplitude or in-phase models do not fully exploit the additional resolving information carried by amplitude and phase variations, and may therefore lead to overly conservative resolution bounds. 
This limitation is particularly relevant in radar sensing and imaging scenarios, where phase differences can enhance the distinguishability of scatterers or targets.
Motivated by this observation, we extend the CRL framework to the complex-amplitude setting and derive bounds that explicitly incorporate the effects of amplitude ratio and relative phase.

% However, these bounds do not account for the effect of phase difference. Moreover, despite its direct relevance to the fundamental resolution bottleneck in single-snapshot or short-CPI imaging, a systematic characterization of the minimum resolvable two-point separation remains limited for this setting. Although two-point resolvability has been extensively studied in optical imaging and statistical detection theory \cite{helstrom2003detection,shahram2004imaging,shahram2005resolvability,reddy2013two,reddy2018apodization,massaro2024two}, those results are typically derived under substantially different assumptions, such as incoherent intensity-only measurements or long-exposure observations, and are therefore not directly applicable to the single-snapshot model in \eqref{equ:modelsetting0}.

\subsection{Our Contributions}

This paper studies the two-point super-resolution limit under the single-snapshot model, where the two sources may have unequal amplitudes and a nontrivial relative phase. We focus on two fundamental objectives: detecting the number of sources and stably estimating their locations once the source number has been resolved. The main contributions are summarized as follows:
\begin{itemize}
    \item \textbf{Explicit super-resolution bounds for complex two-point sources.}
    We establish a  CRL characterization for the complex two-point model $\mu = m e^{i\theta_1}\delta_{y_1} + \beta m e^{i\theta_2}\delta_{y_2}$ under deterministic bounded noise and single-snapshot band-limited Fourier measurements. 
    For the two fundamental tasks of source-number detection and location estimation, we derive super-resolution upper bounds (SRUs) that guarantee stable resolvability, together with super-resolution lower bounds (SRLs) below which stable resolution is impossible in general. 
    The resulting bounds explicitly quantify how the amplitude ratio $\beta$ and, more importantly, the relative phase $\theta=\theta_1-\theta_2$ affect the minimum resolvable separation.

    \item \textbf{A phase-dependent regime decomposition with improved scaling laws.}
    We show that the SRUs and SRLs admit a phase-dependent decomposition into three qualitatively distinct regimes: the in-phase regime, the near-endpoint phase regime, and the large-phase regime. 
    In the in-phase regime, the SRLs recover the classical exponent laws, namely \((\sigma/m)^{1/2}\) for source-number detection and \((\sigma/m)^{1/3}\) for stable location estimation. 
    In the near-endpoint phase regime, the phase term reduces the corresponding resolution bounds while preserving the same scaling exponents. 
    By contrast, in the large-phase regime, the SRLs become strictly sharper, improving to the orders \(\sigma/m\) and \((\sigma/m)^{1/2}\) for source-number detection and location estimation, respectively. 
    Moreover, for source-number detection, we identify a critical near-\(\pi\) phenomenon in the equal-amplitude case: the scaling law improves from the generic order \((\sigma/m)^{1/2}\) to the sharper order \(\sigma/m\). 
    These results demonstrate that the relative phase can fundamentally alter two-point super-resolution, either by lowering the resolution threshold within the same exponent regime or by changing the resolution scaling law itself.

    \item \textbf{Systematic numerical validation and algorithm-selection guidance.}
    We perform extensive Monte Carlo simulations across the three phase regimes and across several representative algorithms. 
    For source-number detection, we compare the $\ell_0$ and singular-value-thresholding (SVT) methods; for location estimation, we compare MUSIC, ESPRIT, ML, and CVX. 
    The measured phase-transition slopes and empirical resolution boundaries agree with the proposed SRLs, validating the predicted phase-dependent scaling laws. In particular, the $\ell_0$, ML, and ESPRIT algorithms achieve the optimal scaling law in all phase regimes, while the SVT, MUSIC, and the convex methods deviate from optimal resolution order in certain phase regimes. Moreover, the SRLs also serve as quantitative benchmarks for algorithm selection: among the tested methods, $\ell_0$ is closest to the SRLs for source-number detection, whereas ESPRIT and ML exhibit the best performance for location estimation. 
    In the large-phase regime, these best-performing methods attain the improved optimal scaling orders and nearly approach the corresponding SRLs, namely, the optimal resolution.
\end{itemize}

The remainder of this paper is organized as follows. Section~\ref{section:model_setting_with_CRL} introduces the problem formulation and basic definitions. Sections~\ref{section:zero_theta}--\ref{section:large_theta} present super-resolution upper and lower bounds for the source-number detection and location estimation problem under three different relative phase regimes, and provide numerical results to validate the proposed theory. Section~\ref{section:conclusion} concludes the paper. The proofs of the derived CRL bounds are deferred to the Appendix.

\section{Model Setting and Computational Resolution limits}\label{section:model_setting_with_CRL}

\subsection{Model Setting}\label{subsection:model_setting}
We first introduce the two-point super-resolution model considered throughout this paper. Let
\(
\mu=\sum_{j=1}^2 a_j\delta_{y_j}
\)
be a discrete measure, where $y_j\in\mathbb{R}$ denotes the location of the $j$th point source and $a_j\in\mathbb{C}$ denotes its complex amplitude. We observe the noisy Fourier data
\begin{align}
    \mathbf{Y}(\omega)
    =\mathscr{F}[\mu](\omega)+\mathbf{W}(\omega)
    =\sum_{j=1}^2 a_j e^{i y_j\omega}+\mathbf{W}(\omega),\quad |\omega|\leq \Omega,
    \label{equ:modelsetting0}
\end{align}
where $\mathscr{F}[\cdot]$ denotes the Fourier transform, $\Omega$ is the cut-off frequency, and $\mathbf{W}(\cdot)$ denotes the noise. Throughout this paper, we adopt a deterministic bounded-noise model:
\begin{align}
    |\mathbf{W}(\omega)|<\sigma,\quad |\omega|\leq \Omega,
    \label{equ:modelsetting_of_sigma}
\end{align}
where $\sigma$ is the noise level. The goal of spectral super-resolution is to recover the discrete measure $\mu$ from the band-limited noisy data $\mathbf{Y}$. 

% The basic measurement model above will be supplemented by two aspects that are important to our analysis: the spatial configuration of the two point sources and the relative phase between their complex amplitudes.

% \subsubsection{Point Source Configuration}\label{subsubsec:cluster}

Since we are interested in the super-resolution regime, we follow \cite{donoho1992superresolution} and define the Rayleigh length as
\begin{align}\label{equ:rayleighlength1}
    d_{\mathrm{RL}}=\frac{\pi}{\Omega}.
\end{align}
Super-resolution then refers to resolving point sources at a scale finer than $d_{\mathrm{RL}}$. Therefore, we assume that the two point sources lie within one Rayleigh-length window centered at the origin, that is,
\begin{equation}\label{eq:cluster_assump}
    y_j \in B_{\frac{\pi}{2\Omega}}(0),\quad j=1,2,
\end{equation}
where
\begin{equation}\label{eq:Bdelta}
    B_{\delta}(x) := \left\{ y\in \mathbb{R}: |y - x| < \delta \right\},
\end{equation}
represents a one-dimensional open neighborhood. 
%This assumption is standard in off-the-grid super-resolution with band-limited Fourier data; see, e.g., \cite{batenkov2021super,liu2021theory}. 

%We assume that the two point sources lie within one Rayleigh-length window centered at the origin, namely,
%\begin{equation}\label{eq:cluster_assump}
%    y_j \in B_{\frac{\pi}{2\Omega}}(0),\quad j=1,2.
%\end{equation}

%Moreover, by translation invariance, results established for configurations in $B_{\frac{\pi}{2\Omega}}(0)$ extend directly to configurations in $B_{\frac{\pi}{2\Omega}}(x)$ for any $x\in\mathbb{R}$ after a shift of the coordinate origin.

%\subsubsection{Relative Phase}\label{subsubsec:phase}

To characterize  the influence of the phase, we parameterize the complex amplitudes as
\(
a_j = |a_j|e^{i\theta_j}
\)
with $\theta_j\in(-\pi,\pi]$, $j=1,2$, and define the relative phase and the effective relative phase, respectively, by
\begin{equation}\label{eq:theta_def}
    \theta := \theta_1-\theta_2, \quad \babs{\theta}_{\min} := \min\{\babs \theta,\pi-\babs \theta\},
\end{equation}
which play important roles in our resolution estimation.

% Since the measurements depend on the phases only through the factors $e^{i\theta_j}$, both phases are defined modulo $2\pi$. Hence, the relative phase $\theta$ is also defined modulo $2\pi$, and we may restrict it, without loss of generality, to its principal value in $(-\pi,\pi)$. We further define
% \begin{align}
%     |\theta|_{\min}=\min\{|\theta|,\,\pi-|\theta|\},
% \end{align}

\subsection{Definitions of Computational Resolution Limits}\label{subsec:definitions}

Having specified the measurement model, we now recall the computational resolution limit (CRL) introduced in \cite{liu2024mathematical}, which provides a quantitative characterization of resolution limits. 
%The corresponding CRL definitions for the two tasks considered in this paper are given below.

\subsubsection{CRL for source-number detection}\label{subsubsec:sigma_adm}
The CRL framework for source-number detection is based on the concept of $\sigma$-admissible measure given by: 

\begin{defi}\label{def:sigma_admissible}
Given the measurement $\mathbf{Y}$ generated by $\mu=\sum_{j=1}^n a_j\delta_{y_j}$ in \eqref{equ:modelsetting0}, $\widehat{\mu}=\sum_{j=1}^{d}\widehat{a}_{j}\,\delta_{\widehat{y}_{j}}$ is said to be a \textup{$\sigma$-admissible} discrete measure of\, $\mathbf{Y}$ if
\begin{equation}\label{equ:sigma_adm}
    \bigl|\mathscr{F}[\widehat{\mu}](\omega)-\mathbf{Y}(\omega)\bigr| < \sigma,
    \quad  \forall \,|\omega|\leqs \Omega.
\end{equation}
If further $\widehat{a}_j>0$, $j=1,\cdots,d$, then $\widehat{\mu}$ is said to be a positive \textup{$\sigma$-admissible} discrete measure of\, $\mathbf{Y}$.
\end{defi}

% \begin{remark}[A convenient condition]
% \label{rem:sufficient_2sigma}
% Let $\mu$ be the underlying source measure in \eqref{equ:modelsetting0}. Suppose that a candidate measure $\widehat{\mu}$ satisfies
% \[
% \bigl|\mathscr{F}[\widehat{\mu}](\omega)-\mathscr{F}[\mu](\omega)\bigr| < 2\sigma,
% \quad \forall\, |\omega|\leqs \Omega.
% \]
% Then there exists a measurement $\mathbf{Y}$ consistent with \eqref{equ:modelsetting0} such that $\widehat{\mu}$ is $\sigma$-admissible for $\mathbf{Y}$. 
% \end{remark}

% Notably, the admissible set for the reconstruction process comprises discrete measures whose Fourier data are sufficiently close to $\mathbf{Y}$ in \eqref{equ:modelsetting0}. 
% In general, every such admissible measure is possibly the ground truth without any additional prior information. Consequently, the concept of $\sigma$-admissible discrete measures are introduced.

Note that the set of $\sigma$-admissible measures of $\mathbf{Y}$ characterizes all possible solutions to our super-resolution problem with the given measurement $\mathbf{Y}$. Detecting the source number $n$ is possible only if all of the admissible measures have at least $n$ supports; otherwise, it is impossible to detect the correct source number without additional a priori information. Thus, following definitions similar to those in \cite{liu2021mathematicalhighd,liu2021theorylse,liu2021mathematicaloned, liu2023improved}, we define the computational resolution limit for the source-number detection problem as follows.

\begin{defi}
The computational resolution limit to the source-number detection in the super-resolution problem is defined as the smallest nonnegative number $\mathrm{CRL}_{\mathrm{num}}$ such that for all two-sparse measures $\sum_{j=1}^2 a_j \delta_{y_j}, a_j \in \mathbb{C}, {y}_j \in B_{\frac{ \pi}{2 \Omega}}(0)$ and the associated measurement $\mathbf{Y}$ in \eqref{equ:modelsetting0}, if
$$
\babs{y_1-y_2} \geqs \mathrm{CRL}_{\mathrm{num}},
$$
then there does not exist any $\sigma$-admissible measure of $\mathbf{Y}$ with fewer than two support points.  
\end{defi}

\subsubsection{CRL for location estimation}\label{subsubsec:delta_nbhd}
To formalize stable localization, we use the notion of a $\delta$-neighborhood from \cite{liu2024mathematical}. 

\begin{defi}\label{def:delta_neighborhood}
Let \(\mu = \sum_{j=1}^n a_j \delta_{y_j}\) with \(y_j \in \mathbb{R}\), and let \(\delta>0\) be such that the sets \(B_\delta(y_j)\), \(1\leqs j\leqs n\), are pairwise disjoint. We say that
\(
\widehat{\mu} = \sum_{j=1}^n \widehat{a}_j \delta_{\widehat{y}_j}
\)
lies in the \(\delta\)-neighborhood of \(\mu\) if each \(\widehat{y}_j\) belongs to exactly one of the sets \(B_\delta(y_j)\), \(1\leqs j\leqs n\).
\end{defi}

According to the above definition, a measure in a $\delta$-neighborhood preserves the inner structure of the true set of sources. 
For any stable location estimation algorithm, the output should be a measure in some $\delta$-neighborhood; otherwise, it is impossible to distinguish which is the reconstructed location of some $y_j$'s. 
Therefore, the CRL for stable location estimation is defined as: 
%For ease of exposition, we only consider measures supported in 
%$B_{\frac{\pi}{2\Omega}}(0)$.

\begin{defi}
The computational resolution limit to the stable location estimation in the super-resolution problem is defined as the smallest nonnegative number $\mathrm{CRL}_{\mathrm{supp}}$ such that for all two-sparse measure 
$\sum_{j=1}^{2} a_j\delta_{y_j}$, $a_j \in \mathbb{C}$,
$y_j\in B_{\frac{\pi}{2\Omega}}(0)$ and the associated measurement $\mathbf{Y}$ in \eqref{equ:modelsetting0}, if
\begin{align}
    \babs{y_1-y_2} \geqs \mathrm{CRL}_{\mathrm{supp}},
\end{align}
then there exists $\delta>0$ such that any $\sigma$-admissible measure for $\mathbf{Y}$ with two supports in 
$B_{\frac{\pi}{2\Omega}}(0)$ is within a $\delta$-neighborhood of $\mu$. 
\end{defi}

In the following sections, we estimate the CRLs in different phase regimes. In particular, we derive super-resolution upper bounds (SRUs) that guarantee stable resolvability, together with super-resolution lower bounds (SRLs) below which stable resolution is impossible in general. Therefore, the CRL is bounded by SRU and SRL. Numerical simulations are then presented to validate the theoretical predictions. %Specifically, the SRL characterizes the theoretically optimal resolution achievable by an ideal reconstruction algorithm. 

\section{In-Phase (Positive) Sources $\left( \theta=0 \right)$} \label{section:zero_theta}

We first consider the in-phase case, in which the two source amplitudes are aligned ($\theta=0$) or positive, i.e., $\mu=m \delta_{y_1}+\beta m \delta_{y_2}$, where $m>0$ represents the signal intensity and $\beta\geqs 1$ represents the amplitude ratio.

\subsection{Theoretical Bounds} \label{subsection:boundary_zero_theta}

%This subsection establishes the resolution limits for the in-phase case, in which the two complex amplitudes reduce to positive weight. 
We derive sharp estimates of the SRUs and SRLs for both source-number detection and location estimation. The proofs of the corresponding theorems are deferred to Appendix~\ref{proof:zero_theta}.

\begin{thm}\label{thm:numberresolutionpositive}
Let $\mathbf{Y}$ be generated by the positive measure
$\mu=m \delta_{y_1}+\beta m \delta_{y_2}$ with $y_1, y_2\in B_{\frac{\pi}{2\Omega}}(0)$,
 $\beta\geqs 1$, and $m>0$.
For the case when $\frac{\sigma}{m}<\frac{\beta}{2(\beta+1)}$, if
\begin{align}
\label{equ:num_positive_ub_cond}
\displaystyle
|y_1-y_2|\geqs\frac{2}{\Omega}\arcsin\!\left(\left(\frac{2(\beta+1)\sigma}{\beta m}\right)^{\frac{1}{2}}\right),
\end{align}
then no $\sigma$-admissible measure for $\mathbf{Y}$ can be supported on fewer than two points.
Moreover, if
\begin{align}
\label{equ:num_positive_lb_cond}
\babs{y_1-y_2} < \frac{2}{\Omega}\pare{ \frac{\beta+1}{\beta}\frac{\sigma}{m} }^{\frac{1}{2}},
\end{align}
then there exists a positive $\sigma$-admissible measure $\widehat \mu$ for $\mathbf{Y}$ supported on a single point.
\end{thm}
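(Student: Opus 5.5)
Both parts rest on the observation that a $\sigma$-admissible measure $\widehat\mu$ satisfies
\[
\bigl|\mathscr{F}[\widehat\mu](\omega)-\mathscr{F}[\mu](\omega)\bigr| < 2\sigma \quad\text{for all } |\omega|\leqs\Omega,
\]
and conversely any $\widehat\mu$ satisfying this can be made $\sigma$-admissible for a suitable noise $\mathbf{W}$. So the upper bound reduces to showing that no one-point measure is $2\sigma$-close to $\mathscr{F}[\mu]$. The lower bound reduces to constructing a one-point measure and a noise realisation that make it admissible.

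\textbf{Upper bound.} Suppose $\widehat\mu=\widehat a\,\delta_{\widehat y}$; the zero measure is handled the same way. Translate so that $y_1=-d/2$ and $y_2=d/2$ with $d=|y_1-y_2|$. Set $f(\omega)=\mathscr{F}[\mu](\omega)-\widehat a e^{i\widehat y\omega}$, so $|f|<2\sigma$ on $[-\Omega,\Omega]$. Multiplying by $e^{-i\widehat y\omega}$ removes the modulation of $\widehat a$. I will then eliminate $\widehat a$ with a three-sample second difference at the frequencies $0,\pm\Omega$. Put $g(\omega)=e^{-i\widehat y\omega}f(\omega)$, so that
\[
g(\Omega)+g(-\Omega)-2g(0)=\sum_j a_j\bigl(e^{i(y_j-\widehat y)\Omega}+e^{-i(y_j-\widehat y)\Omega}-2\bigr)=-4\sum_j a_j\sin^2\!\bigl((y_j-\widehat y)\Omega/2\bigr).
\]
The left side has modulus less than $8\sigma$. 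The right side is a positive combination, since $a_1=m$ and $a_2=\beta m$, so it is bounded below by
\[
4m\min_{\widehat y}\Bigl(\sin^2\tfrac{(y_1-\widehat y)\Omega}{2}+\beta\sin^2\tfrac{(y_2-\widehat y)\Omega}{2}\Bigr).
\]
The main obstacle is this minimisation. It has to produce exactly the threshold $\frac{\beta}{\beta+1}\sin^2(d\Omega/2)$ up to the stated constants, giving
\[
\frac{\beta}{\beta+1}\sin^2\!\bigl(\Omega d/2\bigr)\leqs 2\sigma/m,
\]
which contradicts \eqref{equ:num_positive_ub_cond}.

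For the minimisation I would use $\sin^2 x=(1-\cos 2x)/2$. The expression becomes
\[
\tfrac{1+\beta}{2}-\tfrac12\,\mathrm{Re}\bigl(e^{i\phi}(e^{i\alpha_1}+\beta e^{i\alpha_2})\bigr)\geqs\tfrac{1+\beta}{2}-\tfrac12\bigl|1+\beta e^{i\Omega d}\bigr|.
\]
Then
\[
(1+\beta)^2-|1+\beta e^{i\Omega d}|^2=4\beta\sin^2(\Omega d/2),
\]
so the bound is
\[
\frac{2\beta\sin^2(\Omega d/2)}{(1+\beta)+|1+\beta e^{i\Omega d}|}\geqs\frac{\beta}{1+\beta}\sin^2(\Omega d/2).
\]
A different choice of sample frequencies (e.g.\ $0,\pm\Omega/2$ or $\pm\Omega$ only) may be needed to match the constant in \eqref{equ:num_positive_ub_cond} exactly. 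The location constraint $y_j\in B_{\frac{\pi}{2\Omega}}(0)$ ensures $\Omega d<\pi$, so $\arcsin$ is well defined under $\sigma/m<\beta/(2(\beta+1))$.

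\textbf{Lower bound.} Take $\widehat\mu=(1+\beta)m\,\delta_{\widehat y}$ with the barycenter $\widehat y=(y_1+\beta y_2)/(1+\beta)$. Writing $u_j=y_j-\widehat y$, the first moments cancel: $u_1+\beta u_2=0$. Hence
\[
\Bigl|\sum_j a_j e^{iu_j\omega}-(1+\beta)m\Bigr|=m\Bigl|\sum_j c_j\bigl(e^{iu_j\omega}-1-iu_j\omega\bigr)\Bigr|\leqs \frac{m\,\omega^2}{2}\bigl(u_1^2+\beta u_2^2\bigr),
\]
where $c_1=1$ and $c_2=\beta$. With $u_1^2+\beta u_2^2=\frac{\beta}{1+\beta}d^2$, this gives the bound
\[
\frac{m\beta\Omega^2d^2}{2(1+\beta)}.
\]
This is below $2\sigma$ precisely when $d<\frac{2}{\Omega}\bigl(\frac{1+\beta}{\beta}\frac{\sigma}{m}\bigr)^{1/2}$. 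Setting $\mathbf{W}=\tfrac12(\mathscr{F}[\widehat\mu]-\mathscr{F}[\mu])$ makes both $|\mathbf{W}|<\sigma$ and $|\mathscr{F}[\widehat\mu]-\mathbf{Y}|<\sigma$, so $\widehat\mu$ is a positive $\sigma$-admissible one-point measure.
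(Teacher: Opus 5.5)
Your proof is correct. The lower bound is the paper's argument: the same barycentric measure $(1+\beta)m\,\delta_{\widehat y}$ and the same bound $\frac{\beta m}{2(\beta+1)}d^2\omega^2$. You get it from the vanishing first moment rather than the paper's integral representation, and you also write out the noise realization that the paper leaves implicit. Your upper bound, however, takes a genuinely different route.

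Both arguments use the same three samples $\omega=-\Omega,0,\Omega$ (the paper's $0,\Omega,2\Omega$ after it shifts the frequency window). The paper applies the location--amplitude inequality of Lemma~\ref{lem:loc-ampinequality} once per source, i.e.\ a filter annihilating the pair $\{\widehat y,y_2\}$ and then $\{\widehat y,y_1\}$. It adds the two bounds and uses $|e^{i\Omega y_1}-e^{i\Omega\widehat y}|+|e^{i\Omega y_2}-e^{i\Omega\widehat y}|\geqs 2\sin(\Omega d/2)$.

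Your demodulated second difference is instead the filter with a double root at $e^{i\Omega\widehat y}$. It removes only $\widehat a$ and keeps both sources. Positivity then makes the remainder a nonnegative sum, which you minimize exactly in $\widehat y$.

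The two routes give the identical inequality $\sin^2(\Omega d/2)<\frac{2(\beta+1)}{\beta}\frac{\sigma}{m}$. So no other sampling frequencies are needed to match the constant. Do keep that final inequality strict (you have $<8\sigma$, but your last display says $\leqs$), since the hypothesis \eqref{equ:num_positive_ub_cond} is non-strict.

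What the paper's route buys is that it never uses positivity, only $|a_1|=m$ and $|a_2|=\beta m$. It therefore holds verbatim for complex amplitudes and extends to more sources through the general lemma. Yours breaks down there: for $a_2=-a_1$ and $\widehat y$ at the midpoint, the second difference vanishes identically.

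In exchange, yours is self-contained and exact. Before you relax $|1+\beta e^{i\Omega d}|\leqs 1+\beta$, it is slightly sharper than the stated bound. For $\beta=1$ it gives $d<\frac{4}{\Omega}\arcsin\bigl((\sigma/m)^{1/2}\bigr)$, which is at most the paper's $\frac{2}{\Omega}\arcsin\bigl(2(\sigma/m)^{1/2}\bigr)$.
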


\begin{remark}
    The condition on noise-to-signal ratio $\frac{\sigma}{m}<\frac{\beta}{2(\beta+1)}$ in Theorem \ref{thm:numberresolutionpositive} and similar conditions in subsequent results are necessary. Otherwise, some source information will be completely buried in the noise. For example, when $\frac{\sigma}{m}\geq \frac{\beta}{2(\beta+1)}$, for any well-separated $y_1, y_2$, the measurements of $m \delta_{y_1} + \beta m \delta_{y_2}$ can be approximated by the single point source $\beta m \delta_{y_2}$ to noise level.  
\end{remark}

Theorem \ref{thm:numberresolutionpositive} presents a sharp estimate on $\mathrm{CRL}_{\mathrm{num}}$ for superresolving two positive sources with imbalanced amplitudes. Next, we introduce the estimate of $\mathrm{CRL}_{\mathrm{supp}}$.

\begin{thm}
\label{thm:supportrecoveryupperboundpositive}
Let $\mathbf{Y}$ be generated by a positive measure $\mu=m \delta_{y_1}+\beta m \delta_{y_2}$, with $y_1, y_2\in B_{\frac{\pi}{2\Omega}}(0)$, $\beta \geqs 1$, and $m>0$. Let $d:=|y_1-y_2|$. Assume that
\begin{align}
\babs{y_1-y_2} \geqs \frac{3}{\Omega}\arcsin\left(2\left( \frac{\sigma}{m } \right)^\frac{1}{3} \right).
\label{equ:separationforlocationrecovery-thm}
\end{align}
If $\widehat{\mu} = \widehat{a}_1 \delta_{\widehat{y}_1} + \widehat{a}_2 \delta_{\widehat{y}_2}$ supported on $B_{\frac{\pi}{2\Omega}}(0)$ is a $\sigma$-admissible measure for $\mathbf{Y}$, then $\widehat{\mu}$ lies within the $\frac{d}{2}$-neighborhood of $\mu$. Moreover, if 
\begin{align}
    \babs{y_1-y_2}<\frac{2.25}{\Omega}\pare{\frac{\sigma}{m}}^{\frac{1}{3}},
    \label{equ:condition_locs_positive}
\end{align}
then there exists a positive $\sigma$-admissible measure $\widehat \mu$ for $\mathbf{Y}$ that does not lie within the $\frac{d}{2}$-neighborhood of $\mu$.
\end{thm}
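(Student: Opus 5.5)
The proof has two parts, matching the two assertions of Theorem~\ref{thm:supportrecoveryupperboundpositive}: a stability (upper bound) part and a construction (lower bound) part. Both rest on comparing the true measure with a candidate two-point measure through their Fourier data over $|\omega|\leqs\Omega$.

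\textbf{Upper bound.} Let $\widehat{\mu}=\widehat a_1\delta_{\widehat y_1}+\widehat a_2\delta_{\widehat y_2}$ be $\sigma$-admissible. By the triangle inequality,
\[
\bigl|\mathscr{F}[\widehat\mu-\mu](\omega)\bigr|<2\sigma,\qquad |\omega|\leqs\Omega .
\]
The measure $\widehat\mu-\mu$ has at most four atoms $\widehat y_1,\widehat y_2,y_1,y_2$, all in $B_{\frac{\pi}{2\Omega}}(0)$. I sample at $\omega_k=k\Omega/3$ for $k=0,1,2,3$. This choice guarantees that $|\omega_k(y-y')|<\pi$ for any two points in the window, so the map $y\mapsto e^{iy\Omega/3}$ is injective there and chordal distances between the nodes $z=e^{iy\Omega/3}$ are $2\sin(\Omega|y-y'|/6)$.

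Next I annihilate the estimated atoms. Write $\widehat z_j=e^{i\widehat y_j\Omega/3}$ and apply the difference filter with polynomial $(z-\widehat z_1)(z-\widehat z_2)$ to the four samples. This kills the $\widehat y$ contributions and leaves
\[
\sum_{j=1}^2 a_j\, z_j^{\,k}(z_j-\widehat z_1)(z_j-\widehat z_2),\qquad k=0,1.
\]
The filter coefficients sum to at most $(1+1)^2=4$ in absolute value, so these two combinations are bounded by $8\sigma$. Solving the resulting $2\times2$ Vandermonde system in $z_1,z_2$ gives, for each $j$,
\[
m\,|z_j-\widehat z_1|\,|z_j-\widehat z_2|\cdot|z_1-z_2|\lesssim \sigma ,
\]
using $|a_j|\geqs m$. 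Because $\beta\geqs1$, only $m$ enters the bound.

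Now suppose $\widehat\mu$ is not in the $\frac d2$-neighborhood of $\mu$. Then some $y_j$ has both $\widehat y$'s at distance at least $d/2$. Each such chord is at least $2\sin(\Omega d/12)$, and $|z_1-z_2|=2\sin(\Omega d/6)$. Combining these via $\sin(\Omega d/6)\geqs 2\sin(\Omega d/12)\cos(\Omega d/12)$ yields an inequality of the form $\sin^3(\Omega d/12)\cdot c\lesssim\sigma/m$.

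The main obstacle is tracking the constants so that this inequality becomes exactly $\sin(\Omega d/6)$ versus $2(\sigma/m)^{1/3}$, i.e.\ the threshold $\frac3\Omega\arcsin\bigl(2(\sigma/m)^{1/3}\bigr)$. I expect to use the sharper bound for $\prod|z_j-\widehat z_l|$, as in Liu's lemma on products of distances of Vandermonde nodes, rather than crude chord estimates. The choice of sampling step $\Omega/3$ should also be adjusted to produce the stated factor $3/\Omega$. The nodes $\widehat y$ could in principle coincide with or lie between the true ones; I handle this by taking the worst case in the product, which is a minimax over the placement of two points.

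\textbf{Lower bound.} I follow the standard moment-matching construction. Put $y_1=-d/2$ and $y_2=d/2$, taking the weighted case without loss of generality. Choose $\widehat\mu$ positive with two atoms, both on the same side of the midpoint; for example, one atom near the center of mass and one at the far point. Match the Taylor moments
\[
\sum_j a_j y_j^{\,l}=\sum_j \widehat a_j \widehat y_j^{\,l},\qquad l=0,1,2,
\]
so that the discrepancy measure $\mu-\widehat\mu$ has vanishing moments through order $2$. Then $|\mathscr{F}[\mu-\widehat\mu](\omega)|\leqs \frac{\Omega^3}{6}\sum|a||y|^3$ over all atoms, and I need this to be $<2\sigma$. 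This forces $d^3\Omega^3\cdot C<\sigma/m$. Since the construction is positive and does not lie in the $\frac d2$-neighborhood, it gives a bound of the form $d<\frac{c}{\Omega}(\sigma/m)^{1/3}$, and I tune $c$ to $2.25$.

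The parametrization of the positive solution of the moment equations (a Gauss-quadrature-type node placement) is routine. Finally, the noise is $\mathbf W=\frac12\mathscr{F}[\widehat\mu-\mu]$: it satisfies $|\mathbf W|<\sigma$, so $\mathbf Y=\mathscr F[\mu]+\mathbf W$ is an admissible measurement for $\mu$, and $\widehat\mu$ is $\sigma$-admissible for this $\mathbf Y$.
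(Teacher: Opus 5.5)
Your architecture matches the paper's. Annihilating $\widehat\mu$ and solving the $2\times 2$ Vandermonde system reproduces exactly the location--amplitude inequality $m\,|e^{iy_j\omega^\ast}-e^{iy_{j'}\omega^\ast}|\prod_{l}|e^{iy_j\omega^\ast}-e^{i\widehat y_l\omega^\ast}|<16\sigma$ that the paper invokes. Your reduction to ``some $y_j$ is at distance $\geqs d/2$ from both $\widehat y_l$'' is correct, and it makes any minimax over node placement unnecessary. The paper's lower-bound measure $\frac12 ms(s+1)\delta_{\frac d2-\frac ds}+\frac12 ms(s-1)\delta_{\frac d2+\frac ds}$, $s=\sqrt{\beta+1}$, is precisely a positive two-atom measure matching the moments of order $0,1,2$.

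As written, however, neither half reaches the stated constants. In the upper bound, sampling at $k\Omega/3$ uses only $[0,\Omega]$. Your own chord estimates then give $m\cdot 2\sin(\Omega d/6)\cdot 4\sin^2(\Omega d/12)<16\sigma$, hence (with $\sin\frac x2\geqs\frac12\sin x$) $\sin(\Omega d/6)<2(\sigma/m)^{1/3}$. That is the threshold $\frac6\Omega\arcsin\bigl(2(\sigma/m)^{1/3}\bigr)$, twice the claimed one; identifying this inequality with the $\frac3\Omega$ threshold is an arithmetic slip. No sharper product-of-distances lemma is needed; what is missing is the full band. Sample at $\omega_k=-\Omega+\frac{2k\Omega}{3}$, $k=0,\dots,3$; the paper equivalently shifts to $[0,2\Omega]$ and takes $\omega^\ast=\frac{2\Omega}3$. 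The amplitudes only pick up unimodular factors and all half-angles stay below $\pi/3$. The same crude estimates then give $\sin(\Omega d/3)<2(\sigma/m)^{1/3}$, exactly the stated bound.

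In the lower bound, the constant is not something you can ``tune''. It is fixed by the measure and the estimate, and your estimate is not uniform in $\beta$. With positions measured from the midpoint, $\sum|c_j||x_j|^3$ contains $\beta m(d/2)^3$ from the heavy atom. It also contains a comparable contribution from $\widehat\mu$, whose mass $(\beta+1)m$ must sit near $d/2$ to match the first moment. So for large $\beta$ you only get $d\Omega\lesssim(48\sigma/(\beta m))^{1/3}$.

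Recentering is legitimate, since vanishing moments through order two are translation invariant, but the atomwise remainder $|t|^3/6$ is still lossy. For the paper's measure, centering at $d/2$ gives $(d\Omega)^3<\frac{12s}{s+1}\frac\sigma m$, about $1.92$ rather than $2.25$ at $\beta=1$. The center of mass falls just short of $2.25$ around $\beta=10$. Only a $\beta$-dependent optimized center clears $2.25$, and only by a hair.

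The paper instead evaluates the discrepancy in closed form, $\babs{\mathscr F[\widehat\mu](\omega)-\mathscr F[\mu](\omega)}=2m\bigl((s^2\sin^2\frac\phi s-\sin^2\phi)^2+(s\sin\frac\phi s\cos\frac\phi s-\sin\phi\cos\phi)^2\bigr)^{1/2}$ with $\phi=\frac{d\omega}2$. It bounds this via $x-\frac{x^3}6\leqs\sin x\leqs x$ and $1-\frac{x^2}2\leqs\cos x\leqs1$ by $2m(\frac{d\Omega}2)^3\sqrt{\frac49+\frac1{144}(\frac\pi2)^4}$, which yields $2.25$ for every $\beta\geqs1$.

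You also need an explicit measure and a case check ($s>2$ versus $\sqrt2\leqs s\leqs2$) that it lies outside the $\frac d2$-neighborhood. Your suggestion of two atoms on the same side of the midpoint cannot work at $\beta=1$, where the mean of $\mu$ is the midpoint. Your closing step $\mathbf W=\frac12\mathscr F[\widehat\mu-\mu]$ is correct and makes explicit what the paper leaves implicit.
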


The preceding results show that, in the in-phase case, the influence of the amplitude ratio $\beta$ on the resolution limit is relatively mild. It does not change the resolution order and only slightly affects the constants in these bounds. We next present numerical experiments to validate these theoretical predictions.

%In particular, $\beta$ affects only the constants in these bounds and does not change the corresponding resolution exponents. 

\begin{remark}
A key difference between the results in this section and those in the next (which addresses small phase differences) is that the positivity of the sources can be used as prior information. However, the results show that this prior information does not enhance resolution.
\end{remark}

\subsection{Numerical Experiments on Source-Number Detection}\label{experiment:zero_theta_num}
\subsubsection{Algorithms Considered}
For source-number detection, we consider two algorithms: the $\ell_0$ method and the singular-value-thresholding (SVT) method. We briefly review these two methods below.

The $\ell_0$ method provides a simple numerical verification procedure for the source-number detection problem. Since validating the SRL for source-number detection amounts to checking whether the measurement $\mathbf{Y}$ admits a one-point $\sigma$-admissible explanation, the key question is whether the measurements generated by two point sources can also be explained by a single-point measure at noise level $\sigma$. For a candidate measure $\widehat{\mu}=\widehat{a}\delta_{\widehat{y}}$ and sampled frequencies $\{\omega_m\}_{m=1}^M\subset[-\Omega,\Omega]$, the feasibility to explain $\mathbf{Y}$ requires
\begin{align*}
\babs{\widehat{a}e^{i\widehat{y}\omega_m}-\mathbf{Y}(\omega_m)}<\sigma,
    \qquad m=1,\ldots,M,
\end{align*}
or equivalently,
\begin{align*}
    \babs{\widehat{a}-\mathbf{Y}(\omega_m)e^{-i\widehat{y}\omega_m}}<\sigma,
    \qquad m=1,\ldots,M.
\end{align*}
For a fixed $\widehat{y}$, define $c_m:=\mathbf{Y}(\omega_m)e^{-i\widehat{y}\omega_m}$, $m=1,\ldots,M$.
Then feasibility is equivalent to the existence of some $\widehat{a}\in\mathbb{C}$ such that $\babs{\widehat{a}-c_m}<\sigma,
    m=1,\ldots,M$. Thus, for each $m$, one obtains a disk $D(c_m,\sigma):=\{z\in\mathbb{C}:|z-c_m|<\sigma\}$. A feasible one-point model exists if and only if the intersection $\cap_{m=1}^M D(c_m,\sigma)$ is nonempty. Since $\mathbb{C}\cong\mathbb{R}^2$ and each disk is convex, Helly's theorem implies that $\bigcap_{m=1}^{M} D(c_m,\sigma)\neq\varnothing$
if and only if
\begin{align*}
    D(c_i,\sigma)\cap D(c_j,\sigma)\cap D(c_k,\sigma)\neq\varnothing,
    \qquad \forall\, 1\le i<j<k\le M.
\end{align*}
This yields a simple numerical certificate for the lower bound: for each candidate $\widehat{y}$, we check whether every triple of disks has a nonempty intersection. Whenever this condition is satisfied, there exists a one-point $\sigma$-admissible measure, and hence the corresponding two-source configuration cannot be declared reliably distinguishable at noise level $\sigma$.

The SVT method, proposed in \cite{liu2024mathematical}, is a simple source-number detection algorithm based on singular values, and its effectiveness for the two-point number detection problem has been established rigorously therein. Specifically, it first constructs a \(2\times 2\) Hankel matrix from the three samples $\mathbf{Y}(-\Omega)$, $\mathbf{Y}(0)$, and $\mathbf{Y}(\Omega)$. It then computes the two singular values of this matrix in decreasing order. The decision is made by thresholding the smaller singular value: if $\widehat{\sigma}_2\geqs 2\sigma$, the algorithm declares that two sources are present; otherwise, it declares a single source.

\subsubsection{Comparison With Theoretical Predictions} \label{experiment:zero_theta}
Based on the two algorithms considered for source-number detection, we conduct numerical experiments to determine their empirical resolution boundaries and compare them with the theoretical SRUs and SRLs derived in Section~\ref{subsection:boundary_zero_theta}. This comparison provides a numerical validation of the proposed bounds.

Throughout the experiments, we set $\Omega=1$ and use ten Fourier measurements, so that $\mathbf{Y}\in\mathbb{C}^{10\times 1}$, which is sufficient for the two-source model considered here. 
For each parameter setting $(d,\sigma)$, with $d$ denoting the minimum separation between the two point sources, we conduct 10,000 Monte Carlo trials and declare the setting successful only when the algorithm succeeds in all trials. 
Successful and unsuccessful settings are marked in green and red, respectively. 
For source-number detection, success means correctly identifying the presence of two distinct sources. 
This all-success criterion provides a stringent assessment of algorithmic stability.

Here, we introduce the super-resolution factor (SRF) as
\[
\mathrm{SRF}: = \frac{d_{\mathrm{RL}}}{d}=\frac{\pi}{d\Omega},
\]
where $d_{\mathrm{RL}}$ is the Rayleigh length in (\ref{equ:rayleighlength1}). For each algorithm, the numerical validation is carried out in two complementary ways. First, we examine the slope of the success--failure boundary in the $\log(\mathrm{SRF})$--$\log(\sigma)$ plane under different values of $\beta$ in order to verify the predicted scaling order of the resolution bounds. Given the sharpness of our estimated SRUs and SRLs, we adopt the lower bound (SRLs) as a reference throughout all numerical discussions in this paper. Second, for a fixed $\beta$, we plot the empirical success--failure phase transition diagram to compare the observed transition boundary against the theoretical lower bound.

\begin{figure}[bp]
\centering
\subfloat[$\ell_0$ \centering]{\includegraphics[height=3.45cm]{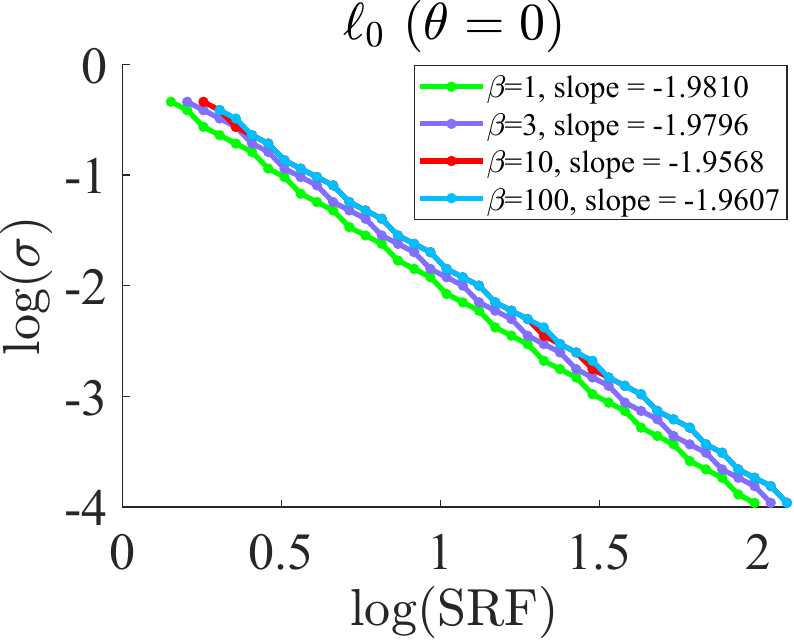}\label{fig:slope_num_LO_theta0}}   \hfil
\subfloat[$\ell_0$ \centering]{\includegraphics[height=3.45cm]{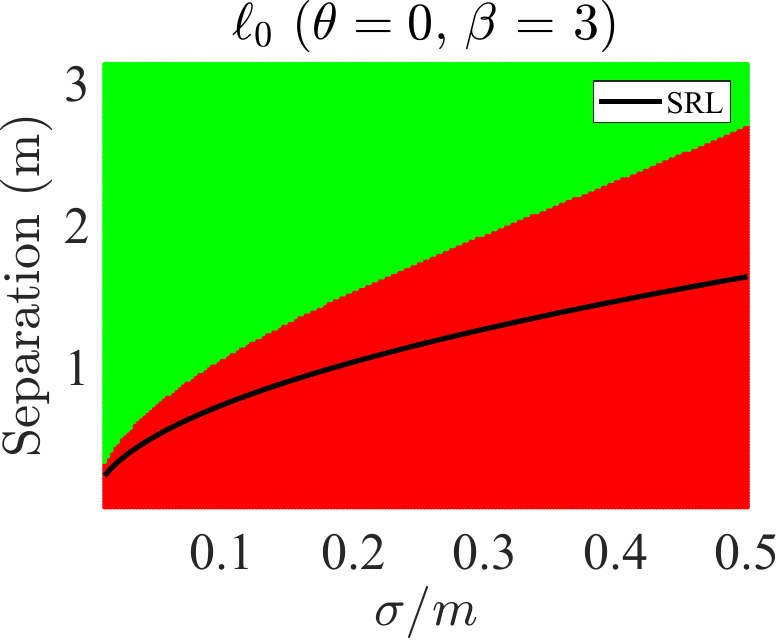}\label{fig:SRL_num_LO_theta0}}  \\

\subfloat[SVT \centering]{\includegraphics[height=3.45cm]{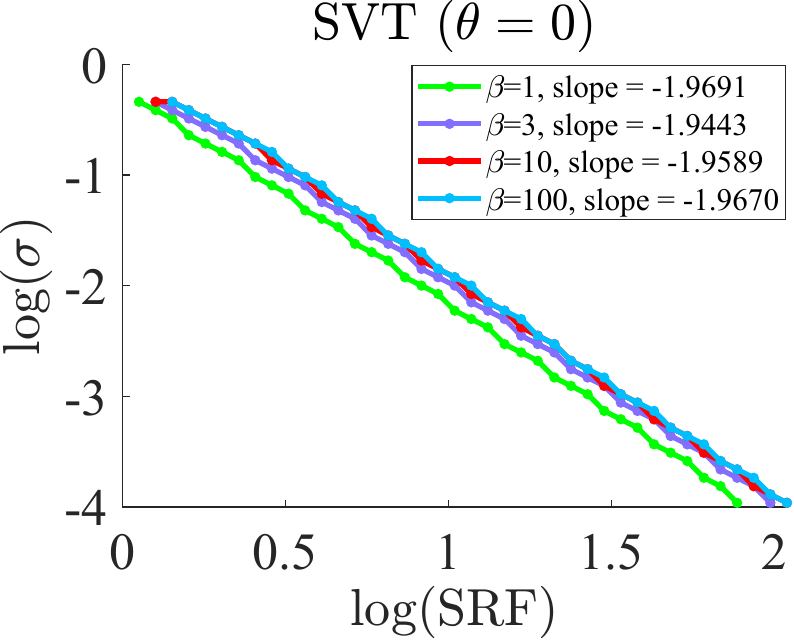}\label{fig:slope_num_MUSIC_theta0}} \hfil 
\subfloat[SVT \centering]{\includegraphics[height=3.45cm]{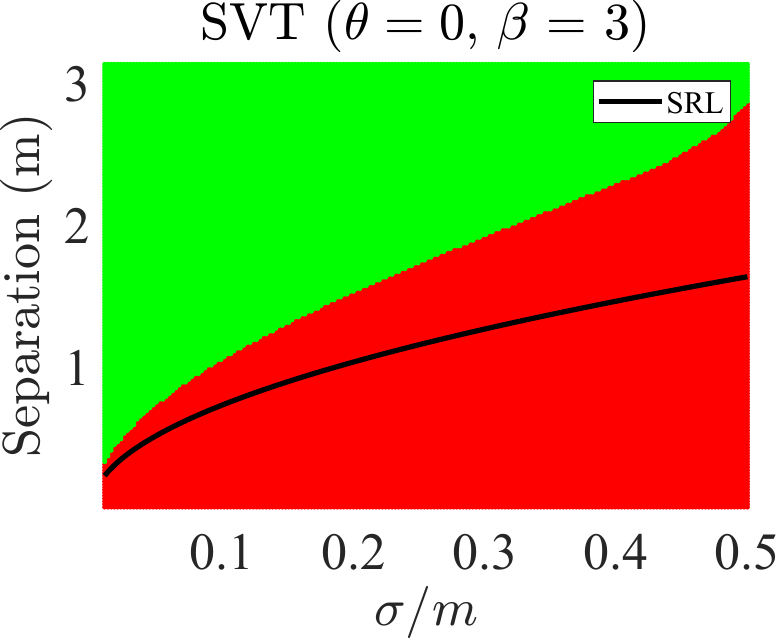}\label{fig:SRL_num_MUSIC_theta0}}  
\caption{ Stable performance of each algorithm for source-number detection under the in-phase regime.  } 
\label{fig:in_phase_num_case}
\end{figure}
For the source-number detection problem, the SRL in \eqref{equ:num_positive_lb_cond} implies
\begin{align*}
\log(\sigma)> -2\,\log(\mathrm{SRF})+\log(m)
-\log\pare{\frac{\beta+1}{\beta}}
+2\,\log\pare{\frac{\pi}{2}},
\end{align*}
which yields a linear lower-bound boundary with slope $-2$ in the 
$\log(\mathrm{SRF})$--$\log(\sigma)$ plane for fixed $m$. 
Thus, an algorithm whose empirical phase-transition boundary has slope close to $-2$ attains the optimal resolution order for source-number detection in this regime. 
As shown in Figs.~\ref{fig:slope_num_LO_theta0} and \ref{fig:slope_num_MUSIC_theta0}, the empirical phase-transition curves of both $\ell_0$ and SVT are nearly linear for $\beta=1,3,10,100$, with fitted slopes close to $-2$. 
This indicates that both algorithms achieve the optimal resolution order predicted by the SRL. 
Moreover, Figs.~\ref{fig:SRL_num_LO_theta0} and \ref{fig:SRL_num_MUSIC_theta0} provide a more refined comparison for $\beta=3$. The empirical resolution boundaries of both algorithms stay above, but remain close to, the theoretical SRL curve in \eqref{equ:num_positive_lb_cond}. 
Therefore, in the in-phase source-number detection setting, both $\ell_0$ and SVT not only exhibit the optimal resolution order, but also approach the best achievable resolution.

% \begin{figure*}[bp]
% \centering
% \subfloat[MUSIC \centering]{\includegraphics[height=3.45cm]{fig/slope_locs_MUSIC_theta0.pdf}\label{fig:slope_locs_MUSIC_theta0}}  \hfil
% \subfloat[ESPRIT \centering]{\includegraphics[height=3.45cm]{fig/slope_locs_ESPRIT_theta0.pdf}\label{fig:slope_locs_ESPRIT_theta0}}  \hfil
% \subfloat[ML \centering]{\includegraphics[height=3.45cm]{fig/slope_locs_ML_theta0.pdf}\label{fig:slope_locs_ML_theta0}}  \hfil
% \subfloat[CVX \centering]{\includegraphics[height=3.45cm]{fig/slope_locs_CVX_theta0.pdf}\label{fig:slope_locs_CVX_theta0}}  \\

% \subfloat[MUSIC \centering]{\includegraphics[height=3.45cm]{fig/locs_SRL_beta3_theta0_MUSIC.pdf}\label{fig:SRL_locs_MUSIC_theta0}}   \hfil
% \subfloat[ESPRIT \centering]{\includegraphics[height=3.45cm]{fig/locs_SRL_beta3_theta0_ESPRIT.pdf}\label{fig:SRL_locs_ESPRIT_theta0}}   \hfil
% \subfloat[ML \centering]{\includegraphics[height=3.45cm]{fig/locs_SRL_beta3_theta0_ML.pdf}\label{fig:SRL_locs_ML_theta0}}\hfil
% \subfloat[CVX \centering]{\includegraphics[height=3.45cm]{fig/locs_beta3_theta0_CVX.pdf}\label{fig:SRL_locs_CVX_theta0}} 

% \caption{ Stable performance of each algorithm for the location estimation problem under the in-phase amplitude case.  } 
% \label{fig:in_phase_locs_case}
% \end{figure*}

\begin{figure*}[bp]
\centering
\subfloat[MUSIC \centering]{\includegraphics[height=3.45cm]{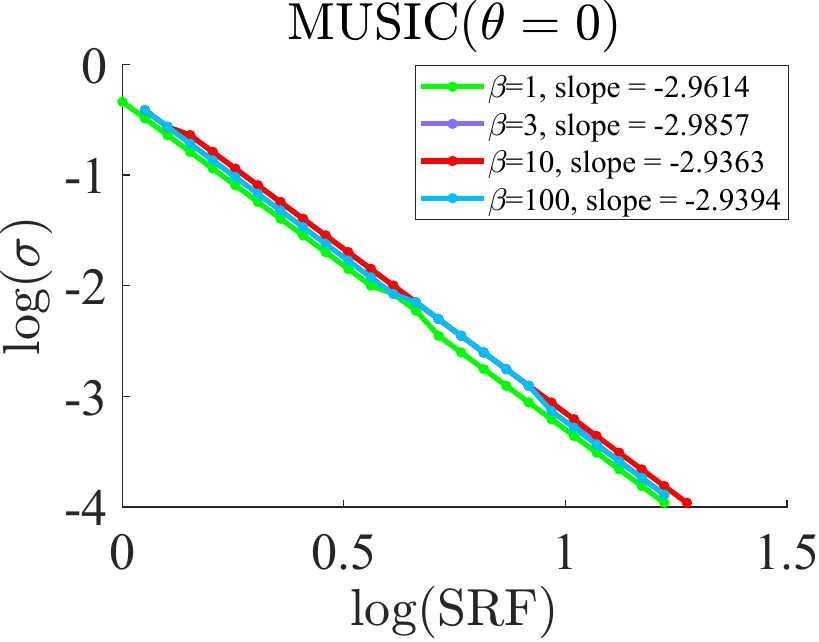}\label{fig:slope_locs_MUSIC_theta0}}  \hfil
\subfloat[ESPRIT \centering]{\includegraphics[height=3.45cm]{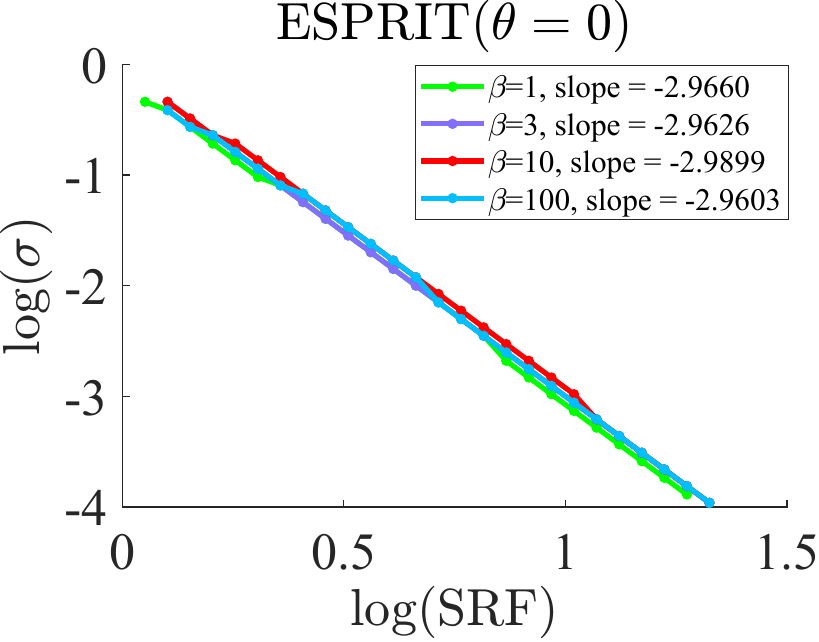}\label{fig:slope_locs_ESPRIT_theta0}}  \hfil
\subfloat[ML \centering]{\includegraphics[height=3.45cm]{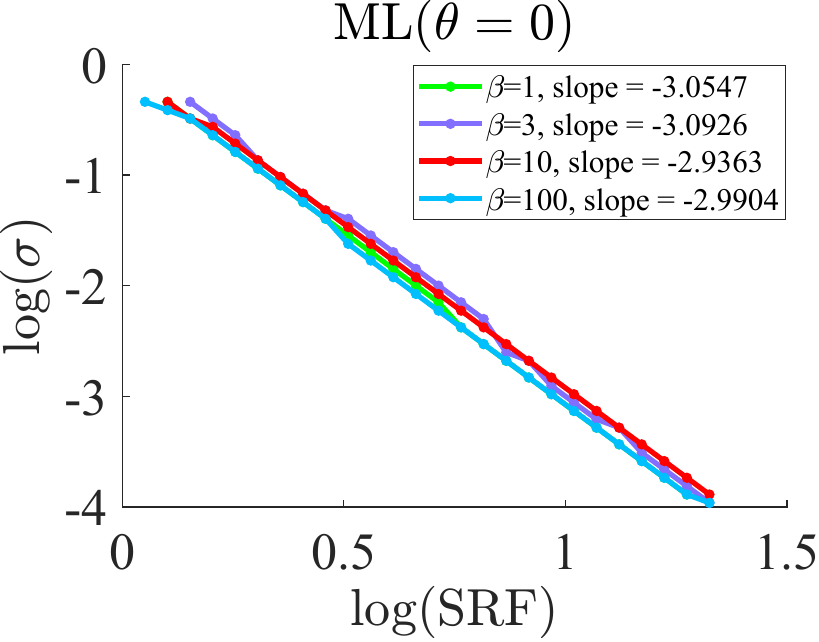}\label{fig:slope_locs_ML_theta0}}  \hfil
\subfloat[CVX \centering]{\includegraphics[height=3.45cm]{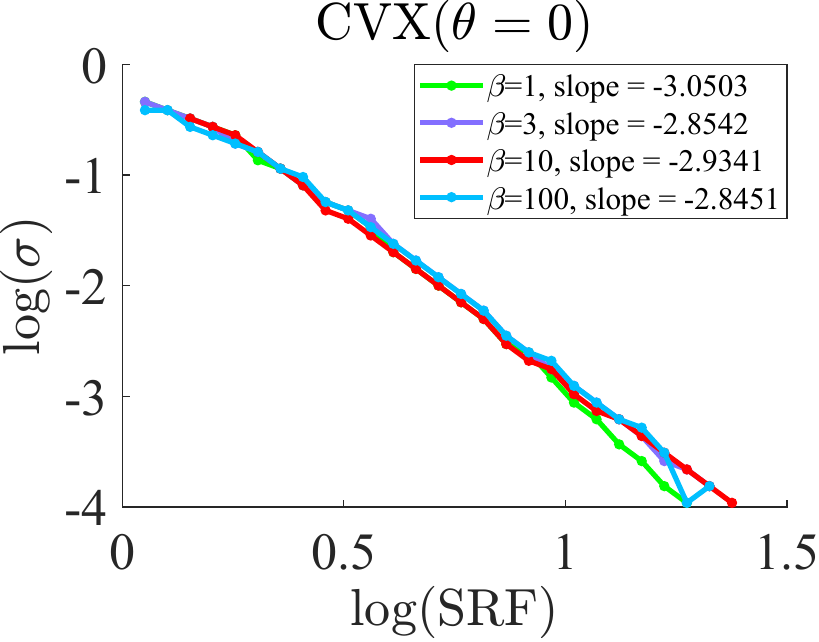}\label{fig:slope_locs_CVX_theta0}} \\

\subfloat[MUSIC \centering]{\includegraphics[height=3.45cm]{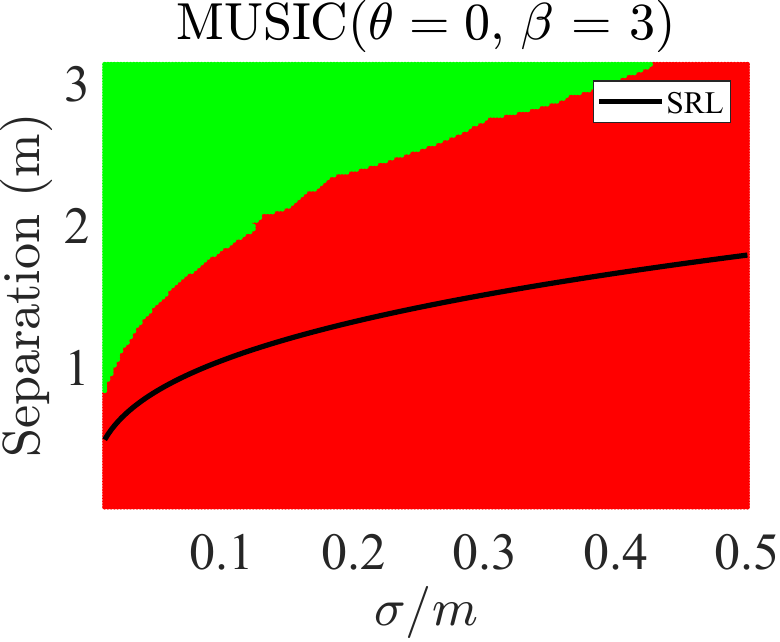}\label{fig:SRL_locs_MUSIC_theta0}}   \hfil
\subfloat[ESPRIT \centering]{\includegraphics[height=3.45cm]{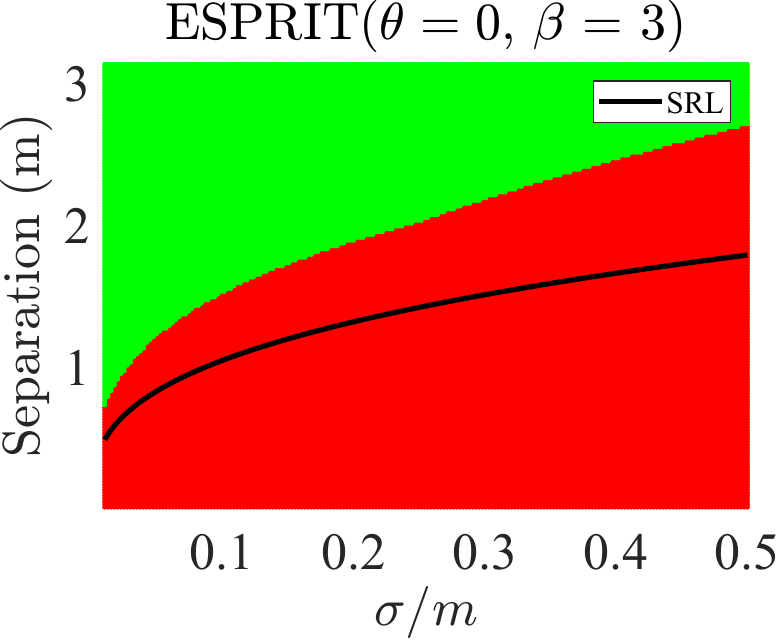}\label{fig:SRL_locs_ESPRIT_theta0}}   \hfil
\subfloat[ML \centering]{\includegraphics[height=3.45cm]{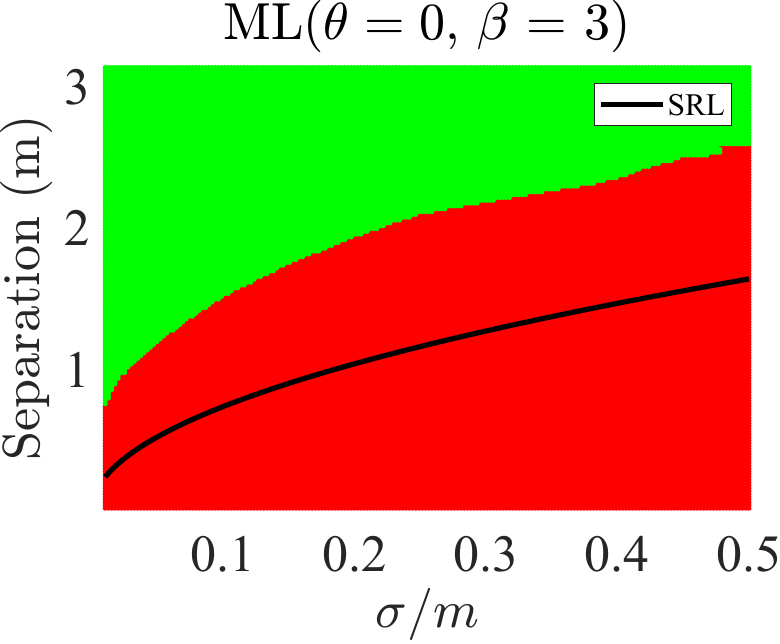}\label{fig:SRL_locs_ML_theta0}}\hfil
\subfloat[CVX \centering]{\includegraphics[height=3.45cm]{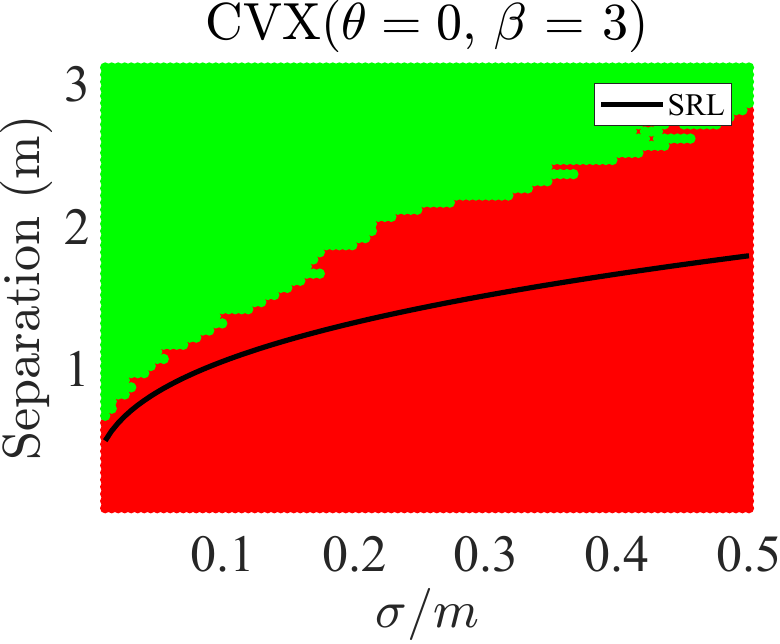}\label{fig:SRL_locs_CVX_theta0}} 

\caption{ Stable performance of each algorithm for location estimation under the in-phase regime.  } 
\label{fig:in_phase_locs_case}
\end{figure*}

\subsection{Numerical Experiments on Location Estimation} \label{experiment:zero_theta_locs}
\subsubsection{Algorithms Considered}
For location estimation, we consider four algorithms: MUSIC, ESPRIT, ML, and CVX. We briefly describe them below.

The MUSIC and ESPRIT methods are representative subspace-based algorithms. In our implementation, when only a single snapshot is available, a Hankel matrix is first constructed from the measurement to form an equivalent multi-snapshot data matrix. MUSIC estimates source locations by exploiting the orthogonality between the signal subspace and the noise subspace \(V\), with pseudospectrum
\[
P_{\mathrm{MUSIC}}(y)
=
\frac{1}{|a(y)^H V V^H a(y)|},
\]
where \(a(y)\) denotes the steering vector at grid point \(y\); the source locations are then obtained from the dominant peaks of this pseudospectrum. ESPRIT, by contrast, estimates source locations through the rotational invariance of two overlapping subarrays. In our implementation, a unitary ESPRIT scheme \cite{haardt2002unitary} is employed. Specifically, forward-backward data augmentation is first applied to construct a centro-Hermitian matrix, which is then transformed into an equivalent real-valued matrix through a unitary transformation. The signal subspace \(U\) is subsequently extracted, and two shifted subarrays \(U_1\) and \(U_2\) are formed so that
\[
U_2 = U_1 \Phi.
\]
The source locations are finally recovered from the eigenvalues of \(\Phi\).

The ML algorithm estimates source locations by searching for the candidate steering subspace that best fits the observation. For each candidate grid point \(y\), one constructs the projection matrix
\[
P_{a(y)}=\frac{a(y)a(y)^H}{a(y)^H a(y)},
\]
where \(a(y)\) is the steering vector at grid point \(y\), and evaluates the projected signal energy through \(\operatorname{tr}\pare{P_{a(y)}R_{\mathrm{sig}}}\), where \(R_{\mathrm{sig}}\) is the sample covariance matrix. In this sense, the ML principle favors the candidate support whose associated steering subspace provides the best fit to the data. In our implementation, this principle is specialized to the two-source setting by exhaustively searching over all pairs \((y_j,y_p)\) of candidate grid points. For each candidate pair, the corresponding complex amplitudes are estimated by least squares, and the pair yielding the smallest residual is selected as the ML estimate.

The CVX algorithm formulates location estimation as a sparse recovery problem over a discrete grid. Specifically, by replacing the nonconvex \(\ell_0\)-minimization with its \(\ell_1\)-relaxation, it solves the constrained convex optimization problem
\[
\min_{\mathbf{x}}\|\mathbf{x}\|_1
\quad
\text{subject to}
\quad
\|\mathbf{Y}-\mathbf{A}\mathbf{x}\|_2 \leq \varepsilon,
\]
where \(\mathbf{A}\) is the sensing dictionary, \(\mathbf{x}\) is the source amplitude vector, and \(\varepsilon\) is a residual tolerance chosen according to the noise level. In our implementation, the resulting convex program is solved using the CVX toolbox in MATLAB.
% , with
% \[
% \varepsilon = c_1\sqrt{M}\sigma,
% \]
% where \(c_1\) is an adaptive coefficient, and \(M\) denotes the number of measurement samples. 
The final CVX-based location estimates are obtained from the dominant peaks of the recovered magnitude spectrum.

\subsubsection{Comparison With Theoretical Predictions} In this subsection, we follow the same experimental procedure as in Section~\ref{experiment:zero_theta_num}. The only additional difference lies in the success criterion for location estimation. A trial is declared successful if the two estimated locations can be matched to the two true source locations and each estimate lies within the \(d/2\)-neighborhood of its corresponding source. Due to the high computational cost of the CVX algorithm, we use \(1000\) Monte Carlo trials and a coarser parameter grid in the resolution--\(\sigma\) plane for CVX experiments.

For the location-estimation problem, the SRL in \eqref{equ:condition_locs_positive} yields
\begin{align*}
    \log(\sigma)>-3\,\log(\mathrm{SRF})+\log(m)
    +3\,\log\pare{\frac{\pi}{2.25}},
\end{align*}
which gives a linear lower-bound boundary with slope $-3$ in the 
$\log(\mathrm{SRF})$--$\log(\sigma)$ plane.  
As shown in Figs.~\ref{fig:slope_locs_MUSIC_theta0}--\ref{fig:slope_locs_CVX_theta0}, the empirical phase-transition curves of MUSIC, ESPRIT, ML, and CVX are approximately linear, and their fitted slopes are close to $-3$ for different values of $\beta$. This indicates that all four algorithms achieve the optimal resolution order predicted by the SRL. The corresponding comparisons in Figs.~\ref{fig:SRL_locs_MUSIC_theta0}--\ref{fig:SRL_locs_CVX_theta0} further show that 
although the empirical resolution boundaries of all algorithms remain above the theoretical SRL, their distances from the SRL differ noticeably. 
Among the tested algorithms, ESPRIT gives the boundary closest to the SRL and hence exhibits the best performance, whereas MUSIC shows the weakest performance in this comparison. 
Hence, beyond verifying the optimal scaling law, these red--green phase-transition diagrams quantify how close each practical algorithm is to the fundamental resolution limit, thereby providing a criterion for selecting the most suitable algorithm. 
%Together with the source-number detection results, the in-phase experiments validate that the optimal resolution orders are $\sigma\propto\mathrm{SRF}^{-2}$ for source-number detection and $\sigma\propto\mathrm{SRF}^{-3}$ for location estimation.

\section{Near-Endpoint Phase Regime $\left(\babs{\theta}_{\min}\asymp\pare{\frac{\sigma}{m}}^\frac12\right)$}
\label{section:small_theta}

We now move from the in-phase case to an out-of-phase regime in which the relative phase is nonzero but remains close to one of the two endpoints, $0$ or $\pi$.

\subsection{Theoretical Bounds}\label{subsection:boundary_small_theta}

In this subsection, we derive the corresponding SRUs and SRLs for source-number detection and location estimation, thereby quantifying the effect of a near-endpoint phase on the resolution limits. 
Proofs of these theorems are provided in Appendix~\ref{proof:small_theta}.

\begin{figure*}[tbp]
\centering
\subfloat[$\ell_0$ \centering]{\includegraphics[height=3.45cm]{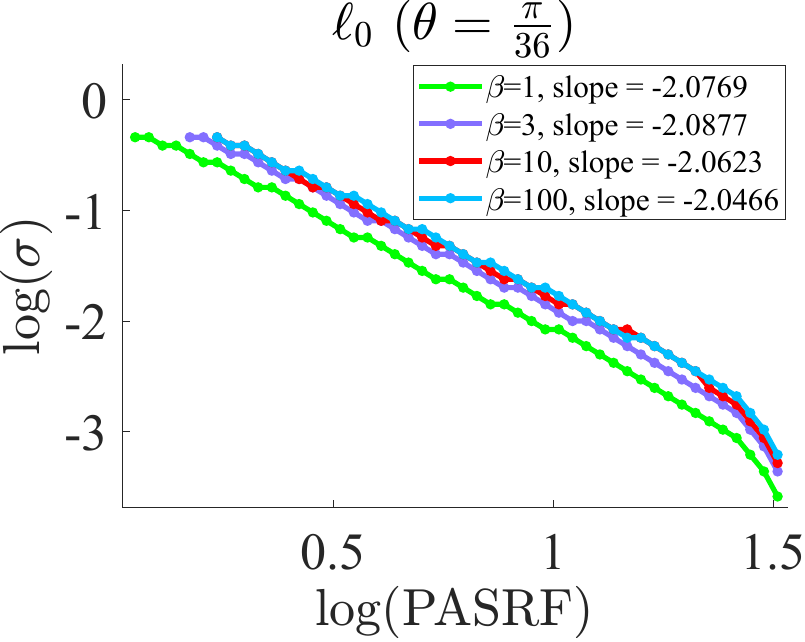}\label{fig:slope_num_L0_theta5}}   \hfil
\subfloat[SVT \centering]{\includegraphics[height=3.45cm]{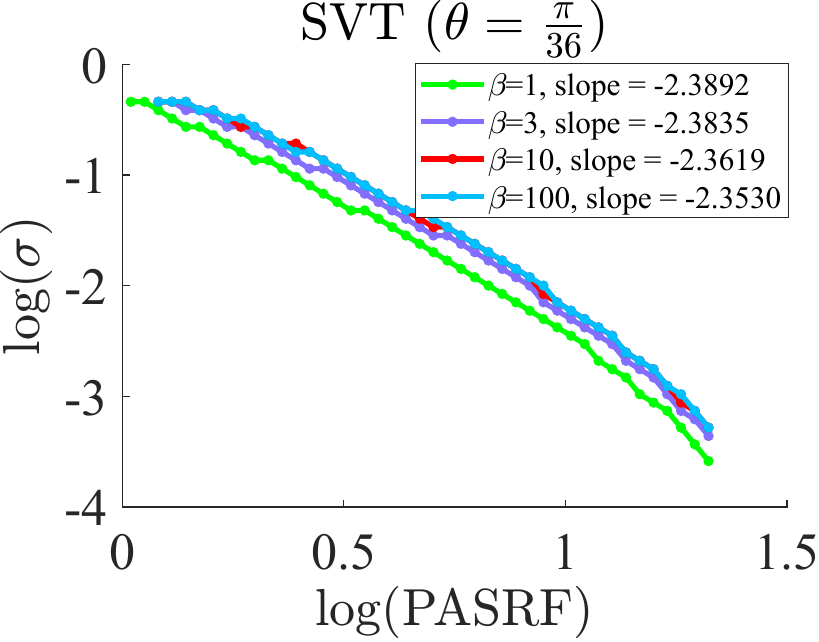}\label{fig:slope_num_Threshold_theta5}} \hfil
\subfloat[$\ell_0$ \centering]{\includegraphics[height=3.45cm]{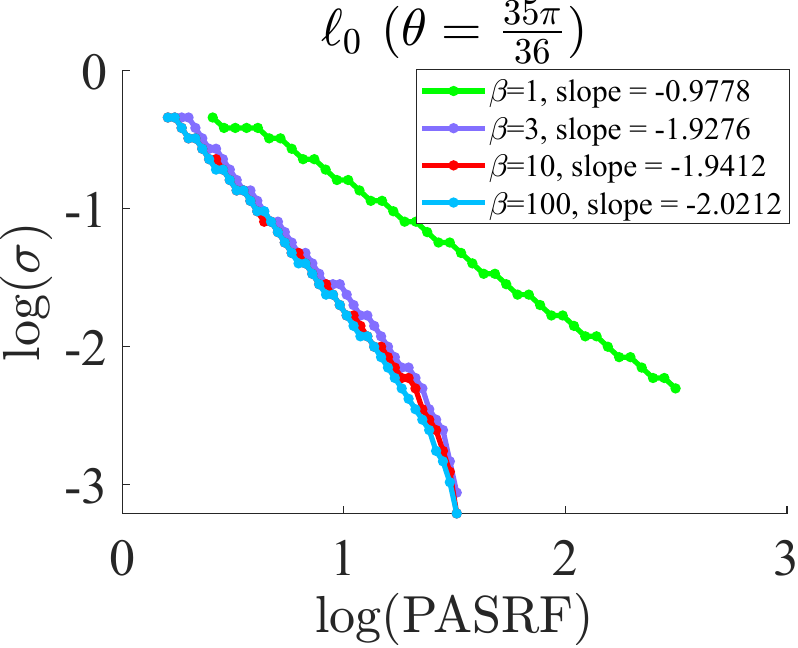}\label{fig:slope_num_L0_theta175}}   \hfil
\subfloat[SVT \centering]{\includegraphics[height=3.45cm]{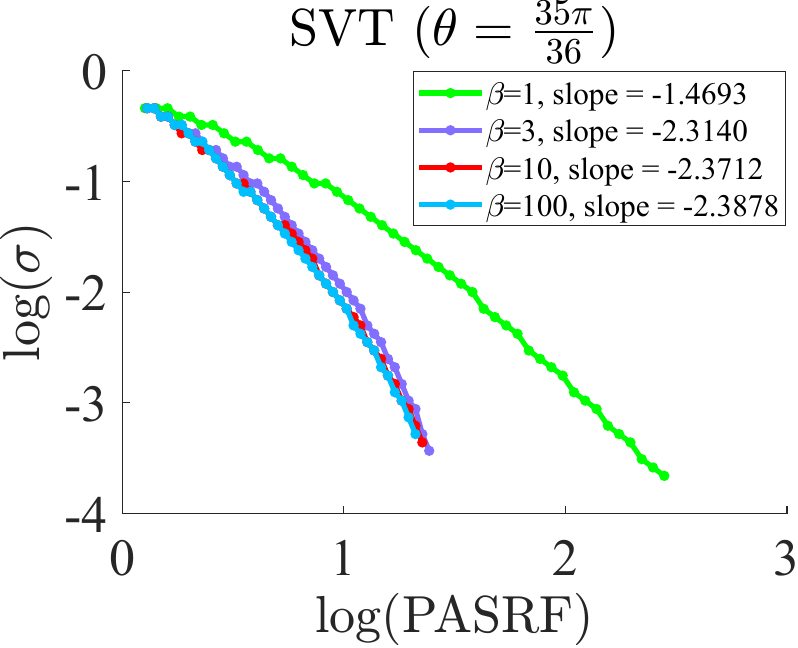}\label{fig:slope_num_Threshold_theta175}} \\

\subfloat[$\ell_0$ \centering]{\includegraphics[height=3.45cm]{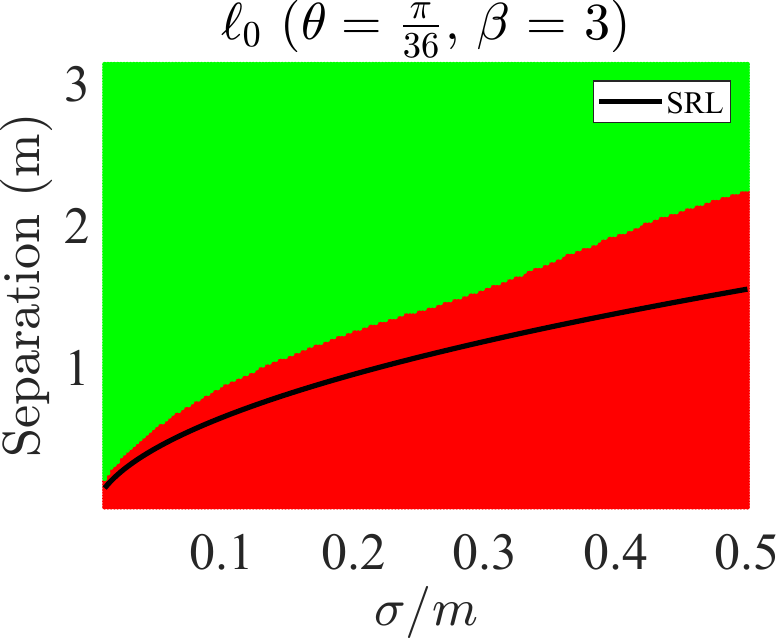}\label{fig:num_SRL_beta3_theta5_L0_fast}}  \hfil
\subfloat[SVT \centering]{\includegraphics[height=3.45cm]{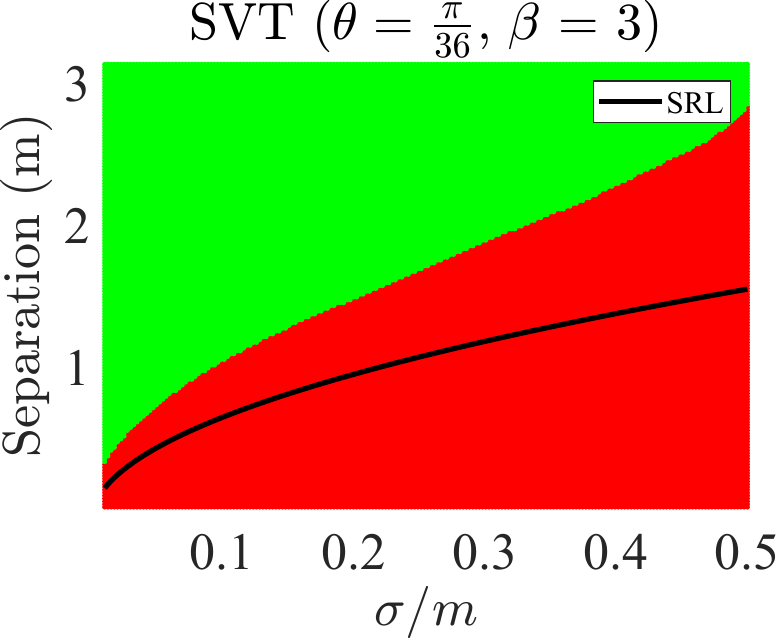}\label{fig:num_SRL_beta3_theta5_MUSIC_fast}}  \hfil
\subfloat[$\ell_0$ \centering]{\includegraphics[height=3.45cm]{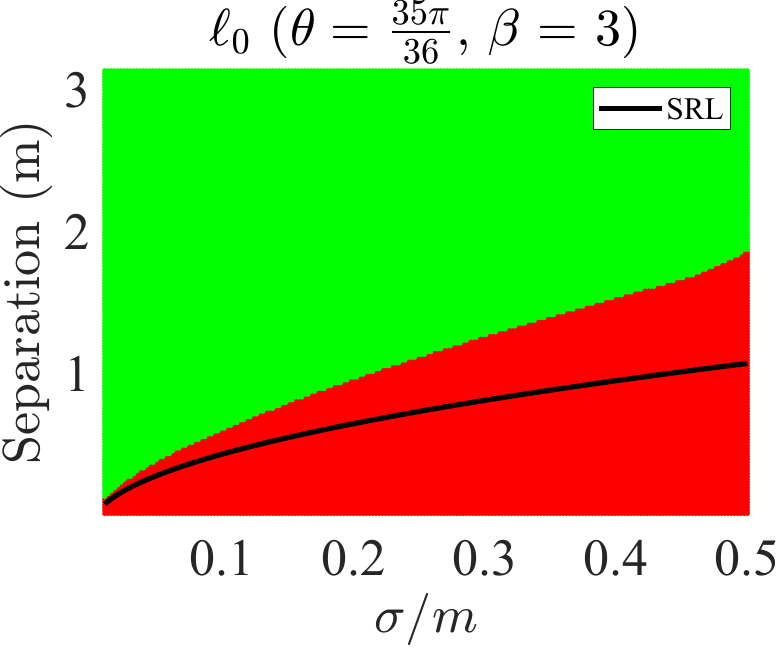}\label{fig:num_SRL_beta3_theta175_L0_only}}  \hfil
\subfloat[SVT \centering]{\includegraphics[height=3.45cm]{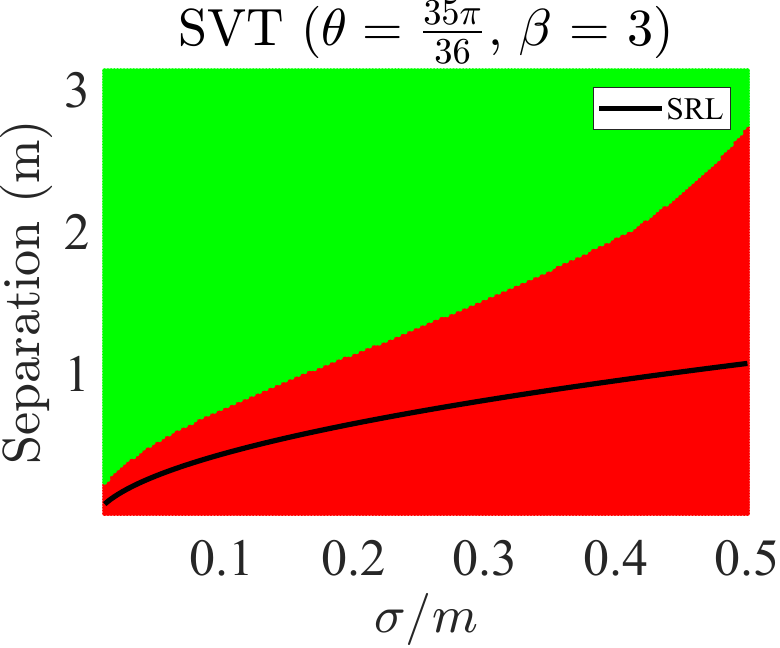}\label{fig:num_SRL_beta3_theta175_MUSIC_only}} 
\caption{ Stable performance of each algorithm for source-number detection under the near-endpoint phase regime.  } 
\label{fig:small_theta_num_case}
\end{figure*}

\begin{thm} \label{thm:twopointresolution_complex_number_UB}
Let $\mathbf{Y}$ be generated by a complex measure
$\mu=me^{i\theta_1} \delta_{y_1}+\beta m e^{i\theta_2} \delta_{y_2}$ with 
$y_1, y_2\in B_{\frac{\pi}{2\Omega}}(0)$, $\theta_1, \theta_2\in \pare{-\pi,\pi}$, $\beta \geqs 1$, and $m>0$.
Define the relative phase $\theta:=\theta_1-\theta_2\in\pare{-\pi,\pi}$ and set $\babs{\theta}_{\min} := \min\{\babs \theta,\pi-\babs \theta\}$.
When $\babs{\theta}_{\min} \leqs\frac{4\pi}{3}\sqrt{1+\frac{1.25}{\beta}}\pare{\frac{\sigma}{m}}^\frac{1}{2}$, if 
\begin{align}
\label{equ:sepacondinumbercomplex_smalltheta}
  \babs{y_1-y_2}\geqs \frac{2\pi\pare{\pare{2+\frac{2.5}{\beta}}\frac{\sigma}{m}}^\frac{1}{2}-  \babs{\theta}_{\min}}{\Omega}, 
\end{align}
then no $\sigma$-admissible measure for $\mathbf{Y}$ can be supported on fewer than two points. Moreover, if
\begin{align}
\label{equ:separation_condition_num_complex_LB_small}
\begin{cases}
     \displaystyle\babs{y_1-y_2}  <  \frac{2\pare{ \frac{\beta+1}{\beta}\frac{\sigma}{m} }^{\frac{1}{2}}-\babs{\theta}_{\min}}{\Omega}, &  \displaystyle\babs{\theta} \asymp \pare{ \frac{\sigma}{m} }^{\frac{1}{2}};\\
     \displaystyle \babs{y_1-y_2}  <  \frac{2\pare{ \frac{\beta-1}{\beta}\frac{\sigma}{m} }^{\frac{1}{2}}-\babs{\theta}_{\min}}{\Omega}, & \displaystyle \pi - \babs{\theta} \asymp \pare{ \frac{\sigma}{m} }^{\frac{1}{2}},
\end{cases}
\end{align}
then there exists a $\sigma$-admissible measure $\widehat \mu$ for $\mathbf{Y}$ supported on a single point.
\end{thm}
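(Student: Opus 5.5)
Both halves reduce to controlling how well the two-point Fourier data can be matched by a single exponential. Write $a_1=me^{i\theta_1}$, $a_2=\beta m e^{i\theta_2}$ and $d=|y_1-y_2|$. By translation and a global phase rotation we may take $y_1=-d/2$, $y_2=d/2$ and $\theta_2=0$, so that
\[
\mathscr{F}[\mu](\omega)=me^{i\theta}e^{-id\omega/2}+\beta m e^{id\omega/2}.
\]

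\emph{Upper bound.} Suppose $\widehat\mu=\widehat a\delta_{\widehat y}$ is $\sigma$-admissible. Then $|\mathscr{F}[\mu](\omega)-\widehat a e^{i\widehat y\omega}|<2\sigma$ for all $|\omega|\leqs\Omega$.

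1. Following the SVT/Hankel idea used for Theorem~\ref{thm:numberresolutionpositive}, pick three equispaced frequencies $\omega_0-h,\omega_0,\omega_0+h$ in $[-\Omega,\Omega]$.
2. Use the annihilating combination of a single exponential, $f(\omega_0-h)f(\omega_0+h)-f(\omega_0)^2$, i.e.\ the $2\times2$ Hankel determinant. It vanishes on one-point data.
3. Evaluated on the two-point data, this determinant equals $a_1a_2e^{i(y_1+y_2)\omega_0}\bigl(e^{i(y_1-y_2)h}+e^{-i(y_1-y_2)h}-2\bigr)$ up to the phase factors. In modulus this is $4\beta m^2\sin^2(dh/2)$.
4. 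A perturbation bound gives that the determinant of the $\widehat\mu$-data differs from that of the $\mu$-data by at most about $C(\beta+1)m\sigma$. Equating yields $\sin^2(dh/2)\lesssim(\beta+1)\sigma/(\beta m)$, which reproduces the in-phase bound.

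The phase enters because this step loses the phase information. To recover it, I would instead fit the ratio $f(\omega)/f(\omega')$ directly. For the two-point data, the phase of $f(\omega)$ winds with the argument of $1+\beta^{-1}e^{i(\theta-d\omega)}$, so across $\omega\in[-\Omega,\Omega]$ the relevant angle $\theta-d\omega$ sweeps an interval of length $2d\Omega$.

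The key step is a \emph{linearization of the phase}. A single exponential has exactly linear phase in $\omega$ and constant modulus. The two-point data instead has modulus $m|\beta+e^{i(\theta-d\omega)}|$, which varies by an amount governed by $1-\cos(\theta-d\omega)$ over the band. Its deviation from constant modulus must be below $2\sigma$, since $\bigl||f|-|\widehat a|\bigr|\leqs|f-\widehat a e^{i\widehat y\omega}|$. Choosing $\omega$ at the endpoints and at the point where $\theta-d\omega$ is nearest to $0$ (or to $\pi$), the modulus oscillation is at least
\[
c\,\beta m\,\bigl((d\Omega+|\theta|_{\min})^2-|\theta|_{\min}^2\bigr)/\beta.
\]
More simply, once $d\Omega\geqs|\theta|_{\min}$ the sweep reaches an angle of size $d\Omega+|\theta|_{\min}$ on one side. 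Applying $1-\cos x\geqs 2x^2/\pi^2$ then gives
\[
m\,(d\Omega+|\theta|_{\min})^2/\pi^2\lesssim(2+2.5/\beta)\,\sigma.
\]
This yields \eqref{equ:sepacondinumbercomplex_smalltheta}, with the constant $2\pi$ coming from $1-\cos x\geqs 2x^2/\pi^2$. The main obstacle is handling the modulus and phase deviations jointly; the factor $1.25/\beta$ should come from the cross term bounding $|\widehat a|-\beta m$. The hypothesis on $|\theta|_{\min}$ ensures the right side of \eqref{equ:sepacondinumbercomplex_smalltheta} is positive and that the sweep stays in $[-\pi/2,\pi/2]$, the range where the quadratic bound holds.

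\emph{Lower bound.} Here I construct an explicit one-point measure. Take $\widehat\mu=\widehat a\delta_{\widehat y}$ with $\widehat y$ the amplitude-weighted center and $\widehat a$ chosen to match $f(0)$. Expand
\[
f(\omega)e^{-i\widehat y\omega}-\widehat a
\]
to second order in $d\omega$. The residual is of order $m\frac{\beta}{\beta\pm1}(d\Omega+|\theta|_{\min})^2\cdot\frac{\beta\pm1}{4\beta^2}$-type terms, where the $+$ case arises for $\theta$ near $0$ and the $-$ case (with $|a_1+a_2|\approx(\beta-1)m$) for $\theta$ near $\pi$. Requiring the residual to be below $\sigma$ and setting the noise $\mathbf{W}=\widehat\mu$-data $-\,\mathscr{F}[\mu]$ gives $d\Omega+|\theta|_{\min}<2\bigl(\tfrac{\beta\pm1}{\beta}\tfrac{\sigma}{m}\bigr)^{1/2}$, which is exactly \eqref{equ:separation_condition_num_complex_LB_small}. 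Making this rigorous requires the same modulus/phase estimate as in the upper bound, reversed and with the constant computed exactly rather than bounded.
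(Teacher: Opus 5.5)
Your upper bound is sound and is essentially the paper's argument, but the lower-bound part has a genuine gap.

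\textbf{Upper bound.} Once the Hankel-determinant detour is dropped, your argument rests on the same idea as the paper's. A one-point measure has a Fourier transform of constant modulus, so $\babs{\babs{\mathscr{F}[\mu](\omega)}-\babs{\widehat a}}<2\sigma$ on the band, while $\babs{\mathscr{F}[\mu](\omega)}^2=m^2\pare{\beta^2+1+2\beta\cos(\theta-d\omega)}$ oscillates. The paper squares the one-point identity at $\omega=0$ and at a single $\omega^\ast$, which gives $\babs{\sin\frac{d\omega^\ast}{2}\,\sin\pare{\frac{d\omega^\ast}{2}+\babs{\theta}_{\min}}}<\pare{2+\frac{2.5}{\beta}}\frac{\sigma}{m}$. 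You instead pair the frequency where $\theta-d\omega$ crosses $0$ (or $\pm\pi$) with a band endpoint. This gives $\frac{\beta m}{\beta+1}\pare{1-\cos(d\Omega+\babs{\theta}_{\min})}<4\sigma$, which suffices with room to spare in the constant. Two corrections are needed:
\begin{itemize}
\item The hypothesis on $\babs{\theta}_{\min}$, together with \eqref{equ:sepacondinumbercomplex_smalltheta}, is what guarantees $d\Omega\geqs\babs{\theta}_{\min}$, because $(2\sqrt2-\frac43)\pi>\frac{4\pi}{3}$. This is what puts the crossing frequency inside the band.
\item The hypothesis does not keep the sweep inside $[-\pi/2,\pi/2]$, since it is a lower bound on $d$. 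For large $d$ you must pass to a sub-band, as the paper does with $\omega^\ast=d_\star\Omega/d$. In any case $1-\cos x\geqs 2x^2/\pi^2$ holds on all of $[-\pi,\pi]$.
\end{itemize}

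\textbf{Lower bound.} There are three concrete problems, and one missing idea.
\begin{itemize}
\item The residual you state simplifies to $\frac{m}{4\beta}(d\Omega+\babs{\theta}_{\min})^2$. The correct second-order term is $\frac{\beta m}{2(\beta\pm1)}(d\Omega+\babs{\theta}_{\min})^2$. Fed into ``residual $<\sigma$'', your expression would give $d\Omega+\babs{\theta}_{\min}<2(\beta\sigma/m)^{1/2}$, so \eqref{equ:separation_condition_num_complex_LB_small} is asserted rather than derived.
\item Taking $\mathbf{Y}=\mathscr{F}[\widehat\mu]$ forces the residual below $\sigma$. As $\theta\to0$, your $\widehat a=f(0)$ tends to $(\beta+1)m$, and the condition only yields $d\Omega<\sqrt2\pare{\frac{\beta+1}{\beta}\frac{\sigma}{m}}^{1/2}$, a factor $\sqrt2$ short. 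The constant $2$ needs only $\babs{\mathscr{F}[\widehat\mu]-\mathscr{F}[\mu]}<2\sigma$, combined with the midpoint measurement $\mathbf{Y}=\frac12\pare{\mathscr{F}[\mu]+\mathscr{F}[\widehat\mu]}$.
\item A second-order expansion in $d\omega$ is not a uniform bound. Near $\pi$ with $\beta\downarrow1$, the center $\widehat y=\frac{\beta+1}{\beta-1}\frac d2$ makes $\widehat y\Omega$ unbounded, so the remainder is uncontrolled.
\end{itemize}

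The missing idea is that the phase acts as a shift of $d\omega$. After centering ($y_{1,2}=\pm\frac d2$, phases $\pm\frac\theta2$), one has $\mathscr{F}[\mu](\omega)=me^{iu/2}+\beta me^{-iu/2}$ with $u=d\omega+\theta$. Choose $\widehat\mu=(\beta+1)me^{i\frac{1-\beta}{\beta+1}\frac{\theta}{2}}\delta_{\frac{1-\beta}{\beta+1}\frac d2}$, so that $\mathscr{F}[\widehat\mu]$ is also a function of $u$. The error then equals $m\babs{(\beta+1)(e^{iu/(\beta+1)}-1)-(e^{iu}-1)}$. Using $e^{iv}-1=iv\int_0^1e^{ivx}\,dx$ and $\babs{e^{ia}-e^{ib}}\leqs\babs{a-b}$, this is at most $\frac{\beta m}{2(\beta+1)}u^2$ for every real $u$.

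Near $\pi$, set $t=\pi-\theta$. The analogous choice $i(1-\beta)me^{i\frac{\beta+1}{\beta-1}\frac t2}\delta_{\frac{\beta+1}{\beta-1}\frac d2}$ gives the bound $\frac{\beta m}{2(\beta-1)}(d\omega+t)^2$. Requiring these bounds to be $<2\sigma$ is exactly \eqref{equ:separation_condition_num_complex_LB_small}. This is not the upper-bound modulus estimate run in reverse, which is what your proposal suggests for making the step rigorous.
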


\begin{remark}
The lower bound in (\ref
{equ:separation_condition_num_complex_LB_small}) does not cover the case for $\beta =1$ when $\theta \approx \pi$. In particular, we observe a surprising improvement in the resolution order when $\beta=1$ and $\pi-\babs{\theta}\asymp \pare{\frac{\sigma}{m}}^\frac{1}{2}$; see the Theorem~\ref{thm:twopointresolution_complex_number_UB_beta1} below and experiments in Figs.~\ref{fig:slope_num_L0_theta175} and \ref{fig:slope_num_Threshold_theta175} for theoretical and numerical elucidations.
\begin{thm} \label{thm:twopointresolution_complex_number_UB_beta1}
Let $\mathbf{Y}$ be generated by a complex measure
$\mu=me^{i\theta_1} \delta_{y_1}+ m e^{i\theta_2} \delta_{y_2}$ with 
$y_1, y_2\in B_{\frac{\pi}{2\Omega}}(0)$, $\theta_1, \theta_2\in \pare{-\pi,\pi}$, and $m>0$.
Define the relative phase $\theta:=\theta_1-\theta_2\in\pare{-\pi,\pi}$.
When $\pi-\babs{\theta}\asymp \pare{\frac{\sigma}{m}}^\frac{1}{2}$,   if 
\begin{align}
\label{equ:sepacondinumbercomplex_smalltheta_beta1}
  \babs{y_1-y_2}\geqs \frac{2}{\Omega}\arcsin \pare{\frac{1}{\cos\pare{\frac{\pi-|\theta|}{2}}}\frac{\sigma}{m}},
\end{align}
then no $\sigma$-admissible measure for $\mathbf{Y}$ can be supported on fewer than two points. Otherwise, there exists a $\sigma$-admissible measure $\widehat \mu$ for $\mathbf{Y}$ supported on a single point.
\end{thm}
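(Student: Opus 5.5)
The plan is to reduce everything to an explicit real-valued expression for $\mathscr{F}[\mu]$ and then compare it with a single exponential, using the moduli at $\omega=\pm\Omega$ for the necessity part and an explicit one-point measure for the sufficiency part.

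\emph{Normalization.} Translation $y\mapsto y-c$ multiplies the data by $e^{-ic\omega}$. A global phase multiplies it by a unimodular constant. Neither operation changes $|y_1-y_2|$ or the class of one-point measures, so I may assume $y_1=-d/2$, $y_2=d/2$ with $d=|y_1-y_2|$, that $\theta_1+\theta_2=0$, and (by reflecting $\omega\mapsto-\omega$ if needed) that $\theta>0$. Set $\varepsilon:=\frac{\pi-|\theta|}{2}$, so that $\cos(\theta/2)=\sin\varepsilon$ and $\sin(\theta/2)=\cos\varepsilon$. Then
\[
\mathscr{F}[\mu](\omega)=2m\cos\!\Big(\tfrac{\theta}{2}-\tfrac{\omega d}{2}\Big)=2m\sin\!\Big(\varepsilon+\tfrac{\omega d}{2}\Big).
\]
This is real, with even part $2m\sin\varepsilon\cos(\omega d/2)$ and odd part $2m\cos\varepsilon\sin(\omega d/2)$. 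Write $X:=\Omega d/2$. The threshold in \eqref{equ:sepacondinumbercomplex_smalltheta_beta1} is exactly $\cos\varepsilon\,\sin X=\sigma/m$. Following the convention used for the earlier lower bounds, one-point admissibility amounts to
\[
\sup_{|\omega|\le\Omega}\bigl|\mathscr{F}[\mu](\omega)-ae^{iy\omega}\bigr|<2\sigma
\]
for a suitable noise realization. So the claim reduces to showing that the best uniform one-point approximation error equals exactly $2m\cos\varepsilon\sin X$.

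\emph{Upper bound (no one-point admissible measure).} Let $g(\omega)=ae^{iy\omega}$ be arbitrary with uniform error $r$ against $\mathscr{F}[\mu]$. Since $|g(\Omega)|=|g(-\Omega)|=|a|$, the triangle inequality at $\omega=\pm\Omega$ gives
\[
r\ge \tfrac12\big(|\mathscr{F}[\mu](\Omega)|-|\mathscr{F}[\mu](-\Omega)|\big)=m\big(\sin(\varepsilon+X)-|\sin(\varepsilon-X)|\big).
\]
When $X\le\varepsilon$, this equals $2m\cos\varepsilon\sin X$. In the regime $\varepsilon\asymp(\sigma/m)^{1/2}$ and $X\ge\arcsin\!\big(\frac{\sigma}{m\cos\varepsilon}\big)$, I would split into two cases. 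If $X\le\varepsilon$, the bound is $\ge 2\sigma$ directly. If $X>\varepsilon$, the bound becomes $2m\sin\varepsilon\cos X$, and the $\asymp$ hypothesis (with $\sigma/m$ small) should give $\sin\varepsilon\cos X\ge\sigma/m$. It is also possible that $X$ lies far from the threshold in that case, and then an additional sample point such as $\omega=0$ or a monotonicity argument in $d$ would be needed. In either case the conclusion is $r\ge2\sigma$, so no one-point measure is $\sigma$-admissible; this is then transported back through the normalization.

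\emph{Lower bound (a one-point admissible measure exists).} I take $y=0$ and the real amplitude $a=2m\sin\varepsilon\cos X$. With $x=|\omega|d/2\in[0,X]$, the residual is real, and its modulus is at most
\[
2m\big[\sin\varepsilon\cos x+\cos\varepsilon\sin x-\sin\varepsilon\cos X\big]=2m\big[\sin(\varepsilon+x)-\sin\varepsilon\cos X\big].
\]
This is increasing in $x$ as long as $\varepsilon+X<\pi/2$, which holds in this regime. At $x=X$ it equals $2m\cos\varepsilon\sin X$. A lower bound on the residual is also needed so that its modulus is controlled from both sides, which I will check separately. So if $\cos\varepsilon\sin X<\sigma/m$, the residual is uniformly below $2\sigma$. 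Choosing $\mathbf{W}=-\tfrac12(\text{residual})$ then makes $\widehat{\mu}=a\delta_0$ (shifted back to the original coordinates) $\sigma$-admissible.

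The main obstacle is the upper bound. The two-sample modulus argument is only sharp when $X\le\varepsilon$, so the delicate step is verifying, from the hypothesis $\pi-|\theta|\asymp(\sigma/m)^{1/2}$ with appropriate implicit constants, that the separation range $X\ge\arcsin\big(\frac{\sigma}{m\cos\varepsilon}\big)$ of order $\sigma/m$ stays in the region where this estimate gives at least $2\sigma$.
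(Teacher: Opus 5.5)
Your route is the paper's. You use the same normalization to the real function $F(\omega)=2m\sin(\varepsilon+\omega d/2)$ and a modulus comparison for necessity; the paper phrases the latter as requiring the oscillation of $F$ over $[-\Omega,\Omega]$ to be below $4\sigma$. For the converse you use the same real midpoint amplitude at $\widehat y=0$. Your converse is complete as written. The terms $\sin\varepsilon(\cos x-\cos X)$ and $\cos\varepsilon\sin x$ are both nonnegative, so the triangle inequality already bounds the modulus, and no separate lower bound on the residual is needed. The case $X\leqs\varepsilon$ of necessity is also correct.

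The gap is necessity for $X>\varepsilon$. There the $\pm\Omega$ comparison gives only $r\geqs 2m\sin\varepsilon\cos X$. This tends to $0$ as $X\to\pi/2$, i.e.\ as $d\to\pi/\Omega$, and the theorem covers that range. So your expectation that the $\asymp$ hypothesis gives $\sin\varepsilon\cos X\geqs\sigma/m$ is false for $X$ near $\pi/2$, and no restriction of the separation range can rescue the two-sample estimate.

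The missing step is to sample at the zero of $F$. For $X>\varepsilon$ we have $F(\omega_0)=0$ at $\omega_0=-2\varepsilon/d\in(-\Omega,0)$. Hence $|a|=|g(\omega_0)|\leqs r$ and $|F(\Omega)|\leqs |a|+r\leqs 2r$, so
\[
r\geqs \tfrac12|F(\Omega)| = m\sin(\varepsilon+X)\geqs m\min\{\sin 2\varepsilon,\cos\varepsilon\},
\]
because $\varepsilon+X\in(2\varepsilon,\pi/2+\varepsilon)$. Since $\varepsilon\asymp(\sigma/m)^{1/2}\gg\sigma/m$, this gives $r\geqs 2\sigma$ once $\sigma/m$ is small relative to the implicit constant, uniformly in $X\in(\varepsilon,\pi/2)$. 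Pairing $\omega=0$ with $\omega=\Omega$ works equally well, giving $r\geqs 2m\cos(\varepsilon+X/2)\sin(X/2)$.

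The paper's proof is no more careful at this point. It asserts $\sin(y_2\omega+t/2)\geqs 0$ on all of $[-\Omega,\Omega]$, which holds only when $X\leqs\varepsilon$, and it checks its range condition only at the threshold separation. Your case split is the right organization; it just needs this third sample point.
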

\end{remark}

Having characterized the resolution limits for source-number detection, we next study the resolution limits required for location estimation.

\begin{thm}
\label{thm:supportrecoveryupperboundcomplex_small_theta}
Let $\mathbf{Y}$ be generated by a complex measure
$\mu=me^{i\theta_1} \delta_{y_1}+\beta m e^{i\theta_2} \delta_{y_2}$ with 
$y_1, y_2\in B_{\frac{\pi}{2\Omega}}(0)$, $\theta_1, \theta_2\in \pare{-\pi,\pi}$, $\beta\geqs1$, and $m>0$.
Define $d:= \babs{y_1-y_2}$,  $\theta:=\theta_1-\theta_2\in\pare{-\pi,\pi}$, and $\babs{\theta}_{\min} := \min\{\babs \theta,\pi-\babs \theta\}$.
When $\babs{\theta}_{\min} \leqs1.75\pi \pare{\frac{\sigma}{m}}^\frac{1}{3}$, if
\begin{align}
\label{equ:sepacondilocationcomplex_small_beta1}
    \displaystyle \babs{y_1-y_2}\geqs\frac{6\pi\pare{\frac{\sigma}{m}}^\frac{1}{3}-\babs{\theta}_{\min}}{2\Omega}, 
\end{align}
and $\widehat{\mu} = \widehat{a}_1 \delta_{\widehat{y}_1} + \widehat{a}_2 \delta_{\widehat{y}_2}$ supported on $B_{\frac{\pi}{2\Omega}}(0)$ is a $\sigma$-admissible measure for $\mathbf{Y}$, then $\widehat{\mu}$ lies within the $\frac{d}{2}$-neighborhood of $\mu$. Moreover, if 
\begin{align}
\label{equ:separation_condition_supp_complex_LB_small}
\begin{cases}
     \displaystyle \babs{y_1-y_2} < \frac{2.23\pare{\frac{\sigma}{m}}^\frac{1}{3}-\babs{\theta}_{\min}}{\Omega}, &\displaystyle \babs{\theta}\asymp \pare{\frac{\sigma}{m}}^\frac{1}{3};\\
     \displaystyle \babs{y_1-y_2}<\frac{2.23\pare{\frac{\sigma}{ \beta m}}^\frac{1}{3}-\babs{\theta}_{\min}}{\Omega}, & \displaystyle \pi-\babs\theta \asymp \pare{\frac{\sigma}{m}}^\frac{1}{3},
\end{cases}
\end{align}
then there exists a $\sigma$-admissible measure $\widehat \mu=\widehat{a}_1 \delta_{\widehat{y}_1} + \widehat{a}_2 \delta_{\widehat{y}_2}$ that does not lie within the $\frac{d}{2}$-neighborhood of $\mu$.
\end{thm}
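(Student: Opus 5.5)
We plan to prove the two halves of Theorem~\ref{thm:supportrecoveryupperboundcomplex_small_theta} separately: a stability (upper-bound) argument and an explicit counterexample construction (lower bound).

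\textbf{Upper bound.} Suppose $\widehat\mu=\widehat a_1\delta_{\widehat y_1}+\widehat a_2\delta_{\widehat y_2}$ is $\sigma$-admissible. Then $|\mathscr F[\widehat\mu-\mu](\omega)|<2\sigma$ for $|\omega|\leqs\Omega$.

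We sample this at the equispaced frequencies $\omega_k=-\Omega+k\Omega/2$, $k=0,\dots,4$. The measure $\widehat\mu-\mu$ has four atoms, so the five samples satisfy a Vandermonde/Prony-type system. Eliminating the amplitudes with the degree-four annihilating polynomial, we expect an estimate of the form
\[
\prod_{j=1}^2 \babs{e^{i\widehat y_j h}-e^{iy_1h}}\,\babs{e^{i\widehat y_j h}-e^{iy_2h}}\lesssim \frac{\sigma}{m}\cdot(\text{amplitude-dependent factor}),
\]
with $h=\Omega/2$. This is the same mechanism as in Theorem~\ref{thm:supportrecoveryupperboundpositive}.

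The new ingredient is the phase. We do not discard the amplitudes as in the positive case. Instead, we keep the combination $e^{i\theta_1}+\beta e^{i\theta_2}e^{i(y_2-y_1)\omega}$. Its modulus, as a function of $\omega$, stays away from $0$ by an amount controlled by $|\theta|_{\min}+\Omega d$, since $|\sin((\theta+(y_2-y_1)\omega)/2)|$ appears.

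Heuristically, the effective separation seen by the data is $\Omega d+|\theta|_{\min}$ rather than $\Omega d$. Indeed, $e^{i\theta}$ acts like an extra shift in the exponent for the nearest endpoint $0$ or $\pi$ (the $\pi$ case following by replacing $\beta$ with $-\beta$). Arguing by contradiction, assume some $\widehat y_j$ lies outside both $B_{d/2}(y_1)$ and $B_{d/2}(y_2)$, or both $\widehat y_j$ lie in the same ball. Then each factor is at least $\sim\sin(\Omega d/4)$, and the product is at least $\sim((\Omega d+|\theta|_{\min})/2)^3$ up to constants. This contradicts \eqref{equ:sepacondilocationcomplex_small_beta1}, i.e. $2\Omega d+|\theta|_{\min}\geqs 6\pi(\sigma/m)^{1/3}$.

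The hypothesis $|\theta|_{\min}\leqs1.75\pi(\sigma/m)^{1/3}$ guarantees that the right-hand side is positive and that the linearization $\sin x\geqs \frac{2}{\pi}x$ is valid, which accounts for the factors of $\pi$.

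\textbf{Lower bound.} We construct $\widehat\mu$ so that $\mu-\widehat\mu$ has small Fourier transform, following the positive case. Set $\widehat\mu$ to have both supports on the same side, e.g.\ atoms at $\widehat y_1=y_1-t$ and $\widehat y_2=y_1+t'$, or two atoms clustered near the midpoint, chosen so that $\mu-\widehat\mu$ matches moments of order $0,1,2$.

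Writing $\mathscr F[\mu-\widehat\mu](\omega)=\sum_k c_k(i\omega)^k/k!$ after centering, the first three moment equations are solved with complex amplitudes. The leftover third moment is of size $m\,d_{\rm eff}^3$ with $d_{\rm eff}=d+|\theta|_{\min}/\Omega$. This is because the relative phase can be absorbed by shifting the location: $e^{i\theta}\approx e^{i\theta\omega/\Omega}$ only to first order. Requiring the remainder to be $<\sigma$ on $|\omega|\leqs\Omega$ gives $\Omega d+|\theta|_{\min}<2.23(\sigma/m)^{1/3}$.

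For $\pi-|\theta|\asymp(\sigma/m)^{1/3}$, the two amplitudes nearly cancel. The relevant amplitude in the moment balance is then the larger one, $\beta m$, which produces the $(\sigma/(\beta m))^{1/3}$ variant.

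\textbf{Main obstacle.} The hardest step is the upper bound. We must rigorously convert the phase $\theta$ into the additive shift $|\theta|_{\min}$ inside the Prony product estimate, uniformly in $\beta\geqs1$. We would first try a direct lower bound on $|e^{i\theta}+\beta e^{i(y_2-y_1)\omega}|$ at the chosen sample points, before invoking the Vandermonde estimate.
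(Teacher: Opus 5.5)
Your upper-bound argument has a genuine gap, and it sits exactly at the step you flag.

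\textbf{The product you target cannot yield the contradiction.} The estimate $\prod_{j}\babs{e^{i\widehat y_jh}-e^{iy_1h}}\babs{e^{i\widehat y_jh}-e^{iy_2h}}\lesssim\cdots$ is symmetric in the true nodes. It vanishes whenever some $\widehat y_j$ equals some $y_k$, for example $\widehat y_1=y_2$ and $\widehat y_2\in B_{d/2}(y_2)$, a configuration outside the $\frac d2$-neighborhood. Negating the conclusion only tells you that one ball $B_{d/2}(y_t)$ contains no $\widehat y_j$. Nothing bounds the distances to the other true node, so ``each factor is at least $\sim\sin(\Omega d/4)$'' is false. The estimate must be centred at the single node $y_t$, as in the paper's positive-case argument. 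Annihilating all four nodes with a degree-four polynomial would in any case give only an identity on the noise.

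\textbf{The phase never enters.} The geometric product is phase-free. The quantity you propose to bound below satisfies $\babs{e^{i\theta}+\beta e^{i(y_2-y_1)\omega}}^2=(\beta-1)^2+4\beta\cos^2(\alpha/2)$ with $\alpha=\theta+(y_1-y_2)\omega$. This is $\approx(1+\beta)^2$ for small $\alpha$, so it carries no first-order information on $\theta$.

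\textbf{The missing idea} is to annihilate only the two \emph{recovered} nodes. Applying $q(z)=(z-e^{i\widehat y_1\omega^\ast})(z-e^{i\widehat y_2\omega^\ast})$ to samples at $0,\omega^\ast,2\omega^\ast$ gives $\babs{a_1q(e^{iy_1\omega^\ast})+a_2q(e^{iy_2\omega^\ast})}<8\sigma$, which is the paper's \eqref{A2}. Up to a common unimodular factor, $q(e^{iy_k\omega^\ast})=-4e^{iy_k\omega^\ast}P_k$ with $P_k$ a \emph{real} product of two sines. So the two terms are real multiples of $e^{i(\theta_1+y_1\omega^\ast)}$ and $e^{i(\theta_2+y_2\omega^\ast)}$. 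Projecting onto the orthogonal direction gives $m\babs{P_1}\babs{\sin\psi}<2\sigma$ and $\beta m\babs{P_2}\babs{\sin\psi}<2\sigma$, with $\psi=\theta+(y_1-y_2)\omega^\ast$.

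This sine of the full angle vanishes at both $0$ and $\pi$, which is where $\babs{\theta}_{\min}$ comes from. After a reflection $y\mapsto-y$ you get $\babs{\sin\psi}=\babs{\sin(d\omega^\ast+\babs{\theta}_{\min})}$. Note that $\psi$ is read off at the middle sample, so the triple must not be centred at $0$. For the node whose ball is empty, $\babs{P_t}\geqs\sin^2(d\omega^\ast/4)$, which gives $\sin^2(d\omega^\ast/4)\babs{\sin(d\omega^\ast+\babs{\theta}_{\min})}<2\sigma/m$.

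This quantity scales like $(d\omega^\ast)^2(d\omega^\ast+\babs{\theta}_{\min})$, not like $(d\omega^\ast+\babs{\theta}_{\min})^3$. So a pointwise lower bound of the kind you state fails as $d\to0$. The additive threshold follows only because $\babs{\theta}_{\min}\leqs1.75\pi(\sigma/m)^{1/3}$, together with the separation condition, forces $\Omega d\geqs2.125\pi(\sigma/m)^{1/3}$.

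\textbf{Lower bound.} Your recipe also fails as literally stated. Suppose two real nodes match the moments $M_0,M_1,M_2$ of $\mu$. Eliminating $\widehat a_1,\widehat a_2$ turns the order-two equation into $M_2=(\widehat y_1+\widehat y_2)M_1-\widehat y_1\widehat y_2M_0$. This is a real $2\times2$ linear system for $(\widehat y_1+\widehat y_2,\widehat y_1\widehat y_2)$ with determinant $\mathrm{Im}(M_1\overline{M_0})=\beta m^2(y_1-y_2)\sin\theta\neq0$. Its only solution gives $\widehat\mu=\mu$, so exact moment matching in $\omega$ is impossible in the regime of interest.

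Your phase-as-shift heuristic is the right one, but it is exact, not first-order. After centring and removing a global phase, $a_j=\babs{a_j}e^{i(\theta/d)y_j}$, so $\mathscr F[\mu](\omega)=\mathscr F[\mu_+](\omega+\theta/d)$ with $\mu_+$ the positive pair. Impose the same phase law on the positive-case competitor, as the paper does. Then $\mathscr F[\widehat\mu-\mu](\omega)=\mathscr F[\widehat\mu_+-\mu_+](\omega+\theta/d)=O\bigl(m(d\Omega+\babs{\theta})^3\bigr)$. The moments being matched are those in $\phi=d\omega+\theta$, not in $\omega$. Finally, the residual only needs to be $<2\sigma$, not $<\sigma$; with $<\sigma$ you would not reach the constant $2.23$.
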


The preceding SRUs and SRLs indicate that, in the near-endpoint regime, the effective relative phase $\babs{\theta}_{\min}$ enhances resolvability while preserving the generic resolution order. Specifically, the phase term appears as a subtraction, and hence decreases the required source separation.  These theoretical predictions are corroborated by the numerical experiments in the next subsection.

% \subsection{Theoretical Boundｓ when $\beta=1$}\label{subsection:boundary_small_theta_beta1}

% \begin{thm}
% \label{thm:supportrecoveryupperboundcomplex_small_theta_beta1}
% Let $\mathbf{Y}$ be generated by a complex measure
% $\mu=me^{i\theta_1} \delta_{y_1}+\beta m e^{i\theta_2} \delta_{y_2}$ with 
% $y_1, y_2\in B_{\frac{\pi}{2\Omega}}(0)$, $\theta_1, \theta_2\in \pare{-\pi,\pi}$, $\beta\geqs1$, and $m>0$.
% Define $d:= \babs{y_1-y_2}$,  $\theta:=\theta_1-\theta_2\in\pare{-\pi,\pi}$, and $\babs{\theta}_{\min} := \min\{\babs \theta,\pi-\babs \theta\}$.
% When $\babs{\theta}_{\min} \leqs1.75\pi \pare{\frac{\sigma}{m}}^\frac{1}{3}$, if
% \begin{align}
% \label{equ:sepacondilocationcomplex_small_beta1}
%     \babs{y_1-y_2}\geqs XXX, 
% \end{align}
% and $\widehat{\mu} = \widehat{a}_1 \delta_{\widehat{y}_1} + \widehat{a}_2 \delta_{\widehat{y}_2}$ supported on $B_{\frac{\pi}{2\Omega}}(0)$ is a $\sigma$-admissible measure for $\mathbf{Y}$, then $\widehat{\mu}$ lies within the $\frac{d}{2}$-neighborhood of $\mu$. Moreover, if 
% \begin{align}
% \label{equ:separation_condition_supp_complex_LB_small_beta1}
%     \babs{y_1-y_2}<\frac{2\arcsin{\pare{\pare{\frac{\sigma}{2m}}^\frac{1}{2}}}-\babs{\theta}_{\min}}{\Omega},\quad\pi-\babs\theta \asymp \pare{\frac{\sigma}{m}}^\frac{1}{3};
% \end{align}
% then there exists a $\sigma$-admissible measure $\widehat \mu=\widehat{a}_1 \delta_{\widehat{y}_1} + \widehat{a}_2 \delta_{\widehat{y}_2}$ that does not lie within the $\frac{d}{2}$-neighborhood of $\mu$.
% \end{thm}

\subsection{Numerical Experiments on Source-Number Detection} 

In this subsection, we choose the $\ell_0$ and SVT algorithms to validate the theoretical SRLs for source-number detection established in Sections~\ref{subsection:boundary_small_theta}. Considering the effect of $\babs{\theta}_{\min}$, we define the phase-adjusted super-resolution factor (PASRF) as
\[
\mathrm{PASRF}=\frac{\pi}{d\Omega+|\theta|_{\min}},
\]
and validate the slopes in the $\log(\mathrm{PASRF})$--$\log(\sigma)$ plane.
\begin{figure*}[bp]
\centering
\subfloat[MUSIC \centering]{\includegraphics[height=3.45cm]{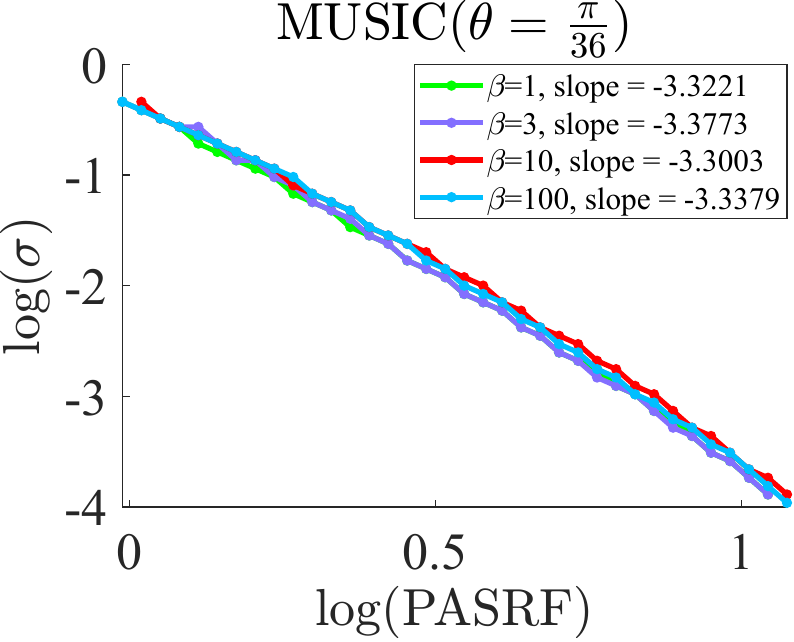}\label{fig:slope_locs_MUSIC_theta5}}  \hfil
\subfloat[ESPRIT \centering]{\includegraphics[height=3.45cm]{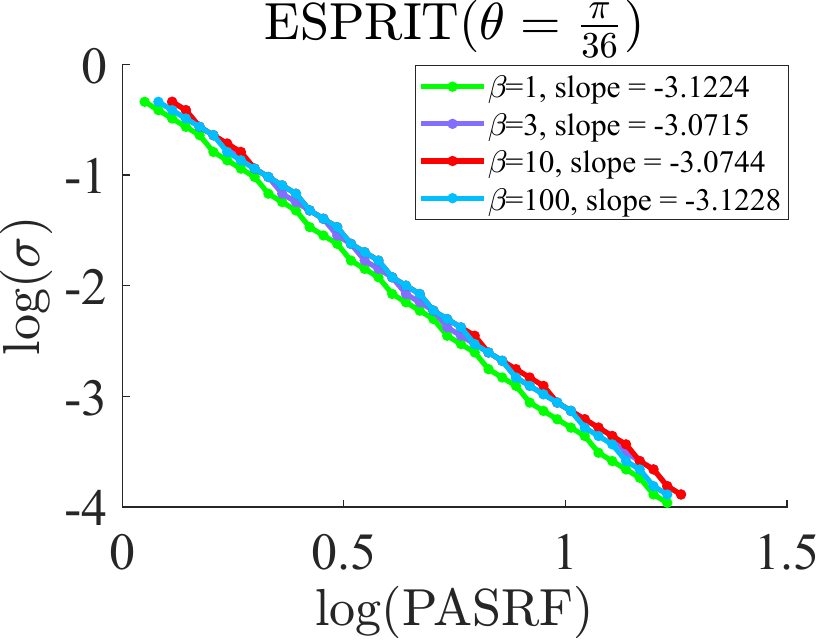}\label{fig:slope_locs_ESPRIT_theta5}}  \hfil
\subfloat[ML \centering]{\includegraphics[height=3.45cm]{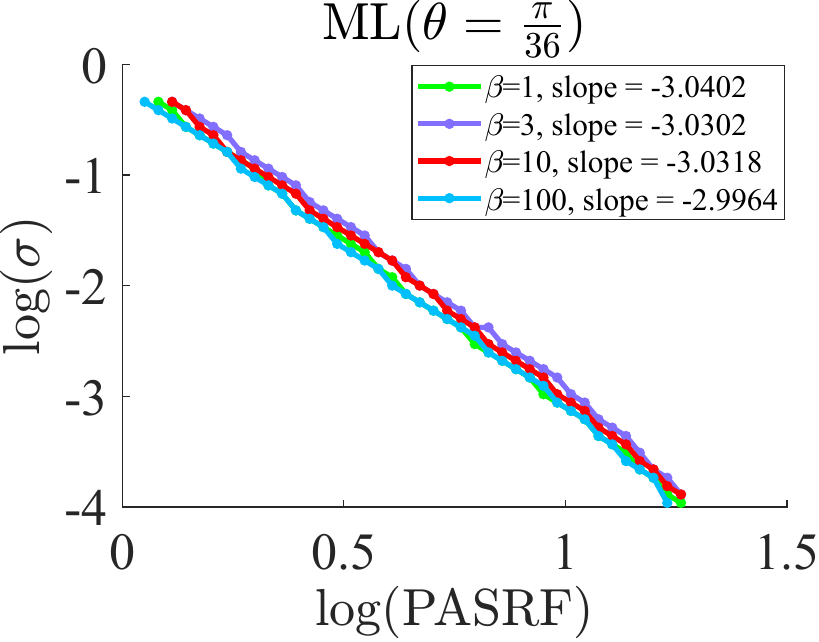}\label{fig:slope_locs_ML_theta5}}  \hfil
\subfloat[CVX \centering]{\includegraphics[height=3.45cm]{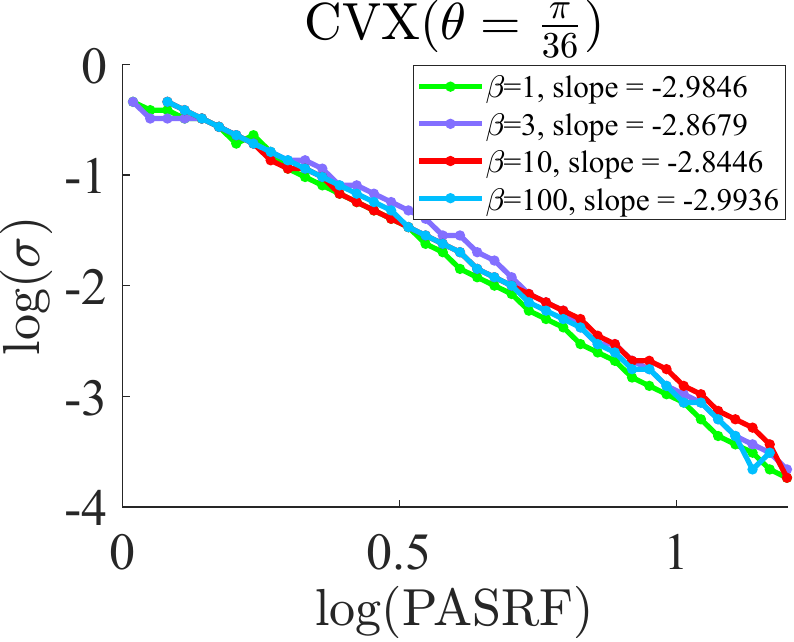}\label{fig:slope_locs_CVX_theta5}}  \\

\subfloat[MUSIC \centering]{\includegraphics[height=3.45cm]{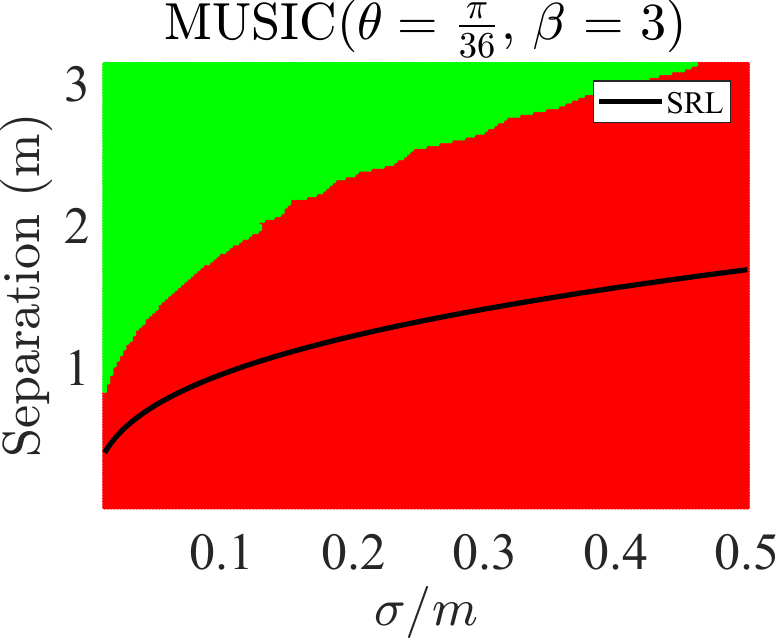}\label{fig:locs_beta3_theta5_MUSIC}}  \hfil
\subfloat[ESPRIT \centering]{\includegraphics[height=3.45cm]{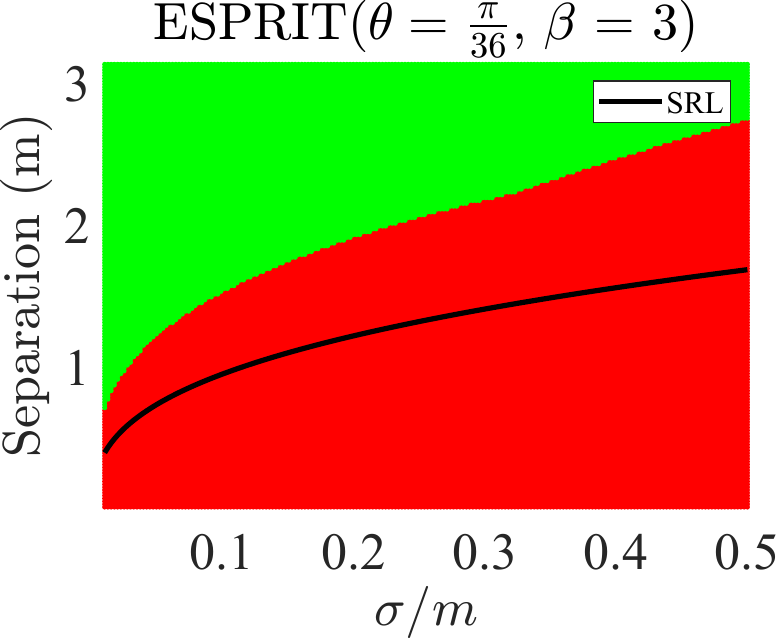}\label{fig:locs_beta3_theta5_ESPRIT}}  \hfil
\subfloat[ML \centering]{\includegraphics[height=3.45cm]{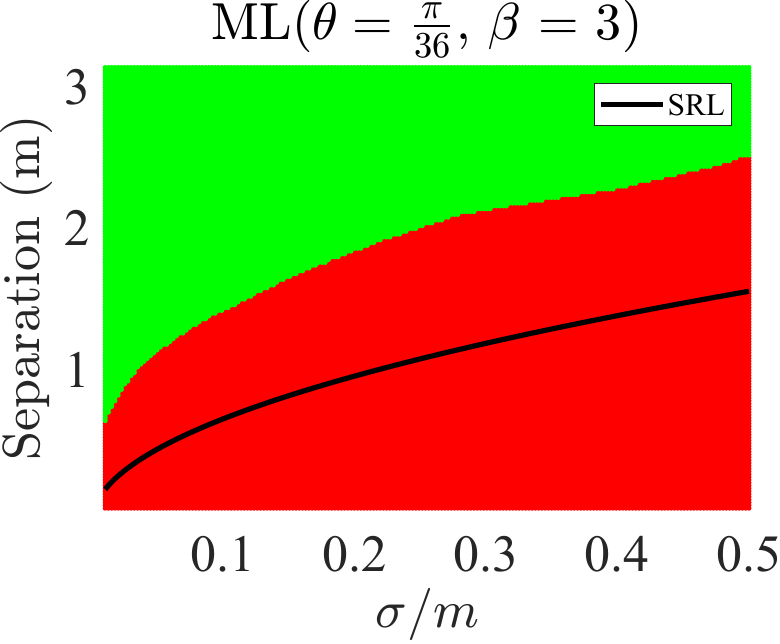}\label{fig:locs_SRL_beta3_theta5_ML}}  \hfil
\subfloat[CVX \centering]{\includegraphics[height=3.45cm]{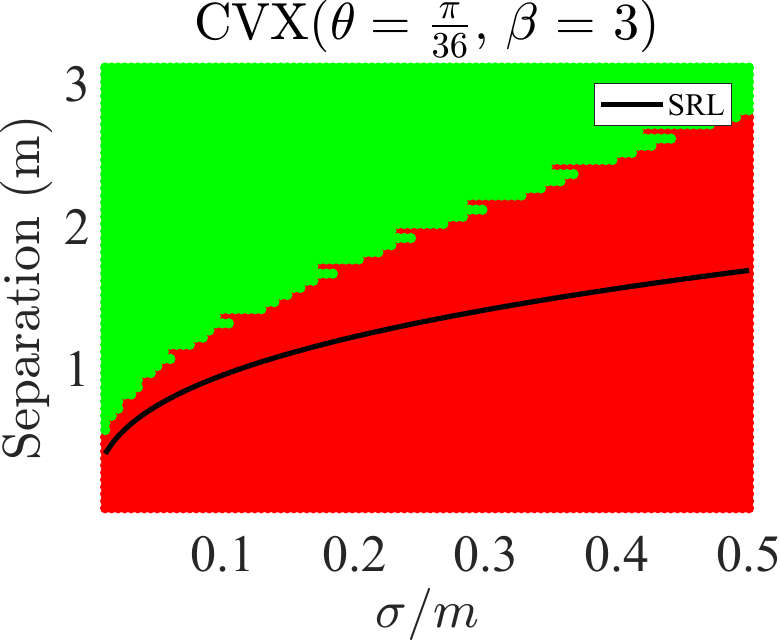}\label{fig:locs_SRL_beta3_theta5_CVX}}\\

\subfloat[MUSIC \centering]{\includegraphics[height=3.45cm]{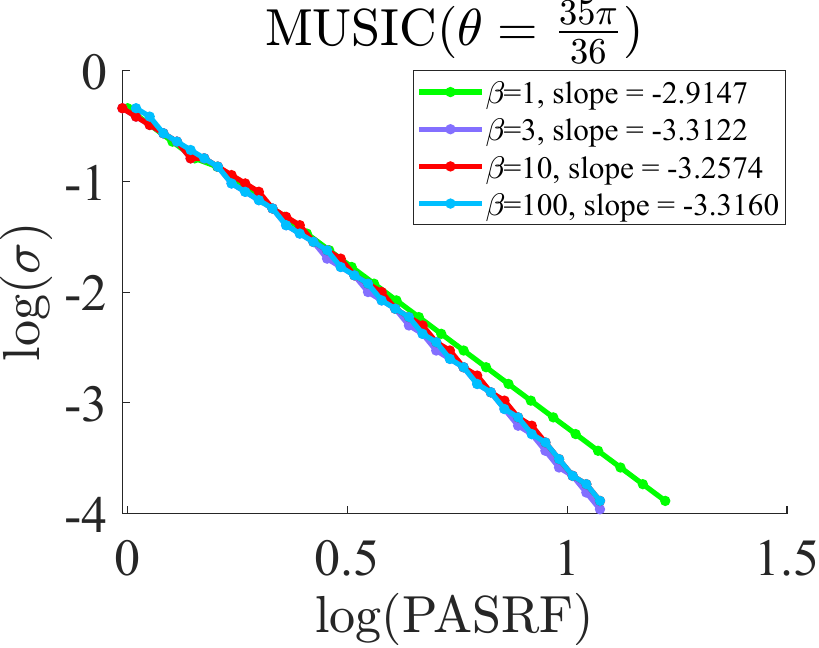}\label{fig:slope_locs_MUSIC_theta175}}  \hfil
\subfloat[ESPRIT \centering]{\includegraphics[height=3.45cm]{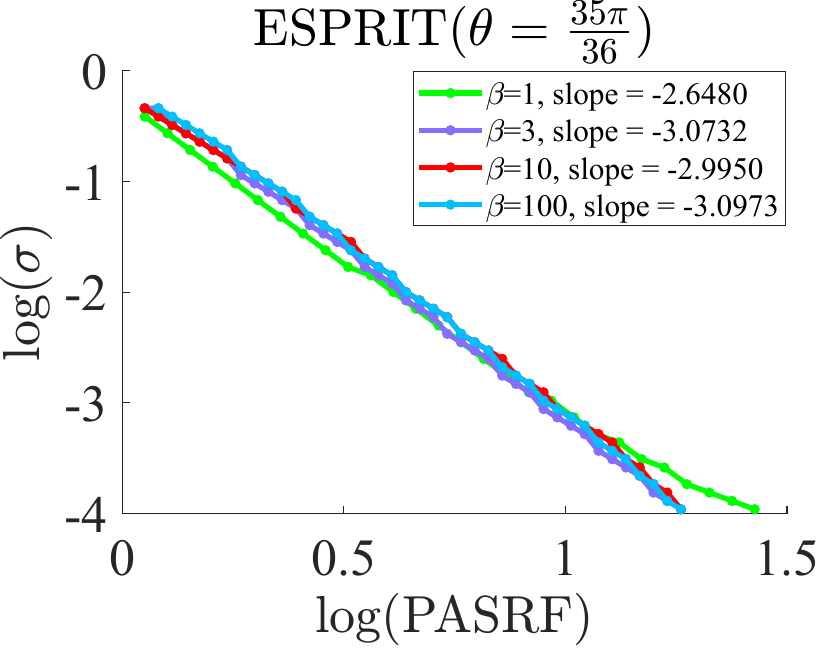}\label{fig:slope_locs_ESPRIT_theta175}}  \hfil
\subfloat[ML \centering]{\includegraphics[height=3.45cm]{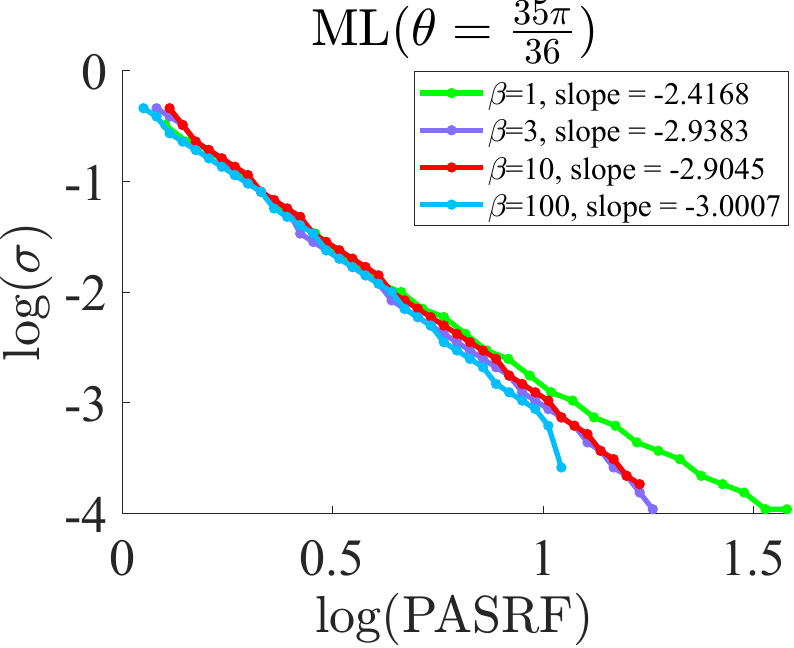}\label{fig:slope_locs_ML_theta175}}  \hfil
\subfloat[CVX \centering]{\includegraphics[height=3.45cm]{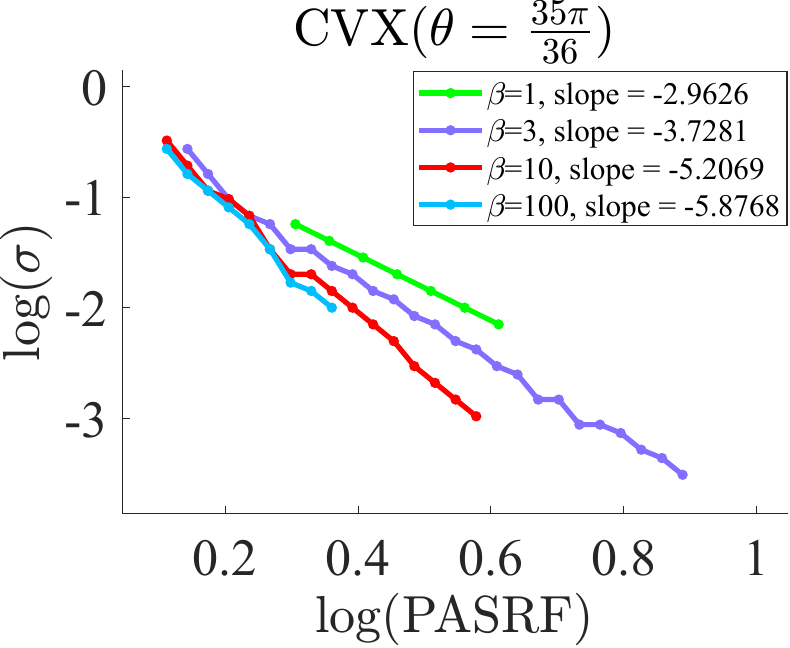}\label{fig:slope_locs_CVX_theta175}}  \\

\subfloat[MUSIC \centering]{\includegraphics[height=3.45cm]{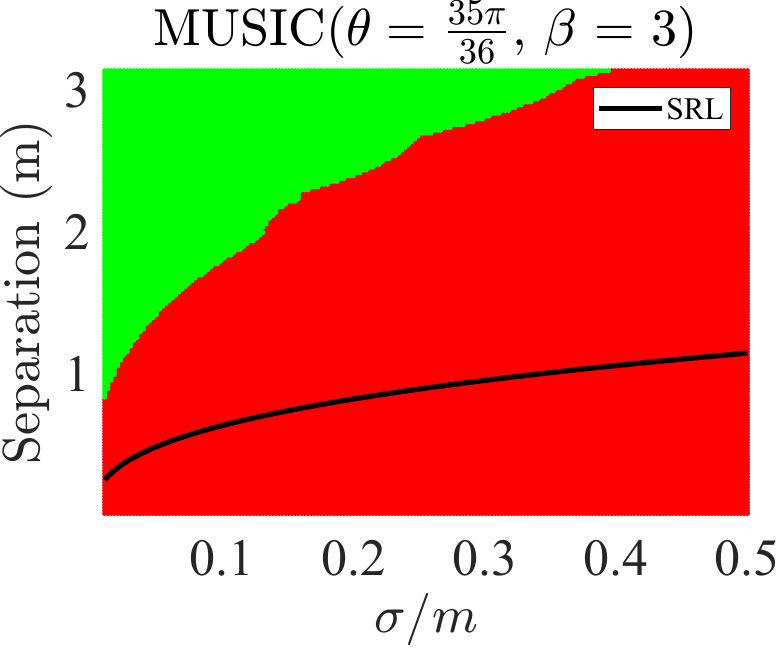}\label{fig:locs_beta3_theta175_MUSIC}}  \hfil
\subfloat[ESPRIT \centering]{\includegraphics[height=3.45cm]{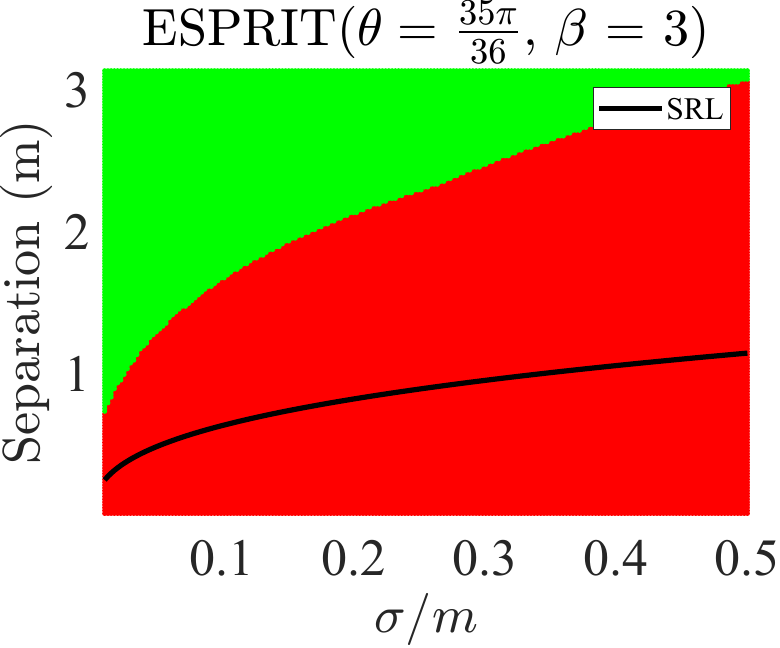}\label{fig:locs_beta3_theta175_ESPRIT}}  \hfil
\subfloat[ML \centering]{\includegraphics[height=3.45cm]{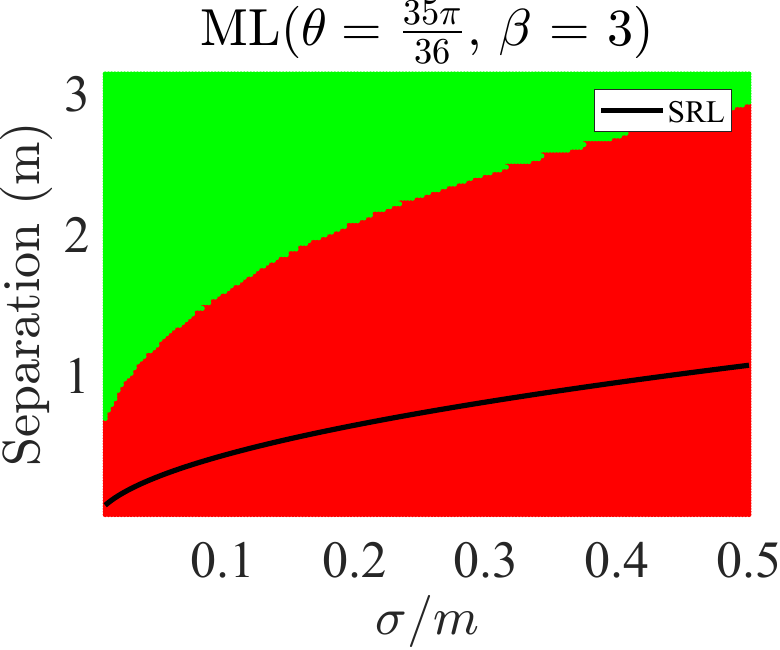}\label{fig:locs_SRL_beta3_theta175_ML}}  \hfil
\subfloat[CVX \centering]{\includegraphics[height=3.45cm]{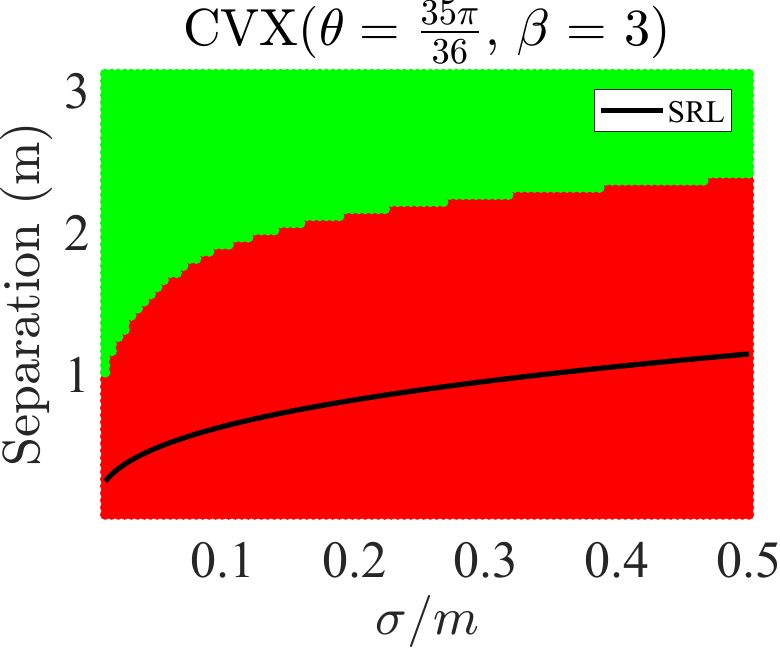}\label{fig:locs_SRL_beta3_theta175_CVX}} 

\caption{ Stable performance of each algorithm for location estimation under the near-endpoint phase regime.  } 
\label{fig:small_theta_locs_case}
\end{figure*}

Fig.~\ref{fig:small_theta_num_case} examines the source-number detection problem in two representative near-endpoint phase regimes, namely $|\theta|=\pi/36=5^\circ$ and $|\theta|=35\pi/36=175^\circ$. 
For the generic case covered by \eqref{equ:separation_condition_num_complex_LB_small}, the SRL implies
\begin{align*}
\begin{cases}
    \log(\sigma)>-2\log(\mathrm{PASRF})-\log\!\left(\frac{\beta+1}{\beta}\right)+C_1, 
    & |\theta|\asymp \left(\frac{\sigma}{m}\right)^{1/2},\\[0.5ex]
    \log(\sigma)>-2\log(\mathrm{PASRF})-\log\!\left(\frac{\beta-1}{\beta}\right)+C_1, 
    & \pi-|\theta|\asymp \left(\frac{\sigma}{m}\right)^{1/2},
\end{cases}
\end{align*}
where $C_1:=\log(m)+2\log(\pi/2)$. 
Thus, the predicted resolution order corresponds to a slope $-2$ in the $\log(\mathrm{PASRF})$--$\log(\sigma)$ plane. By contrast, when $\beta=1$ and $\babs{\theta}$ close to $\pi$, \eqref{equ:sepacondinumbercomplex_smalltheta_beta1} instead gives
\begin{align*}
    \log(\sigma)>
    \log\!\left[\sin\!\left(\frac{\pi}{2\,\mathrm{PASRF}}\right)\right]
    +\log(2m),
\end{align*}
which yields the improved slope $-1$.

The slope plots in Figs.~\ref{fig:slope_num_L0_theta5}--\ref{fig:slope_num_Threshold_theta175} are consistent with these predictions. 
For $|\theta|=\pi/36$, the empirical slopes of the $\ell_0$ method are close to $-2$ for all tested values of $\beta$, showing that it attains the predicted resolution order. 
SVT also yields nearly linear phase-transition boundaries, but its slopes deviate more noticeably from $-2$. 
For $|\theta|=35\pi/36$, the $\ell_0$ method accurately captures the phase-dependent transition: the slope is close to $-1$ in the case $\beta=1$, while it returns to approximately $-2$ for $\beta>1$. 
Although the SVT slopes are less accurate quantitatively, SVT still exhibits a clear change in the case $\beta=1$, with an approximately one-order shift relative to its own generic near-endpoint scaling. 
Thus, both methods reflect the special cancellation phenomenon, while $\ell_0$ matches the predicted scaling more closely.
The red--green phase-transition diagrams in Figs.~\ref{fig:num_SRL_beta3_theta5_L0_fast}--\ref{fig:num_SRL_beta3_theta175_MUSIC_only} give a comparison for $\beta=3$. 
For both near-$0$ and near-$\pi$ phases, the empirical success--failure boundaries remain above the theoretical SRL. 
Moreover, the $\ell_0$ boundaries are closer to the SRL than those of SVT, indicating that $\ell_0$ is empirically closer to the fundamental resolution limit in this setting. 
Therefore, Fig.~\ref{fig:small_theta_num_case} confirms not only the predicted near-endpoint resolution orders, but also the special order improvement.

\subsection{Numerical Experiments on Location Estimation}

In this subsection, similarly as the experimental framework in Section~\ref{experiment:zero_theta_locs}, we choose the MUSIC, ESPRIT, ML, and CVX algorithms to validate the theoretical SRLs for location estimation established in Sections~\ref{subsection:boundary_small_theta}, and validate the slopes in the $\log(\mathrm{PASRF})$--$\log(\sigma)$ plane.

Fig.~\ref{fig:small_theta_locs_case} examines the location-estimation problem in the same two near-endpoint phase regimes, $|\theta|=\pi/36$ and $|\theta|=35\pi/36$. 
From \eqref{equ:separation_condition_supp_complex_LB_small}, the SRL implies
\begin{align*}
\begin{cases}
    \log(\sigma)>-3\log(\mathrm{PASRF}) + C_2, 
    & |\theta|\asymp \left(\frac{\sigma}{m}\right)^{1/3},\\[0.5ex]
    \log(\sigma)>-3\log(\mathrm{PASRF}) + \log(\beta) + C_2, 
    & \pi-|\theta|\asymp \left(\frac{\sigma}{m}\right)^{1/3},
\end{cases}
\end{align*}
where $C_2:=\log(m)+3\log(\pi/2.23)$. Hence, the lower bound predicts the optimal slope of $-3$ without a special order transition for location estimation.
 
For $|\theta|=\pi/36$, Figs.~\ref{fig:slope_locs_MUSIC_theta5}--\ref{fig:slope_locs_CVX_theta5} show that the fitted slopes of ESPRIT, ML, and CVX are close to the predicted resolution scaling $-3$ in the near-$0$ phase regime, whereas MUSIC shows a mild deviation from this optimal scaling.
A similar behavior is observed for MUSIC, ESPRIT, and ML when $|\theta|=35\pi/36$, as shown in Figs.~\ref{fig:slope_locs_MUSIC_theta175}--\ref{fig:slope_locs_ML_theta175}. By contrast, CVX exhibits a substantial degradation in resolution, which is due to the inherent instability in superresolving complex sources (especially when $\theta \approx \pi$) \cite{duval2015exact, tang2015resolution}. In contrast to the source-number detection problem, no anomalous slope transition is observed here as $|\theta|$ approaches $\pi$; rather, the same $-3$ scaling law persists across both representative small $\babs{\theta}_{\min}$ regimes. 
The red--green phase-transition diagrams in Figs.~\ref{fig:locs_beta3_theta5_MUSIC}--\ref{fig:locs_SRL_beta3_theta5_CVX} and Figs.~\ref{fig:locs_beta3_theta175_MUSIC}--\ref{fig:locs_SRL_beta3_theta175_CVX} show that ESPRIT and ML produce empirical boundaries closer to the SRL, whereas MUSIC is more conservative, especially in the near-$\pi$ regime. 
Therefore, Fig.~\ref{fig:small_theta_locs_case}  confirms the correctness of the proposed SRLs for the location estimation problem.

\section{Large Phase Difference $\left(\babs{\theta}_{\min}\gg \pare{\frac{\sigma}{m}}^\frac{1}{2}\right)$}
\label{section:large_theta}

In contrast to the near-endpoint regime, we next consider the case where the relative phase is sufficiently separated from both $0$ and $\pi$, namely, $\babs{\theta}_{\min}\gg \pare{\frac{\sigma}{m}}^\frac{1}{2}$.

\subsection{Theoretical Bounds}
\label{subsection:boundary_large_theta}

In this subsection, we consider the large-phase regime. The following theorems quantify the resulting phase-induced improvement for both tasks, with proofs provided in Appendix~\ref{proof:large_theta}.

\begin{thm} \label{thm:complex_number_UB_large_theta}
Let $\mathbf{Y}$ be generated by a complex measure
$\mu=me^{i\theta_1} \delta_{y_1}+\beta m e^{i\theta_2} \delta_{y_2}$ with 
$y_1, y_2\in B_{\frac{\pi}{2\Omega}}(0)$, $\theta_1, \theta_2\in \pare{-\pi,\pi}$, $\beta\geqs1$, and $m>0$.
Define $d:=\babs{y_1-y_2}$, $\theta:=\theta_1-\theta_2\in\pare{-\pi,\pi}$, and $\babs{\theta}_{\min} := \min\{\babs \theta,\pi-\babs \theta\}$.
When $\babs{\theta}_{\min} >\frac{4\pi}{3}\sqrt{1+\frac{1.25}{\beta}}\pare{\frac{\sigma}{m}}^\frac{1}{2}$, if 
\begin{align}
\label{equ:sepacondinumbercomplex_largetheta}
\begin{cases}
     \babs{y_1-y_2}\geqs\frac{3}{\Omega}\arcsin\pare{\frac{\pare{2+\frac{2.5}{\beta}}\frac{\sigma}{m}}{\sin\babs{\theta}_{\min}}}, & \babs{\theta}_{\min}<\frac{\pi}{4};\\
     \babs{y_1-y_2}\geqs\frac{4}{\Omega}\arcsin\pare{\pare{2\sqrt{2}+\frac{2.5\sqrt{2}}{\beta}}\frac{\sigma}{m}}, &\frac{\pi}{4}\leqs\babs{\theta}_{\min}\leqs\frac{\pi}{2}.
\end{cases}
\end{align}
then no $\sigma$-admissible measure for $\mathbf{Y}$ can be supported on fewer than two points. Moreover, if
\begin{align}
\label{equ:separation_condition_num_complex_LB_large}
\babs{y_1-y_2}  <  \frac{2}{\Omega}\arcsin \pare{\frac{\sigma}{m}},
\end{align}
then there exists a $\sigma$-admissible measure $\widehat \mu$ for $\mathbf{Y}$ supported on a single point.
\end{thm}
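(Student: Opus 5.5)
Both parts reduce to comparing Fourier transforms on $[-\Omega,\Omega]$. By translation invariance we place the sources symmetrically, $y_1=-d/2$ and $y_2=d/2$, and factor out a global phase so that $\mu = m e^{i\theta}\delta_{-d/2}+\beta m\,\delta_{d/2}$.

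\textbf{Lower bound (construction).} Suppose $d<\frac{2}{\Omega}\arcsin(\sigma/m)$. We compare $\mu$ with a one-point measure at one of the sources. The natural candidate is $\widehat\mu=\beta m\,\delta_{d/2}+m e^{i\theta}\delta_{d/2}=(\beta m+me^{i\theta})\delta_{d/2}$, i.e.\ we move the weaker source onto the stronger one. Then
\[
\bigl|\mathscr{F}[\mu](\omega)-\mathscr{F}[\widehat\mu](\omega)\bigr| = m\,\bigl|e^{-i\omega d/2}-e^{i\omega d/2}\bigr| = 2m\,\bigl|\sin(\omega d/2)\bigr| \leqs 2m\sin(\Omega d/2) < 2\sigma .
\]
This difference is strictly less than $2\sigma$ but not less than $\sigma$, so we need the usual device. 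Since $\mathbf{W}$ is only required to satisfy $|\mathbf{W}|<\sigma$, we choose the noise
\[
\mathbf{W}(\omega) = \tfrac12\bigl(\mathscr{F}[\widehat\mu]-\mathscr{F}[\mu]\bigr)(\omega),
\]
which has modulus $<\sigma$. Then $\mathbf{Y}=\mathscr{F}[\mu]+\mathbf{W}$ satisfies $|\mathbf{Y}-\mathscr{F}[\widehat\mu]|<\sigma$, so $\widehat\mu$ is $\sigma$-admissible. Splitting the difference in this way is exactly what makes the threshold $\arcsin(\sigma/m)$ rather than $\arcsin(\sigma/(2m))$. This part is routine.

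\textbf{Upper bound.} Suppose for contradiction that $\widehat\mu=\widehat a\,\delta_{\widehat y}$ is $\sigma$-admissible. Then for all $|\omega|\leqs\Omega$,
\[
\bigl|m e^{i\theta}e^{-i\omega d/2}+\beta m\,e^{i\omega d/2}-\widehat a\,e^{i\widehat y\omega}\bigr|<2\sigma .
\]
Following the scheme used for the earlier theorems, we sample at equispaced frequencies $\omega_k=k h$ with $k=-1,0,1$ (and additionally $k=\pm2$ in the regime $|\theta|_{\min}\geqs\pi/4$), with $h\leqs\Omega/3$ or $h\leqs\Omega/4$ respectively. The spacings $\Omega/3$ and $\Omega/4$ explain the factors $3$ and $4$ in front of $\arcsin$. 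A one-point signal $\widehat a z^k$ with $z=e^{i\widehat yh}$ is annihilated by a first-order difference operator: $\widehat a z^{k+1}-z\,\widehat a z^k=0$. Equivalently, the $2\times2$ Hankel matrix built from three consecutive samples has rank one.

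\textbf{Key algebraic estimate.} We apply the second-order annihilation combination (the Hankel determinant) to the true two-point signal $x_k=m e^{i\theta}u^{k}+\beta m\,v^{k}$, where $u=e^{-ihd/2}$ and $v=e^{ihd/2}$. Its Hankel determinant is
\[
x_0x_2-x_1^2=\beta m^2 e^{i\theta}(u-v)^2=-4\beta m^2e^{i\theta}\sin^2(hd/2),
\]
and the corresponding determinant for $\widehat\mu$ vanishes. This gives only second order in $\sin(hd/2)$, which would yield the $(\sigma/m)^{1/2}$ rate. To gain the improved first-order rate we must exploit the phase. Instead of the determinant we use the projection of the difference $x_1-\sqrt{x_0x_2}$, or directly the geometry of the three samples. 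Modulo the common rotation $e^{i\widehat y\omega}$, the samples of $\widehat\mu$ all have the same modulus $|\widehat a|$. For $\mu$, however, the modulus is
\[
|x(\omega)|^2=m^2\bigl(1+\beta^2+2\beta\cos(\theta-\omega d)\bigr),
\]
whose variation between $\omega=0$ and $\omega=\pm h$ is approximately
\[
2\beta m^2\bigl|\cos\theta-\cos(\theta\mp hd)\bigr|\approx 2\beta m^2\,hd\,\sin|\theta|_{\min},
\]
which is \emph{linear} in $hd$ when $\sin\theta\neq0$. The admissibility inequality forces these moduli to differ by at most about $4\sigma$, i.e.\ $\bigl||x(\omega)|-|\widehat a|\bigr|<2\sigma$. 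We turn this into a bound of the form $\sin(hd/2)\,\sin|\theta|_{\min}\lesssim (2+2.5/\beta)\sigma/m$. The $1/\beta$ term comes from dividing by $|x|\sim(\beta+1)m$ and absorbing constants. Contradicting this gives the first case.

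For $|\theta|_{\min}\geqs\pi/4$ we use $\sin|\theta|_{\min}\geqs 1/\sqrt2$, which explains the factor $\sqrt2$. Here we use the five samples so that whichever of $\cos(\theta\pm hd)$ moves is controlled near $|\theta|\approx\pi/2$, where the first-order change could partly cancel between $+h$ and $-h$.

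\textbf{Main obstacle.} The hard step is making the modulus argument rigorous with explicit constants. It requires handling the trigonometric approximation $\cos\theta-\cos(\theta-hd)=2\sin(\theta-hd/2)\sin(hd/2)$, keeping $\theta-hd/2$ away from $0$ and $\pi$ using the hypothesis $|\theta|_{\min}>\frac{4\pi}{3}\sqrt{1+1.25/\beta}\,(\sigma/m)^{1/2}$, and choosing the sign of $\omega$ so that the shifted phase stays in the good range. We would first try the choice $\omega\in\{0,\pm\Omega/3\}$, taking whichever sign increases $|\sin(\cdot)|$.
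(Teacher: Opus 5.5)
Your lower-bound construction is the paper's: merge the weaker source into the stronger one, obtain $2m\babs{\sin(\omega d/2)}$, and split the difference as noise. Your modulus-invariance idea is also the paper's key step for the upper bound. The paper compares $\omega=0$ with a single $\omega^\ast$ and reaches $\babs{\sin(d\omega^\ast/2)\sin(d\omega^\ast/2+\babs{\theta}_{\min})}<(2+2.5/\beta)\sigma/m$. Your version, with $x(\omega)=\mathscr{F}[\mu](\omega)$, uses $\bigl||x(\omega)|-|x(\omega')|\bigr|<4\sigma$ and $|x(\omega)|+|x(\omega')|\leqs 2(\beta+1)m$. It gives the same inequality with $2+2/\beta$ and does not need $\sigma/m\leqs 1/2$.

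\textbf{The gap is in the frequencies you commit to.} Write $\varepsilon$ for the right-hand side. With $h\leqs\Omega/3$ you only control $\sin(hd/2)$ with $hd/2\leqs d\Omega/6$, so you get $d<\frac{6}{\Omega}\arcsin\pare{\varepsilon/\sin\babs{\theta}_{\min}}$. Likewise $h\leqs\Omega/4$ gives a factor $8$. The contrapositive then guarantees resolvability only for separations larger than the theorem claims, so the constants $3$ and $4$ do not follow. Your accounting of them misses the factor $2$ in $hd/2$.

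The cap $\omega^\ast\leqs\Omega/\#S_t$ belongs to the location--amplitude lemma, which this argument does not use. Two frequencies suffice, so take $h$ large:
\begin{itemize}
\item For $\babs{\theta}_{\min}<\pi/4$, any $h\in[2\Omega/3,\Omega]$ works once you choose the sign of $\omega$ that pushes the phase away from the nearer endpoint ($0$ or $\pi$). Then $\babs{\theta}_{\min}<\babs{\theta}_{\min}+hd/2<3\pi/4<\pi-\babs{\theta}_{\min}$, so the phase factor is at least $\sin\babs{\theta}_{\min}$. It is this sign choice that keeps the phase away from the endpoints, not the hypothesis $\babs{\theta}_{\min}>\frac{4\pi}{3}\sqrt{1+1.25/\beta}\,(\sigma/m)^{1/2}$. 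The hypothesis only makes the resulting bound beat the generic $\sqrt{\varepsilon}$ one.
\item For $\babs{\theta}_{\min}\geqs\pi/4$, the real obstruction is that $\babs{\theta}_{\min}+hd/2$ can approach $\pi$ when $h=\Omega$. The paper takes $\omega^\ast=\Omega/2$, which keeps this phase in $[\pi/4,3\pi/4)$ and the factor at least $1/\sqrt2$. That is exactly where the $4$ and the $\sqrt2$ come from. No five-sample scheme is needed.
\end{itemize}

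The cancellation you fear between $+h$ and $-h$ does not occur; the comparison is in fact the cleanest one. Comparing $\omega=\Omega/2$ with $\omega'=-\Omega/2$ gives exactly $|x(\Omega/2)|^2-|x(-\Omega/2)|^2=4\beta m^2\sin\theta\,\sin(d\Omega/2)$. Combined with your modulus bound, this yields $\sin(d\Omega/2)<\frac{2(1+1/\beta)\sigma/m}{\sin\babs{\theta}_{\min}}$ for every $\babs{\theta}_{\min}$. This single bound, with factor $2$ in place of $3$ or $4$, implies both cases of the theorem and removes the case split.
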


\begin{thm}
\label{thm:supportrecoveryupperboundcomplex_large_theta}
Let $\mathbf{Y}$ be generated by a complex measure
$\mu=me^{i\theta_1} \delta_{y_1}+\beta m e^{i\theta_2} \delta_{y_2}$ with 
$y_1, y_2\in B_{\frac{\pi}{2\Omega}}(0)$, $\theta_1, \theta_2\in \pare{-\pi,\pi}$, $\beta\geqs1$, and $m>0$.
Define $d:=\babs{y_1-y_2}$, $\theta:=\theta_1-\theta_2\in\pare{-\pi,\pi}$, and $\babs{\theta}_{\min} := \min\{\babs \theta,\pi-\babs \theta\}$.
When $\babs{\theta}_{\min} >1.75\pi\pare{\frac{\sigma}{m}}^\frac{1}{3}$, if 
\begin{align}
\label{equ:sepacondilocationcomplex_large}
\begin{cases}
         d\geqs\frac{6}{\Omega}\arcsin\pare{\pare{\frac{\frac{2\sigma}{ m}}{\sin\babs{\theta}_{\min}}}^\frac{1}{2}}, & 1.75\pi\pare{\frac{\sigma}{m}}^\frac{1}{3}<\babs{\theta}_{\min}<\frac{\pi}{6} ;\\
         d\geqs\frac{8}{\Omega}\arcsin\pare{\pare{\frac{\frac{2\sigma}{ m}}{\sin\pare{\babs{\theta}_{\min}+\frac{\pi}{3}}}}^\frac{1}{2}}, &  \frac{\pi}{6}\leqs\babs{\theta}_{\min} \leqs\frac{\pi}{2},
    \end{cases}
\end{align}
and $\widehat{\mu} = \widehat{a}_1 \delta_{\widehat{y}_1} + \widehat{a}_2 \delta_{\widehat{y}_2}$ supported on $B_{\frac{\pi}{2\Omega}}(0)$ is a $\sigma$-admissible complex measure for $\mathbf{Y}$, then $\widehat{\mu}$ lies within the $\frac{d}{2}$-neighborhood of $\mu$. Moreover, if 
\begin{align}
\label{equ:separation_condition_locs_complex_LB_large}
\babs{y_1-y_2}  <  \frac{2}{\Omega}\arcsin \pare{\pare{\frac{\sigma}{m}}^\frac{1}{2}},
\end{align}
then there exists a $\sigma$-admissible measure $\widehat \mu=\widehat{a}_1 \delta_{\widehat{y}_1} + \widehat{a}_2 \delta_{\widehat{y}_2}$ that does not lie within the $\frac{d}{2}$-neighborhood of $\mu$.
\end{thm}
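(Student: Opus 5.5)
We prove the upper bound \eqref{equ:sepacondilocationcomplex_large} and the lower bound \eqref{equ:separation_condition_locs_complex_LB_large} separately; the lower bound is a construction, the upper bound a contradiction argument.

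\textbf{Lower bound.} We construct two $2$-sparse measures whose Fourier transforms differ by less than $2\sigma$ on $[-\Omega,\Omega]$. Averaging their data then gives a single $\mathbf{Y}$, with noise bounded by $\sigma$, for which both are $\sigma$-admissible.
\begin{itemize}
\item Take $\widehat\mu=\mu+\nu$, where $\nu$ is a two-point dipole $c(\delta_{x+h}-\delta_{x-h})$ (or a similar small perturbation).
\item Choose $\nu$ so that $\widehat\mu$ has its two supports swapped, or collapsed near the midpoint $\frac{y_1+y_2}{2}$; then $\widehat\mu$ is outside the $\frac d2$-neighborhood of $\mu$.
\item In the large-phase regime the mismatch is governed by the first-order term $|e^{iy_1\omega}-e^{iy_2\omega}|=2|\sin(\frac{(y_1-y_2)\omega}{2})|$, scaled by an amplitude of order $m$.
\item A natural candidate is $\widehat\mu=\widehat a_1\delta_{\widehat y_1}+\widehat a_2\delta_{\widehat y_2}$ with $\widehat y_1,\widehat y_2$ placed symmetrically near the midpoint, and $\widehat a_j$ fitted to the zeroth and first moments.
\item The residual is then of size $m\cdot 4\sin^2(\frac{d\Omega}{2})$, which is $<2\sigma$, or after rescaling $<\sigma$, exactly when $d<\frac2\Omega\arcsin((\sigma/m)^{1/2})$.
\end{itemize}
The second-order term is what gives the exponent $\frac12$ rather than the exponent $\frac13$ of Theorem~\ref{thm:supportrecoveryupperboundpositive}.

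\textbf{Upper bound.}
\begin{enumerate}
\item Suppose $\widehat\mu$ is $\sigma$-admissible but not in the $\frac d2$-neighborhood. Then $|\mathscr F[\widehat\mu-\mu](\omega)|<2\sigma$ on $[-\Omega,\Omega]$.
\item Since no $\widehat y_j$ lies in $B_{d/2}(y_1)$, or no $\widehat y_j$ lies in $B_{d/2}(y_2)$, some true source, say $y_1$, is separated from all other nodes by at least $\frac d2$.
\item Sample the difference at equispaced frequencies $\omega_k=k\Omega/N$, for $N=3$ in the first case and $N=4$ in the second. These give the factors $\frac6\Omega$ and $\frac8\Omega$ through $d\Omega/N$-type arcsin arguments.
\item Form the Vandermonde system with nodes $\{y_1,y_2,\widehat y_1,\widehat y_2\}$. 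Eliminate the $\widehat y$ contributions with the annihilating polynomial $\prod_j(z-e^{i\widehat y_j\Omega/N})$, as in the proof of Theorem~\ref{thm:supportrecoveryupperboundpositive}.
\item Apply this annihilating functional to the $2\sigma$ bound. It leaves the expression
\[
me^{i\theta_1}P(e^{iy_1\Omega/N})+\beta m e^{i\theta_2}P(e^{iy_2\Omega/N}),
\]
whose modulus must be at most roughly $2\sigma\cdot 2^{2}$.
\end{enumerate}

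\textbf{Main obstacle.} The key step is a lower bound on this expression that uses the phase. Write it as $m|A+\beta e^{-i\theta}B|$ with $|A|,|B|\gtrsim\prod|\cdot|\gtrsim \sin^2(\frac{d\Omega}{2N})$.
\begin{itemize}
\item Because the two terms have relative angle $\theta$ plus a small geometric rotation, the lower bound $|A+e^{i\phi}B|\ge\max(|A|,|B|)\,|\sin\phi|$ applies.
\item The hardest part is controlling the extra rotation, which depends on $\widehat y_j$. For $|\theta|_{\min}<\frac\pi6$ I would show it does not reduce $|\sin\phi|$ below $\sin|\theta|_{\min}$. For $|\theta|_{\min}\ge\frac\pi6$ the worst-case rotation is $\frac\pi3$, which yields $\sin(|\theta|_{\min}+\frac\pi3)$.
\item The hypothesis $|\theta|_{\min}>1.75\pi(\sigma/m)^{1/3}$ is what keeps this geometric rotation, which is of order $d\Omega$, dominated by the phase.
\end{itemize}
Rearranging gives $\sin^2(\frac{d\Omega}{2N})\lesssim\frac{2\sigma/m}{\sin(\cdot)}$, which contradicts \eqref{equ:sepacondilocationcomplex_large}.
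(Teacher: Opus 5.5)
\textbf{Lower bound.} This half has a genuine gap, because your construction does not produce a valid example. Two supports placed symmetrically about the midpoint, at $\frac{y_1+y_2}{2}\pm h$ with $0<h<\frac d2$, are each at distance $\frac d2-h<\frac d2$ from one of the true sources. Such a $\widehat\mu$ therefore \emph{does} lie in the $\frac d2$-neighborhood. Swapping the supports does not help either, since the definition allows relabeling. To violate the neighborhood property, one of the balls $B_{d/2}(y_j)$ must contain no $\widehat y_k$.

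There are further problems with the candidate. After matching the zeroth and first moments, its leading residual is about $|a_1+a_2|\,(\frac{d^2}{4}-h^2)\frac{\omega^2}{2}$. This scales with $|e^{i\theta_1}+\beta e^{i\theta_2}|\,m$ rather than $m$, so it cannot give the $\beta$-free threshold. Separately, $4m\sin^2(\frac{d\Omega}{2})<2\sigma$ gives $\arcsin((\frac{\sigma}{2m})^{1/2})$, not the stated threshold. A first-difference dipole is also the wrong object: it is first order and generically produces four supports.

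The missing idea is a \emph{second-difference} perturbation of the source at $y_1$, placed so that one of its outer nodes lands on $y_2$. The paper takes $\widehat\mu=(\frac{a_1}{2}+a_2)\delta_{y_2}+\frac{a_1}{2}\delta_{2y_1-y_2}$. Then $\widehat\mu-\mu=\frac{a_1}{2}(\delta_{y_2}-2\delta_{y_1}+\delta_{2y_1-y_2})$, whose Fourier transform has modulus exactly $m(1-\cos d\omega)=2m\sin^2(\frac{d\omega}{2})$ for every $\beta$ and $\theta$. This is $<2\sigma$ on $[-\Omega,\Omega]$ under the stated condition, and both supports of $\widehat\mu$ are at distance $d$ from $y_1$.

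\textbf{Upper bound.} Your skeleton is the paper's route: the annihilating polynomial on $0,\omega^*,2\omega^*$, a noise term of at most $8\sigma$, the estimate $|A+e^{i\psi}B|\ge\max(|A|,|B|)\,|\sin\psi|$ for real $A,B$, and $\omega^*=\frac{2\Omega}{3}$ resp.\ $\frac{\Omega}{2}$. Only the term of the source whose $\frac d2$-ball is empty is bounded below, but that suffices since you take the max. However, the step you call hardest is misdiagnosed, and the justification you plan would fail.

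Writing $e^{ip}-e^{iq}=2ie^{i(p+q)/2}\sin\frac{p-q}{2}$, the factor $e^{i(\widehat y_1+\widehat y_2)\omega^*/2}$ is common to both terms. The modulus is therefore $4m|P_1e^{i\psi}+\beta P_2|$ with $P_1,P_2$ real and $\psi=\theta+(y_1-y_2)\omega^*$, so the rotation does not depend on $\widehat y_j$. Nor is it small: $d$ ranges up to $\frac{\pi}{\Omega}$, so $d\omega^*$ goes up to $\frac{2\pi}{3}$. Hence the hypothesis $|\theta|_{\min}>1.75\pi(\sigma/m)^{1/3}$ cannot be what controls it. The paper uses that hypothesis only to check $\sin|\theta|_{\min}>(2\sigma/m)^{1/3}$, i.e.\ that the phase-based bound is the smaller one.

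What actually controls the rotation is the orientation. In the unfavorable orientation $|\sin\psi|=|\sin(|\theta|_{\min}-d\omega^*)|$, which vanishes at $d\omega^*=|\theta|_{\min}$, so your claimed inequality fails there. The reflection $\omega\mapsto-\omega$ of the symmetric band sends $y_j\mapsto-y_j$ with the amplitudes fixed, and this is what the paper's ``replace $y_j$ by $-y_j$'' uses. It lets you assume $|\sin\psi|=\sin(|\theta|_{\min}+d\omega^*)$. Then $d\omega^*<\frac{2\pi}{3}$ and $|\theta|_{\min}<\frac{\pi}{6}$ keep $|\theta|_{\min}+d\omega^*$ inside $(|\theta|_{\min},\pi-|\theta|_{\min})$, which gives $|\sin\psi|\ge\sin|\theta|_{\min}$.

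In the second case, $\omega^*=\frac{\Omega}{2}$ only gives $d\omega^*<\frac{\pi}{2}$. The worst-case rotation $\frac{\pi}{3}$ that you, like the paper, assert therefore holds only for $d\le\frac{2\pi}{3\Omega}$. For larger $d$, $\omega^*$ has to be reduced adaptively, as is done in the near-endpoint proof.
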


The preceding results indicate that the large-phase regime yields an order-level improvement in the resolution bounds: from $\pare{\sigma/m}^{1/2}$ to $\sigma/m$ for source-number detection, and from $\pare{\sigma/m}^{1/3}$ to $\pare{\sigma/m}^{1/2}$ for location estimation. Thus, unlike the near-endpoint regime where the phase term acts mainly as a subtractive correction, a sufficiently large relative phase changes the resolution order itself.

\subsection{Numerical Experiments on Source-Number Detection} \label{experiment:large_theta_num}

In this subsection, following the same experimental framework as in Section~\ref{experiment:zero_theta_num}, we choose the $\ell_0$ and SVT algorithms to validate the theoretical SRLs for source-number detection established in Sections~\ref{subsection:boundary_large_theta}, and validate the slopes in $\log(\mathrm{SRF})$--$\log(\sigma)$ plane.

\begin{figure}[tbp]
\centering
\subfloat[$\ell_0$ \centering]{\includegraphics[height=3.45cm]{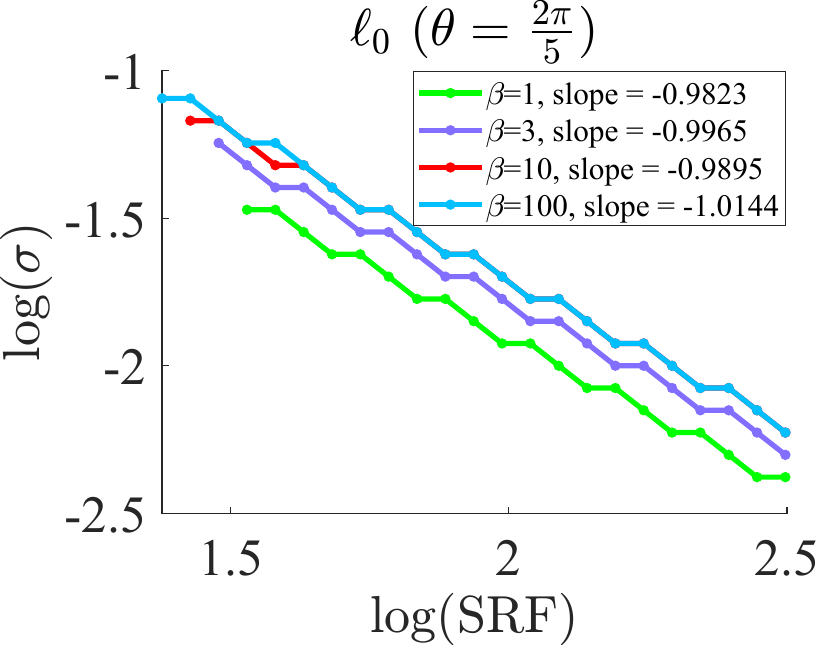}\label{fig:slope_num_L0_theta72}}   \hfil
\subfloat[$\ell_0$ \centering]{\includegraphics[height=3.45cm]{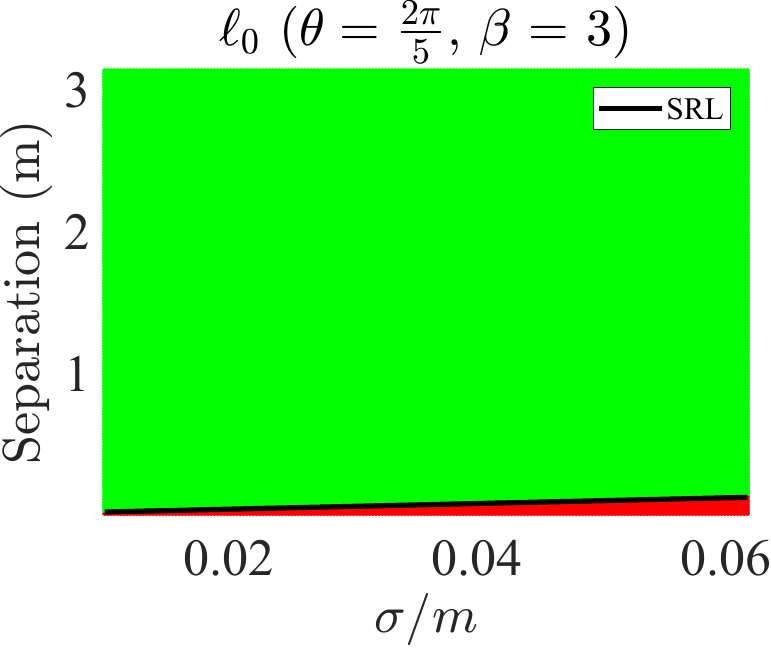}\label{fig:num_SRL_beta3_theta72_L0_only}}  \\

\subfloat[SVT \centering]{\includegraphics[height=3.45cm]{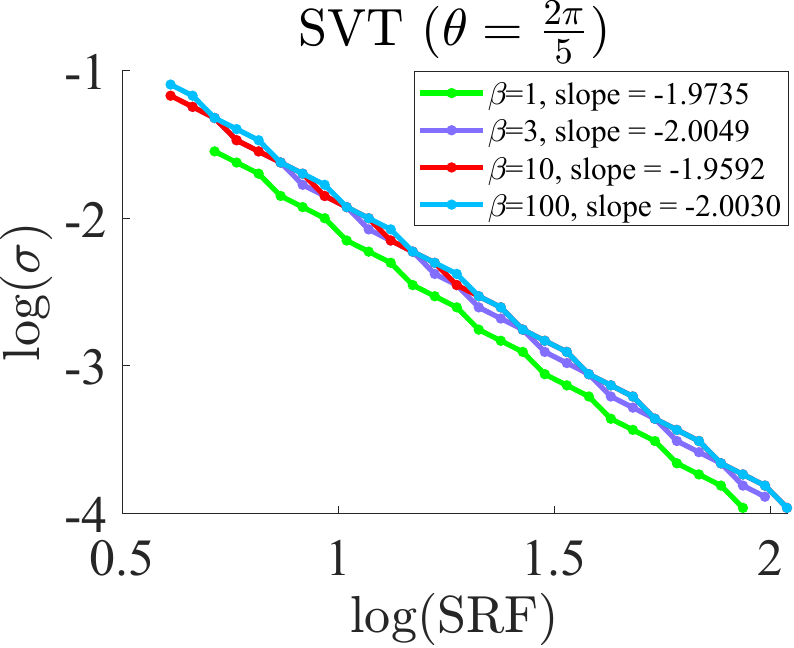}\label{fig:slope_num_Threshold_theta72}} \hfil 
\subfloat[SVT \centering]{\includegraphics[height=3.45cm]{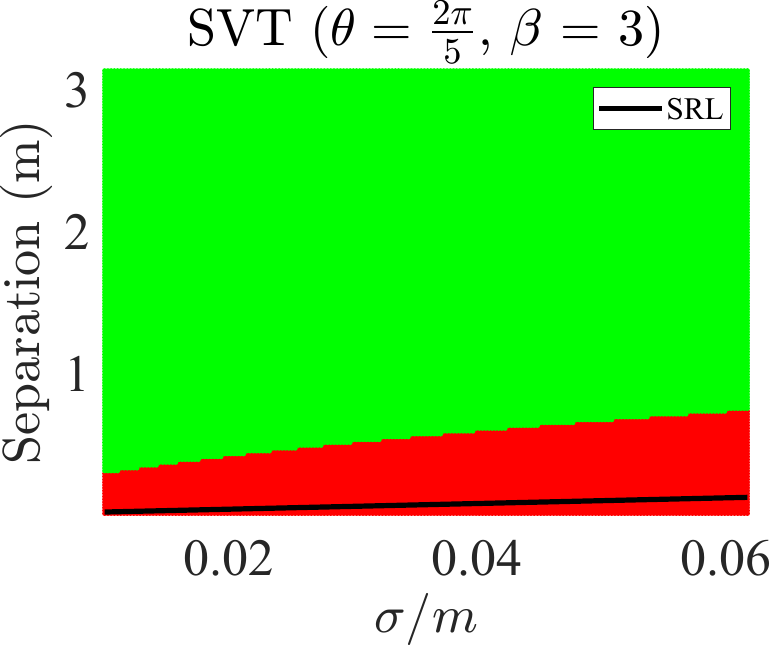}\label{fig:num_SRL_beta3_theta72_MUSIC_only}}  

\caption{ Stable performance of each algorithm for source-number detection under the large-phase regime.  } 
\label{fig:large_theta_num_case}
\end{figure}

Fig.~\ref{fig:large_theta_num_case} examines the source-number detection problem in the large-phase regime. 
From the SRL in \eqref{equ:separation_condition_num_complex_LB_large}, we have
\begin{align*}
    \log(\sigma)>
    \log\!\left[\sin\!\left(\frac{\pi}{2\,\mathrm{SRF}}\right)\right]
    +\log(m),
\end{align*}
which predicts a slope close to $-1$ in the $\log(\mathrm{SRF})$--$\log(\sigma)$ plane for large $\mathrm{SRF}$. 
Here, we choose $\theta=2\pi/5=72^\circ$ as a representative large relative phase. 
The noise levels are restricted to the range satisfying the large-phase condition $\babs{\theta}_{\min} \gg \pare{\frac{\sigma}{m}}^\frac{1}{2}$, and hence the experiment is conducted in a relatively high-SNR regime.

As shown in Fig.~\ref{fig:slope_num_L0_theta72}, the fitted slopes of the $\ell_0$ method are close to $-1$ for all tested values of $\beta$, indicating that it attains the predicted large-phase resolution order. 
The corresponding red--green phase-transition diagram in Fig.~\ref{fig:num_SRL_beta3_theta72_L0_only} further shows that, for $\beta=3$, the empirical boundary remains above but very close to the SRL. 
By contrast, Fig.~\ref{fig:slope_num_Threshold_theta72} shows that the slopes of SVT remain close to $-2$, rather than the predicted value $-1$. 
Hence, SVT does not capture the order-level improvement induced by the large relative phase. 
This is also reflected in Fig.~\ref{fig:num_SRL_beta3_theta72_MUSIC_only}, where the empirical boundary of SVT stays farther away from the SRL. 
Therefore, Fig.~\ref{fig:large_theta_num_case} shows that the large-phase improvement is realized by the $\ell_0$ method in this regime.

% Fig.~\ref{fig:large_theta_num_case} validates the lower bounds in \eqref{equ:separation_condition_num_complex_LB_large} for the source-number detection problem under the large relative phase regimes. In particular, \eqref{equ:separation_condition_num_complex_LB_large} implies
% \begin{align*}
%     \mathrm{log}(\sigma)>\mathrm{log} \bracket{ \sin\pare{ \frac{\pi}{2\,\mathrm{PASRF}}}}+ \mathrm{log}\pare{m},
% \end{align*}
% which predicts a slope close to -1 in the $\mathrm{log}\pare{\mathrm{PASRF}}$-$\mathrm{log}\pare{\sigma}$ plane. We choose $\babs{\theta} = 35\pi/36$ as the representative large relative phase and select the $\sigma$ which are satisfy the condition $\babs{\theta}_{\min} >\frac{4\pi}{3}\sqrt{1+\frac{1.25}{\beta}}\pare{\frac{\sigma}{m}}^\frac{1}{2}$ in Theorem~\ref{thm:complex_number_UB_large_theta}. Fig.~\ref{fig:slope_num_L0_theta72}  shows that the boundary of success-failure transition diagrams for $\ell_0$ method under different $\beta$ are close to -1, consistent with our prediction; Fig.~\ref{fig:num_SRL_beta3_theta72_L0_only} further manifest that empirical resolution do not surpass the CRL lower bound. 
% However, the slopes in Fig.~\ref{fig:slope_num_Threshold_theta72} are close to -2, which proves that in this case, the empirical resolution of the threshold method still has a certain gap compared to the empirical resolution. Fig.~\ref{fig:num_SRL_beta3_theta72_MUSIC_only} further confirms this conclusion, whose empirical resolution is still far away from SRL.

\subsection{Numerical Experiments on Location Estimation}

In this subsection, similarly as the experimental framework in Section~\ref{experiment:zero_theta_locs}, we choose the MUSIC, ESPRIT, ML, and CVX algorithms to validate the theoretical lower bounds for location estimation established in Sections~\ref{subsection:boundary_large_theta},  and validate the slopes in $\log(\mathrm{SRF})$--$\log(\sigma)$ plane.

\begin{figure*}[htbp]
\centering
\subfloat[MUSIC \centering]{\includegraphics[height=3.45cm]{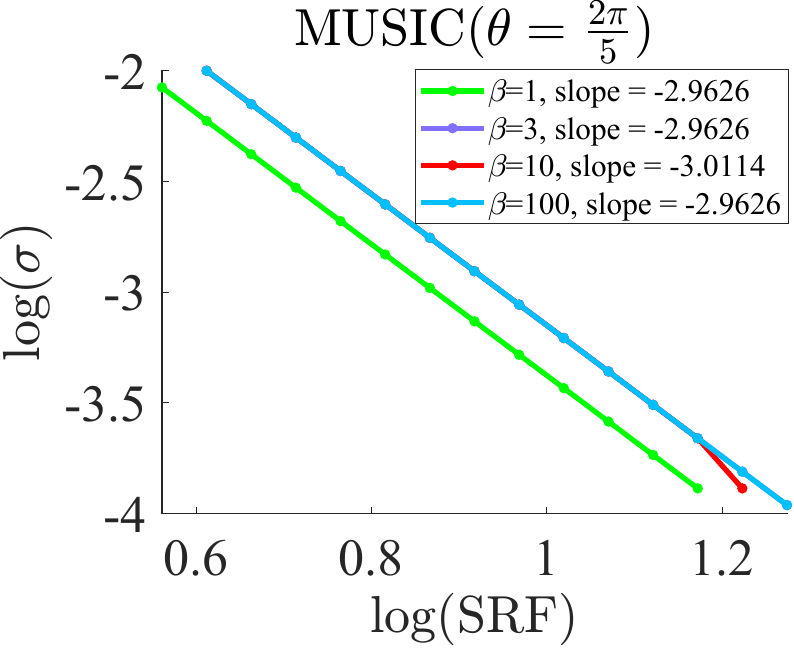}\label{fig:slope_locs_MUSIC_theta72}}  \hfil
\subfloat[ESPRIT \centering]{\includegraphics[height=3.45cm]{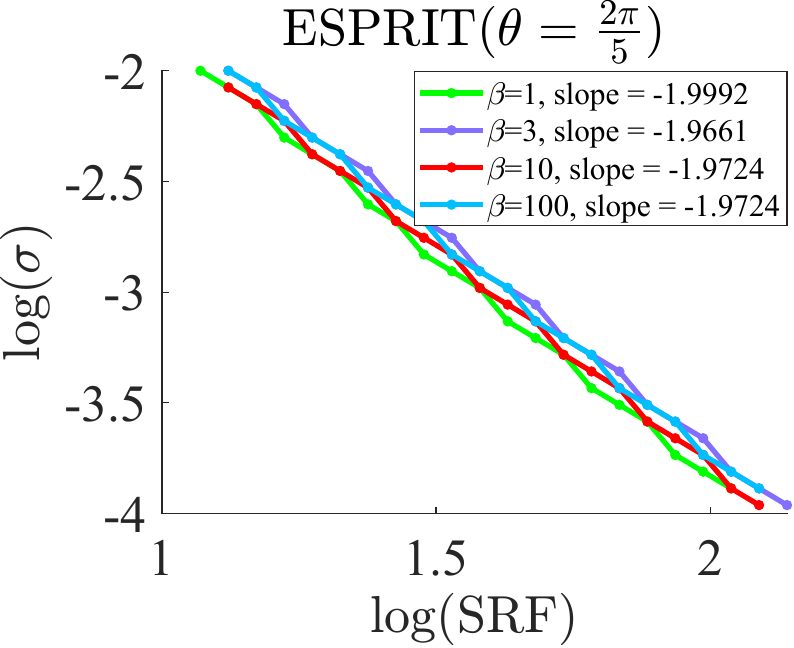}\label{fig:slope_locs_ESPRIT_theta72}}  \hfil
\subfloat[ML \centering]{\includegraphics[height=3.45cm]{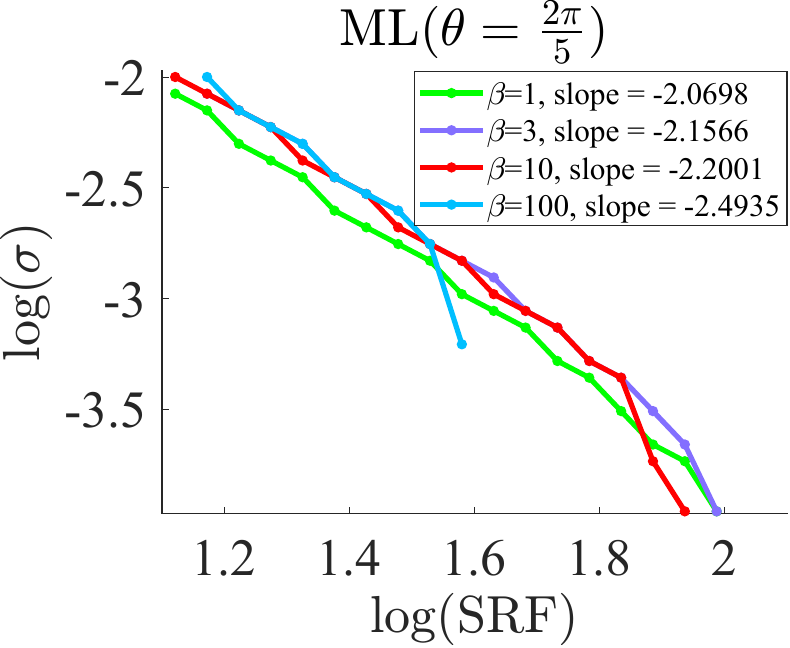}\label{fig:slope_locs_ML_theta72}}  \hfil
\subfloat[CVX \centering]{\includegraphics[height=3.45cm]{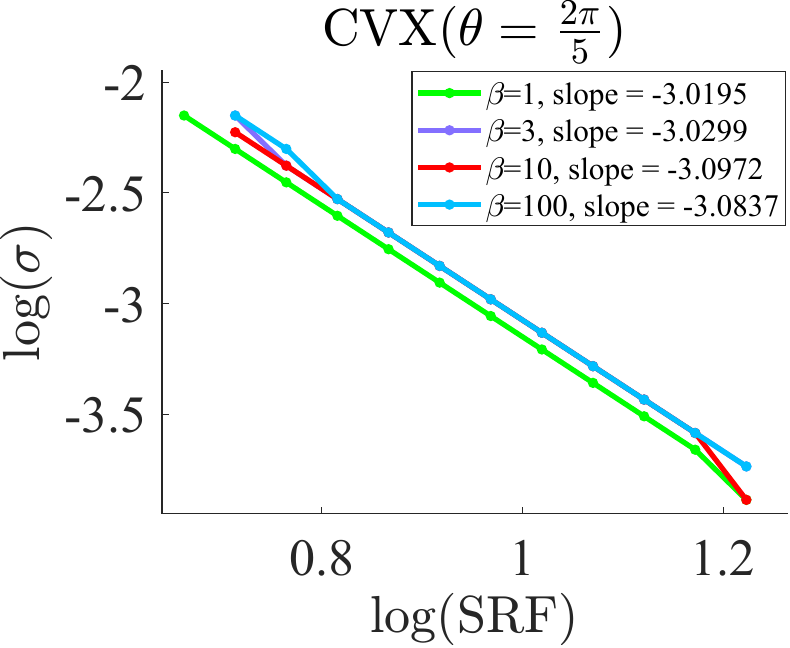}\label{fig:slope_locs_CVX_theta72}}  \\

\subfloat[MUSIC \centering]{\includegraphics[height=3.45cm]{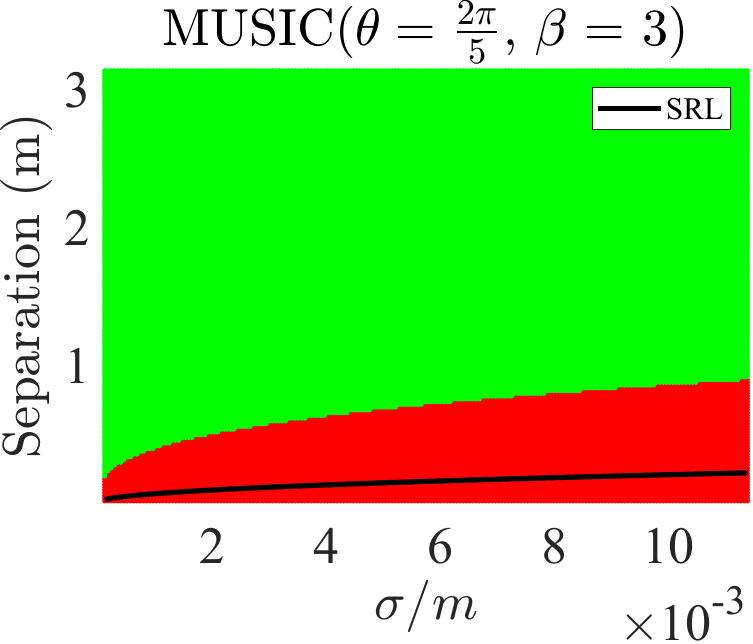}\label{fig:locs_beta3_theta72_MUSIC}}  \hfil
\subfloat[ESPRIT \centering]{\includegraphics[height=3.45cm]{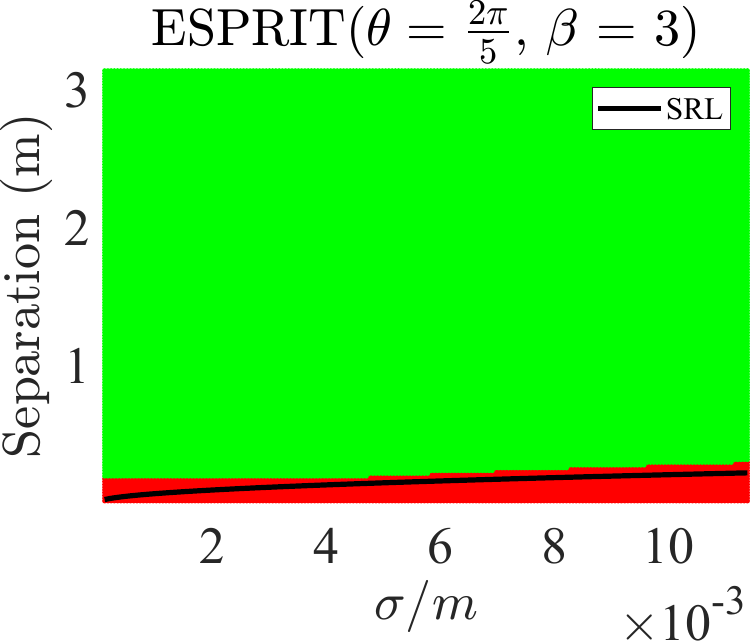}\label{fig:locs_beta3_theta72_ESPRIT}}  \hfil
\subfloat[ML \centering]{\includegraphics[height=3.45cm]{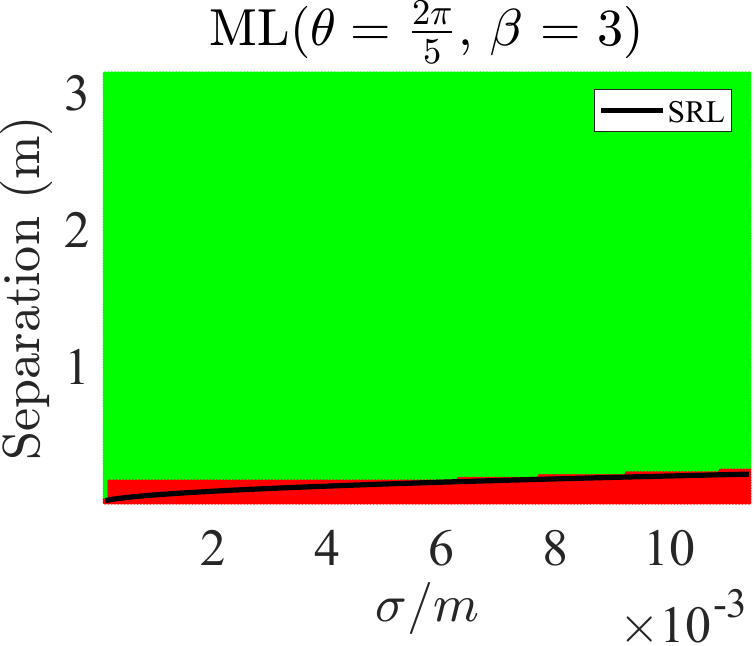}\label{fig:locs_SRL_beta3_theta72_ML}}  \hfil
\subfloat[CVX \centering]{\includegraphics[height=3.45cm]{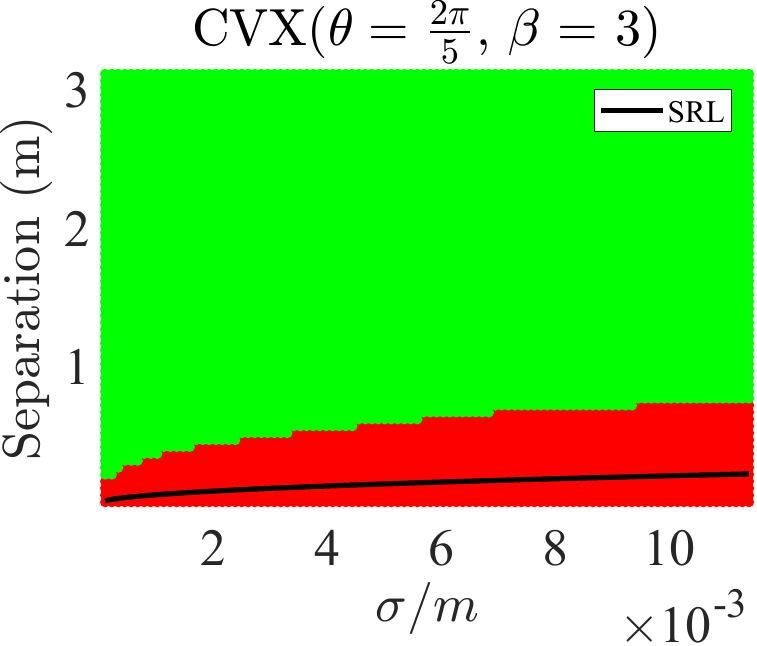}\label{fig:locs_SRL_beta3_theta72_CVX}} 
\caption{ Stable performance of each algorithm for location estimation under the large-phase regime.  } 
\label{fig:large_theta_locs_case}
\end{figure*}

% Fig.~\ref{fig:large_theta_locs_case} validates the lower bounds in \eqref{equ:separation_condition_locs_complex_LB_large} for the location-estimation problem under the large relative phase regimes. In particular, \eqref{equ:separation_condition_locs_complex_LB_large} implies
% \begin{align*}
%     \mathrm{log}(\sigma)>2\,\mathrm{log} \bracket{ \sin\pare{ \frac{\pi}{2\,\mathrm{PASRF}}}}+ \mathrm{log}\pare{m},
% \end{align*}
% which predicts a slope close to -2 in the $\mathrm{log}\pare{\mathrm{PASRF}}$-$\mathrm{log}\pare{\sigma}$ plane. We also choose $\babs{\theta} = 35\pi/36$ as the representative large relative phase and select the $\sigma$ which are satisfy the condition $\babs{\theta}_{\min} >1.75\pi\pare{\frac{\sigma}{m}}^\frac{1}{3}$ in Theorem~\ref{thm:supportrecoveryupperboundcomplex_large_theta}. 
% Fig.~\ref{fig:slope_locs_MUSIC_theta72} shows the slope of MUSIC under different $\beta$ are close to -3, meaning a gap between its resolution and optimal resolution, Fig.\ref{fig:locs_beta3_theta72_MUSIC} further demonstrate this gap.
% Fig.~\ref{fig:slope_locs_ESPRIT_theta72}-\ref{fig:slope_locs_CVX_theta72} shows the slope of ESPRIT, ML and CVX are close to -2, Fig.~\ref{fig:locs_beta3_theta72_ESPRIT}-\ref{fig:locs_SRL_beta3_theta72_CVX} show that their empirical resolutions are still above the CRL lower bound, further confirming the correctness of CRL lower bound itself.

Fig.~\ref{fig:large_theta_locs_case} examines the location estimation problem in the large-phase regime. 
From the SRL in \eqref{equ:separation_condition_locs_complex_LB_large}, we have
\begin{align*}
    \log(\sigma)>
    2\log\!\left[\sin\!\left(\frac{\pi}{2\,\mathrm{SRF}}\right)\right]
    +\log(m),
\end{align*}
which predicts a slope close to $-2$ in the $\log(\mathrm{SRF})$--$\log(\sigma)$ plane for large $\mathrm{SRF}$. 
We also choose $\theta=2\pi/5=72^\circ$ as a representative large relative phase, and restrict the noise levels to the range satisfying the large-phase condition $\babs{\theta}_{\min} \gg \pare{\frac{\sigma}{m}}^\frac{1}{3}$. Similarly, the experiment is conducted in a high-SNR regime.

The slope plots show a clear algorithm-dependent behavior. As shown in Fig.~\ref{fig:slope_locs_MUSIC_theta72}, the fitted slopes of MUSIC remain close to $-3$, indicating that MUSIC also does not capture the order-level improvement predicted in the large-phase regime.  This gap is also reflected in Fig.~\ref{fig:locs_beta3_theta72_MUSIC}, where the empirical boundary stays relatively far above the SRL. 
By contrast, Figs.~\ref{fig:slope_locs_ESPRIT_theta72} and \ref{fig:slope_locs_ML_theta72} show that ESPRIT and ML produce slopes closer to the predicted value $-2$. 
Their red--green phase-transition diagrams in Figs.~\ref{fig:locs_beta3_theta72_ESPRIT} and \ref{fig:locs_SRL_beta3_theta72_ML} also lie closer to the SRL than that of MUSIC. 
Although the measured slopes in Fig.~\ref{fig:slope_locs_CVX_theta72} are roughly close to $-2$, the boundary curves are visibly non-linear and fluctuate across the tested range, suggesting that CVX is less stable for complex-valued sources in this regime.
Fig.~\ref{fig:locs_SRL_beta3_theta72_CVX} further demonstrates this instability with a rough success-failure boundary.

Taken together with the in-phase and near-endpoint experiments, these results indicate that MUSIC does not fully exploit the phase information contained in the complex amplitudes and therefore exhibits relatively conservative behavior. 
By contrast, ESPRIT and ML both make effective use of the complex-valued structure of the measurements, but in different ways. 
ESPRIT exploits the shift-invariance structure, while ML directly fits the full complex-valued measurement model by jointly estimating the locations and complex amplitudes.
Consequently, both methods attain slopes close to the optimal resolution order in the large-phase regime and yield empirical boundaries close to the predicted SRL. 
CVX, in comparison, is less robust for complex-valued sources. 
Hence, among the tested algorithms, ESPRIT and ML provide the most favorable performance in the considered location-estimation experiments.

\section{Conclusion}\label{section:conclusion}

This paper has established a CRL characterization for the two-point super-resolution problem. 
Our analysis shows that, in the complex-amplitude setting, the amplitude ratio and, more importantly, the relative phase can substantially affect resolvability. 
For both source-number detection and location estimation, we derived explicit SRUs and SRLs under different phase regimes, leading to a phase-dependent characterization of stable resolution. 
In the in-phase and near-endpoint phase regimes, the resolution limits obey the same exponent laws as in the classical setting. 
By contrast, in the large-phase regime, the resolution exponents are strictly improved, showing that a sufficiently separated relative phase can enhance the intrinsic stable resolving power of the measurement model. 
In addition, for source-number detection, we identified a critical near-$\pi$ phenomenon in the balanced-amplitude case $\beta=1$, where the scaling law differs from the generic near-endpoint behavior.

The numerical experiments are consistent with the theoretical predictions. 
Across the tested algorithms and parameter regimes, the measured slopes in the $\log(\mathrm{SRF})$/$\log(\mathrm{PASRF})$--$\log(\sigma)$ plane agree with the predicted scaling laws. 
The red--green success--failure diagrams further quantify the distance between each empirical resolution boundary and the corresponding SRL, thereby providing a quantitative benchmark for algorithm selection. 
These results demonstrate that phase information is not merely a nuisance parameter in complex-valued super-resolution, but a key factor that can be exploited to improve stable resolvability.

\section*{Acknowledgments}
The work of Xiaole He and Junglin Wang was supported by the National Natural Science Foundation of China under Grant 62571038 and Grant 62071041. The work of Ping Liu was partially supported by the National Key R\&D Program of China grant number 2024YFA1016000 and the Fundamental Research Funds for the Central Universities grant number 226-2025-00192. 

{\appendices
\section{Preliminaries }

In this section, we collect several preliminary results needed for the subsequent proofs. In particular, we recall the location--amplitude identities and inequalities from \cite{liu2024mathematical}.

\begin{lem}[Location-Amplitude Identities] \label{lem:loc-ampidentity} 
Consider the model
\begin{align*}
\mathscr{F}\bracket{\widehat{\mu}}\pare{\omega}
= \mathscr{F}\bracket{\mu}\pare{\omega} + \mathbf{w}\pare{\omega},
\quad \omega \in \bracket{0,\Omega},
\end{align*}
where $\widehat{\mu}=\sum_{j=1}^{d}\widehat{a}_j\delta_{\widehat{y}_j}$ and
$\mu=\sum_{j=1}^{n}a_j\delta_{y_j}$.
For any fixed $y_t$ and $\widehat{y}_{t'}$, define the set $S_t$ containing all $y_j$'s and
$\widehat{y}_j$'s except $y_t,\widehat{y}_{t'}$ that
\begin{align*}
S_t := \left\{\, y_1,\cdots,y_{t-1},y_{t+1},\cdots,y_n,\,
\widehat{y}_1,\cdots,\widehat{y}_{t'-1},\widehat{y}_{t'+1},\cdots,\widehat{y}_d \,\right\}.
\end{align*}
Let $\#S_t$ be the number of elements in $S_t$ (i.e., $n+d-2$).
Then, for any $0<\omega^*\leqs \frac{\Omega}{\#S_t}$, we have the following relations:
\begin{align}
\widehat{a}_{t'}\prod_{q\in S_t}\pare{e^{i\widehat{y}_{t'}\omega^*}-e^{iq\omega^*}}
- a_t\prod_{q\in S_t}\pare{e^{iy_t\omega^*}-e^{iq\omega^*}}
= \mathbf{w}_1^{\top}\mathbf{v}.
\label{equ:LA_7}
\end{align}
Moreover, for any $0<\omega^*\leqs \frac{\Omega}{\#S_t+1}$, we have
\begin{align}
a_t\prod_{q\in S_t\cup\left\{\widehat{y}_{t'}\right\}}
\pare{e^{iy_t\omega^*}-e^{iq\omega^*}}
=
\pare{e^{i\widehat{y}_{t'}\omega^*}\mathbf{w}_1-\mathbf{w}_2}^{\top}\mathbf{v}.
\label{equ:LA_8}
\end{align}
Here, 
\begin{align*}
&\mathbf{w}_1 = \pare{\mathbf{w}\pare{0},\,\mathbf{w}\pare{\omega^*},\,\cdots,\,\mathbf{w}\pare{\#S_t\omega^*}}^{\top},\\
&\mathbf{w}_2 = (\mathbf{w}\pare{1}, \,\mathbf{w}\pare{\omega^*}, \, \cdots, \,\mathbf{w}\pare{\pare{\#S_t+1}\omega^*})^{\top},
\end{align*}
and the vector $\mathbf{v}$ is given by
\begin{align*}
\mathbf{v} =
\begin{pmatrix}
(-1)^{\#S} \sum_{(q_1,\ldots,q_{\#S}) \in S_{\#S}}
  e^{iq_1\omega^{\ast}}\cdots e^{iq_{\#S}\omega^{\ast}} \\
(-1)^{\#S-1} \sum_{(q_1,\ldots,q_{\#S-1}) \in S_{\#S-1}}
  e^{iq_1\omega^{\ast}}\cdots e^{iq_{\#S-1}\omega^{\ast}} \\
\vdots \\
(-1)\sum_{(q_1)\in S_1} e^{iq_1\omega^{\ast}} \\
1
\end{pmatrix},
\end{align*}
where $S_{t,p} := \left\{ \left\{q_1,\cdots,q_p\right\} \,\middle|\, q_j\in S_t,\ 1\leqs j\leqs p,\  q_{j'}\right.$ and $ q_{j''} $ are different elements in  $S_t$ when   $\left. j'\neq j'' \right\}$,  $p=1,\cdots,\#S_t$.
\end{lem}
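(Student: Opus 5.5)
The statement is a purely algebraic identity, so I would prove it by writing the noise as the difference of two exponential sums and annihilating all nodes in $S_t$ with a polynomial.

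\textbf{Setup.} Sample the model at the equispaced frequencies $\omega=k\omega^*$, $k=0,1,\dots,\#S_t$ (and $k$ up to $\#S_t+1$ for the second identity). All of these lie in $[0,\Omega]$ by the hypothesis $\omega^*\leqs \Omega/\#S_t$, respectively $\omega^*\leqs\Omega/(\#S_t+1)$. Since
\[
\mathbf{w}(k\omega^*)=\sum_{j}\widehat a_j e^{i\widehat y_j k\omega^*}-\sum_j a_j e^{iy_jk\omega^*},
\]
we have
\[
\mathbf{w}_1^\top\mathbf{v}=\sum_{k=0}^{\#S_t} v_k\,\mathbf{w}(k\omega^*),
\]
where $v_k$ denotes the $(k+1)$-th entry of $\mathbf{v}$.

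\textbf{Annihilating polynomial.} Introduce
\[
P(z):=\prod_{q\in S_t}\pare{z-e^{iq\omega^*}}=\sum_{k=0}^{\#S_t} v_k z^k .
\]
By Vieta's formulas, the coefficients are exactly the signed elementary symmetric sums listed in $\mathbf{v}$, with leading coefficient $1$ and constant term $(-1)^{\#S_t}\prod_{q\in S_t} e^{iq\omega^*}$. Substituting the exponential sum gives
\[
\sum_k v_k\,\mathbf{w}(k\omega^*)=\sum_j \widehat a_j P\pare{e^{i\widehat y_j\omega^*}}-\sum_j a_j P\pare{e^{iy_j\omega^*}}.
\]
Every node other than $y_t$ and $\widehat y_{t'}$ is a root of $P$, so only two terms survive:
\[
\widehat a_{t'}P\pare{e^{i\widehat y_{t'}\omega^*}}-a_tP\pare{e^{iy_t\omega^*}}.
\]
This is \eqref{equ:LA_7}.

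\textbf{Second identity.} For \eqref{equ:LA_8}, I would use the shifted samples $\mathbf{w}_2$, whose entries are $\mathbf{w}((k+1)\omega^*)$. The entry listed first in $\mathbf{w}_2$ should read $\mathbf{w}(\omega^*)$; I take this to be the intended definition. The same computation gives
\[
\mathbf{w}_2^\top\mathbf{v}=\sum_j \widehat a_j e^{i\widehat y_j\omega^*}P\pare{e^{i\widehat y_j\omega^*}}-\sum_j a_j e^{iy_j\omega^*}P\pare{e^{iy_j\omega^*}}.
\]
Now form the combination $e^{i\widehat y_{t'}\omega^*}\mathbf{w}_1-\mathbf{w}_2$. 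This is equivalent to applying the polynomial $(e^{i\widehat y_{t'}\omega^*}-z)P(z)$, which additionally kills the node $\widehat y_{t'}$. What remains is
\[
-a_t\pare{e^{i\widehat y_{t'}\omega^*}-e^{iy_t\omega^*}}P\pare{e^{iy_t\omega^*}}=a_t\prod_{q\in S_t\cup\{\widehat y_{t'}\}}\pare{e^{iy_t\omega^*}-e^{iq\omega^*}},
\]
which is \eqref{equ:LA_8}.

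\textbf{Main obstacle.} There is no analytical difficulty here. The only care needed is bookkeeping:
\begin{itemize}
\item matching the ordering of $\mathbf{v}$ (constant term first) with the sample index $k$;
\item the sign convention in the combination for \eqref{equ:LA_8};
\item the indexing typo in $\mathbf{w}_2$ noted above;
\item possible coincident nodes. These are harmless: $P$ still vanishes at each node in $S_t$ because those nodes are roots regardless of multiplicity.
\end{itemize}
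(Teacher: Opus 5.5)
Your proof is correct. Pairing the equispaced samples $\mathbf{w}(k\omega^*)$ with the coefficient vector $\mathbf{v}$ of $P(z)=\prod_{q\in S_t}(z-e^{iq\omega^*})$, and with $(e^{i\widehat{y}_{t'}\omega^*}-z)P(z)$ for the second identity, is the standard Vandermonde/annihilating-polynomial argument behind the result that the paper imports from \cite{liu2024mathematical}; the paper gives no proof of its own. Your reading of $\mathbf{w}_2$ as $\pare{\mathbf{w}(\omega^*),\mathbf{w}(2\omega^*),\dots,\mathbf{w}((\#S_t+1)\omega^*)}^{\top}$ is also the right one, and it is the form the paper actually uses later in the proof of Theorem~\ref{thm:supportrecoveryupperboundcomplex_small_theta}.
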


\begin{lem}[Location--Amplitude Inequalities]
\label{lem:loc-ampinequality}
Consider the model
\begin{align*}
\mathscr{F}\bracket{\widehat{\mu}}\pare{\omega}
= \mathscr{F}\bracket{\mu}\pare{\omega} + \mathbf w\pare{\omega},
\quad \omega \in \bracket{0,\Omega},
\end{align*}
where $\widehat{\mu}=\sum_{j=1}^{d}\widehat{a}_j\delta_{\widehat{y}_j}$ and
$\mu=\sum_{j=1}^{n}a_j\delta_{y_j}$, and assume that
$\babs{\mathbf w\pare{\omega}}<\sigma,\ \omega\in\bracket{0,\Omega}$.
For any fixed $y_t$ and $\widehat{y}_{t'}$, define the set $S_t$ as
\begin{align*}
S_t := \left\{\, y_1,\cdots,y_{t-1},y_{t+1},\cdots,y_n,\,
\widehat{y}_1,\cdots,\widehat{y}_{t'-1},\widehat{y}_{t'+1},\cdots,\widehat{y}_d \,\right\}.
\end{align*}
Let $\#S_t$ be the number of elements in $S_t$ (i.e., $n+d-2$). Then, for any $0<\omega^*\leqs \frac{\Omega}{\#S_t}$, we have
\begin{align}
\left|
\widehat{a}_{t'}\prod_{q\in S_t}\pare{e^{i\widehat{y}_{t'}\omega^*}-e^{iq\omega^*}}
- a_t\prod_{q\in S_t}\pare{e^{iy_t\omega^*}-e^{iq\omega^*}}
\right| < 2^{\#S_t}\sigma.
\label{equ:cor_10}
\end{align}
Moreover, for any $0<\omega^*\leqs \frac{\Omega}{\#S_t+1}$, we have
\begin{align}
\left|
a_t\prod_{q\in S_t\cup\left\{\widehat{y}_{t'}\right\}}
\pare{e^{iy_t\omega^*}-e^{iq\omega^*}}
\right|
< 2^{\#S_t+1}\sigma.
\label{equ:cor_11}
\end{align}
\end{lem}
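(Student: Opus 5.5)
The plan is to derive both inequalities directly from the Location--Amplitude Identities of Lemma~\ref{lem:loc-ampidentity}, bounding the noise term on the right-hand side by H\"older's inequality in the form $|\mathbf{x}^\top\mathbf{v}|\leqs \|\mathbf{x}\|_\infty\|\mathbf{v}\|_1$. Since the left-hand sides of \eqref{equ:cor_10} and \eqref{equ:cor_11} are exactly the left-hand sides of \eqref{equ:LA_7} and \eqref{equ:LA_8}, it suffices to prove
\[
\babs{\mathbf{w}_1^\top\mathbf{v}}<2^{\#S_t}\sigma
\quad\text{and}\quad
\babs{\pare{e^{i\widehat{y}_{t'}\omega^*}\mathbf{w}_1-\mathbf{w}_2}^\top\mathbf{v}}<2^{\#S_t+1}\sigma
\]
under the respective restrictions on $\omega^*$.

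First, I will bound $\|\mathbf{v}\|_1$. The $(\#S_t-p+1)$-th entry of $\mathbf{v}$ is, up to sign, the elementary symmetric polynomial of degree $p$ in the $\#S_t$ unimodular numbers $e^{iq\omega^*}$, $q\in S_t$. It is therefore a sum of $\binom{\#S_t}{p}$ terms of modulus one, and so has modulus at most $\binom{\#S_t}{p}$. Summing over $p=0,\dots,\#S_t$ (the last entry $1$ corresponds to $p=0$) gives $\|\mathbf{v}\|_1\leqs 2^{\#S_t}$.

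Next, I will check that every entry of the noise vectors is strictly smaller than $\sigma$ (respectively $2\sigma$). The entries of $\mathbf{w}_1$ are samples $\mathbf{w}(k\omega^*)$ with $0\leqs k\leqs \#S_t$. When $\omega^*\leqs \Omega/\#S_t$, all sampling points lie in $[0,\Omega]$, so each entry has modulus $<\sigma$. For the second identity, the entries of $\mathbf{w}_2$ are samples at $k\omega^*$ with $k$ up to $\#S_t+1$. Here the stronger restriction $\omega^*\leqs \Omega/(\#S_t+1)$ keeps these points in $[0,\Omega]$. Each entry of $e^{i\widehat{y}_{t'}\omega^*}\mathbf{w}_1-\mathbf{w}_2$ is then the difference of a unimodular multiple of one noise sample and another noise sample, so the triangle inequality bounds its modulus by something $<2\sigma$.

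Combining the two bounds yields $\babs{\mathbf{w}_1^\top\mathbf{v}}<\sigma\cdot 2^{\#S_t}$ and $\babs{\pare{e^{i\widehat{y}_{t'}\omega^*}\mathbf{w}_1-\mathbf{w}_2}^\top\mathbf{v}}<2\sigma\cdot 2^{\#S_t}$, which are \eqref{equ:cor_10} and \eqref{equ:cor_11}.

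The only delicate point is the strictness of the inequalities. I will argue it via $\babs{\sum_k x_k v_k}\leqs \sum_k|x_k||v_k|$. Every $|x_k|$ is strictly less than $\sigma$ (resp.\ $2\sigma$), and at least one $v_k$ is nonzero, namely the last entry, which equals $1$. Hence $\sum_k|x_k||v_k|<\sigma\|\mathbf{v}\|_1$ (resp.\ $2\sigma\|\mathbf{v}\|_1$) strictly, so the strict bounds follow. I do not expect any genuine obstacle beyond matching the sampling indices to the stated ranges of $\omega^*$.
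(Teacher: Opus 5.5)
Your proposal is correct and follows the standard argument behind this result, which the paper does not prove itself but recalls from \cite{liu2024mathematical}: apply the identities of Lemma~\ref{lem:loc-ampidentity}, bound $\|\mathbf{v}\|_1\leqs\sum_{p=0}^{\#S_t}\binom{\#S_t}{p}=2^{\#S_t}$, and bound each noise entry by $\sigma$ (resp.\ $2\sigma$), with the restrictions on $\omega^*$ keeping all sample points in $[0,\Omega]$. You also read $\mathbf{w}_2$ correctly as $(\mathbf{w}(\omega^*),\dots,\mathbf{w}((\#S_t+1)\omega^*))^\top$, which fixes the paper's typo $\mathbf{w}(1)$, and your strictness argument via the final entry of $\mathbf{v}$ being $1$ is sound.
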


% The detailed derivations for Lemma~\ref{lem:loc-ampidentity} and \ref{lem:loc-ampinequality} are shown in \cite{liu2024mathematical}.

\begin{lem}\label{lem:sincosinequality}
For $x\geqs 0$ and $k\in \mathbb N$, it holds that 
\begin{align}
&\sin x \geqs \sum_{j=0}^{2k-1} \frac{(-1)^{j}}{(2j+1)!} x^{2j+1}; \label{equ:sincosineq1}\\
&\sin x \leqs \sum_{j=0}^{2k} \frac{(-1)^{j}}{(2j+1)!} x^{2j+1};\label{equ:sincosineq2}\\
&\cos x \geqs \sum_{j=0}^{2k-1} \frac{(-1)^{j}}{(2j)!} x^{2j};\label{equ:sincosineq3}\\
&\cos x \leqs \sum_{j=0}^{2k} \frac{(-1)^{j}}{(2j)!} x^{2j}. \label{equ:sincosineq4}
\end{align}
\end{lem}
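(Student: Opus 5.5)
We prove all four inequalities together by repeatedly integrating the trivial bound $\cos t\leqs 1$ over $[0,x]$. Integration preserves inequalities between continuous functions as long as $x\geqs 0$, and each integration turns a truncated cosine series into a truncated sine series, or vice versa, shifting the index by one. We use integration rather than Taylor's theorem with Lagrange remainder, because there the remainder involves $\pm\sin\xi$ or $\pm\cos\xi$, whose sign is not controlled.

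Introduce the partial sums
\[
s_N(x):=\sum_{j=0}^{N}\frac{(-1)^j}{(2j+1)!}x^{2j+1},\qquad c_N(x):=\sum_{j=0}^{N}\frac{(-1)^j}{(2j)!}x^{2j}.
\]
They satisfy $\int_0^x c_N(t)\,dt=s_N(x)$ and $\int_0^x s_N(t)\,dt=c_0(x)-c_{N+1}(x)=1-c_{N+1}(x)$. The second identity holds because $\int_0^x \frac{(-1)^j t^{2j+1}}{(2j+1)!}dt=-\frac{(-1)^{j+1}x^{2j+2}}{(2j+2)!}$. Also $\int_0^x\cos t\,dt=\sin x$ and $\int_0^x \sin t\,dt=1-\cos x$.

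The induction is on $k\geqs 0$, with the following hypothesis:
\[
\cos x\leqs c_{2k}(x),\quad \sin x\leqs s_{2k}(x),\quad \cos x\geqs c_{2k+1}(x),\quad \sin x\geqs s_{2k+1}(x)\qquad (x\geqs 0).
\]
This yields exactly \eqref{equ:sincosineq1}--\eqref{equ:sincosineq4}, after relabeling $k\mapsto k-1$ in the lower bounds.

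The base step is $\cos x\leqs 1=c_0(x)$. From any inequality $\cos t\leqs c_N(t)$ on $[0,x]$, integration gives $\sin x\leqs s_N(x)$. Integrating once more gives $1-\cos x\leqs 1-c_{N+1}(x)$, i.e.\ $\cos x\geqs c_{N+1}(x)$, which reverses the direction. Integrating again gives $\sin x\geqs s_{N+1}(x)$. The next integration gives $1-\cos x\geqs 1-c_{N+2}(x)$, i.e.\ $\cos x\leqs c_{N+2}(x)$, which closes the cycle with $N$ replaced by $N+2$. Starting at $N=0$, these four steps produce the bounds for $k=0$ and feed the bounds for $k+1$, which completes the induction.

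The only delicate point is bookkeeping. One must check that the sign flip in $1-\cos x$ versus $1-c_{N+1}$ lines up with the parity of the top index: upper bounds end on even $j$ (positive last term) and lower bounds end on odd $j$ (negative last term). The restriction $x\geqs 0$ must also be used at every integration step. Beyond this, everything is routine.
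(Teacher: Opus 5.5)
Your proof is correct and follows the same route as the paper: an induction that repeatedly integrates the previous bound over $[0,x]$, starting from $\cos t\leqs 1$. Each integration turns a cosine bound into a sine bound in the same direction, and turns a sine bound into a cosine bound in the opposite direction. Your version is more explicit than the paper's sketch, which carries out only one step (from \eqref{equ:sincosineq2} at $k=s$ to \eqref{equ:sincosineq3} at $k=s+1$) and leaves the rest to ``the same manner.''
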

\begin{proof}
First, we have $\sin x \leqs x, \cos x \leqs 1$. Now suppose (\ref{equ:sincosineq2}) holds for $k=s$, then by  
\begin{align*}
    &\int_{0}^{x}\sin t dt = -\cos t|_{0}^x = 1-\cos x \\ \leqs & \int_{0}^{x} \sum_{j=0}^{2s} \frac{(-1)^{j}}{(2j+1)!} t^{2j+1} dt = \sum_{j=0}^{2s} \frac{(-1)^{j}}{(2j+2)!} x^{2j+2},
\end{align*}
which shows that (\ref{equ:sincosineq3}) holds for $k=s+1$. In the same manner, one can prove the four inequalities by mathematical induction. 
\end{proof}

\begin{remark}
    Applying Lemma~\ref{lem:sincosinequality} on $[0,\pi/2]$ yields the following tight bound:
\begin{align}
    x-\frac{x^3}{6}\leqs \sin x \leqs x,\quad
1-\frac{x^2}{2}\leqs \cos x \leqs 1.
\label{equ:scale_sin_cos}
\end{align}
\end{remark}

\section{Proofs of Results in Section \ref{section:zero_theta}}\label{proof:zero_theta}

\subsection{Proof of Theorem~\ref{thm:numberresolutionpositive}}\label{proof:num_theta0_UB}

First, we shall prove \eqref{equ:num_positive_ub_cond}.
Given the measurement $\mathbf{Y}$ generated by $\mu= m \delta_{y_1}+\beta  m \delta_{y_2}$ with $y_1,y_2\in B_{\frac{\pi}{2\Omega}}(0)$, $\beta\geqs 1$, and $m>0$. Suppose that $\widehat{\mu}=\widehat{a}\,\delta_{\widehat{y}}$ is a $\sigma$-admissible measure of\, $\mathbf{Y}$. The Definition~\ref{def:sigma_admissible} and the model \eqref{equ:modelsetting0} imply  that $\mu$ and $\widehat \mu$ satisfy
\begin{align}
\mathscr{F}[\widehat{\mu}](\omega)=\mathscr{F}[\mu](\omega)+\mathbf{W}_1(\omega),
\quad \omega\in[0,2\Omega],
\label{proof-equ:imagemodeling2sigma}
\end{align}
for some $\mathbf{W}_1$
with $\babs{\mathbf{W}_1}<2\sigma$, $\omega\in[0,2\Omega]$. Define $S_1$ and $S_2$ as, respectively,
\begin{align*}
    S_1=\{y_2\},\quad S_2 = \{y_1\}.
\end{align*}
Then $\# S_1 = \# S_2=1$. For any $0<\omega^\ast\leqs \frac{2\Omega}{\# S_t+1}=\Omega$, applying \eqref{equ:cor_11} to \eqref{proof-equ:imagemodeling2sigma} yields
\begin{align}
\babs{\,(e^{i\omega^\ast y_1}-e^{i\omega^\ast y_2})\,(e^{i\omega^\ast y_1}-e^{i\omega^\ast \widehat{y}})}
&<\frac{8\sigma}{m},
\label{equ:ineq_t1}\\
\babs{\,(e^{i\omega^\ast y_2}-e^{i\omega^\ast y_1})\,(e^{i\omega^\ast y_2}-e^{i\omega^\ast \widehat{y}})}
&<\frac{8\sigma}{\beta m}.
\label{equ:ineq_t2}
\end{align}
Define $d:= \babs{y_1-y_2}\in(0,\pi/\Omega)$ and
\begin{align}
\Delta \triangleq \babs{e^{i\omega^\ast y_1}-e^{i\omega^\ast y_2}}=2\sin\!\left(\frac{d \omega^\ast}{2}\right).
\end{align}
Summing \eqref{equ:ineq_t1}--\eqref{equ:ineq_t2} yields
\begin{align}
\Delta\pare{\babs{e^{i\omega^\ast y_1}-e^{i\omega^\ast \widehat{y}}}
+\babs{e^{i\omega^\ast y_2}-e^{i\omega^\ast \widehat{y}}}}
<\frac{8(\beta+1)}{\beta}\frac{\sigma}{m}.
\label{equ:sum_delta}
\end{align}
% Notably, when \eqref{equ:sum_delta} holds, inequalities \eqref{equ:ineq_t1} and \eqref{equ:ineq_t2} can be satisfied simultaneously.
By the triangle inequality,
\begin{align}
\babs{e^{i\omega^\ast y_1}-e^{i\omega^\ast y_2}}
\leqs
\babs{e^{i\omega^\ast y_1}-e^{i\omega^\ast \widehat{y}}}
+
\babs{e^{i\omega^\ast \widehat{y}}-e^{i\omega^\ast y_2}},
\end{align}
the term in parentheses \eqref{equ:sum_delta} is lower bounded by $\Delta$. Therefore,
\begin{align}
\Delta^2 = 4\sin^2\!\left(\frac{d\omega^\ast}{2}\right)
<\frac{8(\beta+1)}{\beta}\frac{\sigma}{m}.
\end{align}
Since $0<\omega^\ast\leqs \Omega$ and  $d\leqs\frac{\pi}{\Omega}$, we obtain the following constraint on the separation  that ensures \eqref{equ:sum_delta} holds
\begin{align}
|y_1-y_2|
<\frac{2}{\Omega}\arcsin\!\left(\left(\frac{2(\beta+1)\sigma}{\beta m}\right)^{\frac{1}{2}}\right)
\label{equ:d_minofnumdec_upperbound}
\end{align}
under the condition $\frac{\sigma}{m}< \frac{\beta}{2(\beta+1)}$.
This contradicts the separation condition in \eqref{equ:num_positive_ub_cond}. Hence no positive
$\sigma$-admissible measure supported on one point can exist whenever \eqref{equ:num_positive_ub_cond} holds.

Next, we shall prove \eqref{equ:num_positive_lb_cond}.
By translation invariance of the measurement model \eqref{proof-equ:imagemodeling2sigma}, we shift the coordinate system so that $y_1=-d/2$ and $y_2=d/2$. 
In this setting, the measurement $\mathbf{Y}$ is generated by
\begin{align*}
    \mu = m\delta_{-\frac{d}{2}}+\beta m\delta_{\frac{d}{2}},\quad d\in\pare{0,\frac{\pi}{\Omega}}, \   \beta\geqs1, \ m>0,
\end{align*}
and we consider the one-support positive measure 
\begin{align*}
    \widehat{\mu}=(\beta+1)m\,\delta_{\frac{\beta-1}{\beta+1}\frac{d}{2}}.
\end{align*}
Therefore, consider the identity
\begin{align*}
e^{-iu}-1=-iu\int_{0}^{1} e^{-iut}\,dt,\quad u\in\mathbb{R},
\end{align*}
and the inequality
\begin{align*}
\babs{e^{iu}-e^{iv}}
=2\babs{\sin\!\pare{\frac{u-v}{2}}}
\leqs |u-v|,
\quad u,v\in\mathbb{R},
\end{align*}
for $|\omega|\leqs \Omega$, we obtain
\begin{align}
& \babs{\mathscr{F}[\widehat{\mu}](\omega)-\mathscr{F}[\mu](\omega)} \notag \\
% =&\babs{(\beta+1)m e^{i\frac{\beta-1}{\beta+1}\frac{d}{2}\omega}-m e^{-i\frac{d}{2}\omega}-\beta m e^{i\frac{d}{2}\omega} }\notag\\
% =&m\babs{e^{-i2\frac{d}{2}\omega}-(\beta+1)e^{-i\frac{d\omega}{\beta+1}}+\beta}\notag\\
=& m\babs{\pare{e^{-id\omega}-1}-(\beta+1)\pare{e^{-i\frac{d\omega}{\beta+1}}-1}}\notag \\
=& m\cdot \babs{d\omega}\int_{0}^{1}
\babs{e^{-id\omega t}-e^{-i\frac{d\omega}{\beta+1}t}}\,dt \notag\\
\leqs & m\cdot \babs{d\omega}\int_{0}^{1}
\babs{d\omega t-\frac{d\omega}{\beta+1}t}\,dt  =\frac{\beta m}{2(\beta+1)}d^2\omega^2 .
\label{equ:delta_form_refined}
\end{align}
Consequently, if
\begin{align*}
\babs{y_1-y_2}  < \frac{2}{\Omega}\pare{\frac{\beta+1}{\beta}\frac{\sigma}{m}}^{\frac{1}{2}},
\end{align*}
then $\babs{\mathscr{F}[\widehat{\mu}](\omega)-\mathscr{F}[\mu](\omega)}< 2\sigma$ for $\omega\in [-\Omega, \Omega]$. Hence, there exists a positive $\sigma$-admissible measure of a certain measurement $\mathbf{Y}$ with only one support whenever \eqref{equ:num_positive_lb_cond} holds.

\subsection{Proof of Theorem~\ref{thm:supportrecoveryupperboundpositive}}\label{proof:locs_theta0_UB}
First, we shall prove \eqref{equ:separationforlocationrecovery-thm}. Given the measurement $\mathbf{Y}$ generated by  $\mu = m\delta_{y_1}+\beta m\delta_{y_2}$ with $y_1,y_2\in B_{\frac{\pi}{2\Omega}}(0)$, $\beta\geqs 1$, and $m>0$.
Suppose that $\widehat{\mu}=\widehat{a}_1\delta_{\widehat{y}_1}+\widehat{a}_2\delta_{\widehat{y}_2}$ is a positive $\sigma\mbox{-}$admissible measure of\, $\mathbf{Y}$.
The Definition~\ref{def:sigma_admissible} and the model \eqref{equ:modelsetting0} imply  that $\mu$ and $\widehat \mu$ satisfy
\begin{align}
\mathscr{F}[\widehat{\mu}](\omega)=\mathscr{F}[\mu](\omega)+\mathbf{W}_1(\omega),
\quad \omega\in[0,2\Omega],
\label{equ:imagemodeling2sigma_locs_theta0}
\end{align}
for some $\mathbf{W}_1$
with $\babs{\mathbf{W}_1\pare{\omega}}<2\sigma$, $\omega\in[0,2\Omega]$. Define $S_1$ as $S_1=\{y_2,\widehat{y}_2\}$. For any $0<\omega^\ast\leqs \frac{2\Omega}{\# S_1+1}=\frac{2\Omega}{3}$, applying \eqref{equ:cor_11} to \eqref{equ:imagemodeling2sigma_locs_theta0} we obtain that
\begin{align}
\prod_{k=1}^{2}\left|\pare{e^{iy_1\omega^\ast}-e^{i\widehat{y}_k\omega^\ast}}
\left(e^{iy_1\omega^\ast}-e^{iy_2\omega^\ast}\right)
\right|
< \frac{2^4\sigma}{m}.
\label{equ:locampineuq1}
\end{align}
Denote the separation by $d:=\babs{y_1-y_2}$. Assume toward a contradiction that $\babs{\widehat{y}_1-y_1}\geqs \frac{d}{2}$. Then also
$\babs{\widehat{y}_2-y_1}\geqs \babs{\widehat{y}_1-y_1}\geqs \frac{d}{2}$ as $\hat y_1$ is assumed to be closer to $y_1$.
Using the inequality 
\begin{align}
    \babs{e^{ix} -e^{iy}}=2\sin\frac{\babs{x-y}}{2}\geqs \frac{2}{\pi}\babs{x-y},
    \label{equ:relationship_exp_sin}
\end{align}
we obtain the lower bound of left hand side (LHS) in (\ref{equ:locampineuq1}): 
\begin{align*}
    &\babs{ \left( e^{i y_{1} \omega^\ast} - e^{i y_{2} \omega^\ast} \right)
    \left( e^{i y_{1} \omega^\ast} - e^{i \widehat{y}_{1} \omega^\ast} \right)
    \left( e^{i y_{1} \omega^\ast} - e^{i \widehat{y}_{2} \omega^\ast} \right)}\\
    =&8 \babs{\sin \frac{\babs{y_1-y_2}\omega^\ast}{2}\sin \frac{\babs{y_1-\widehat y_1}\omega^\ast}{2}\sin \frac{\babs{y_1-\widehat y_2}\omega^\ast}{2}}\\
    \geqs& 8 \babs{\sin \frac{d\omega^\ast}{2}\sin \frac{d\omega^\ast}{4}\sin \frac{d\omega^\ast}{4}}.
\end{align*}
Combining this with  (\ref{equ:locampineuq1}) obtains
\begin{align}
    \sin^2 \left( \frac{d\omega^\ast}{4}\right) \sin\left(\frac{d\omega^\ast}{2}\right) < \frac{2\sigma}{m}.
\label{equ:locampineuq3}
\end{align} 
Since $0<\omega^\ast\leqs\frac{2\Omega}{3}$, applying
\(
\sin \frac{x}{4} = \frac{\sin\frac{x}{2}}{2\cos\frac{x}{4} }\geqs \frac{1}{2}\sin\frac{x}{2}
\) into \eqref{equ:locampineuq3} yields
\begin{align}
    d < \frac{3}{\Omega}\arcsin\left(2\left(\frac{\sigma}{ m } \right)^\frac{1}{3} \right).
    \label{equ:locampineuq5}
\end{align}
Therefore, under the separation condition \eqref{equ:separationforlocationrecovery-thm}, the contradiction shows that $\babs{\widehat{y}_1-y_1}<\frac{d}{2}$.  Define $S_2 = \{y_1,\widehat{y}_1\}$ and applying the same argument at $S_2$ yields
$\babs{\widehat{y}_2-y_2}<\frac{d}{2}$ whenever
\[
d\geqs \frac{3}{\Omega}\arcsin\!\pare{2\pare{\frac{\sigma}{\beta m}}^{\frac{1}{3}}}.
\]
Since $\beta\geqs 1$, the condition \eqref{equ:separationforlocationrecovery-thm} implies the above as well. Hence the condition \eqref{equ:separationforlocationrecovery-thm} ensures that $\babs{\widehat{y}_j-y_j}<\frac{d}{2}$ for $j=1,2$, which proves the $\frac{d}{2}$-neighborhood statement.

Next, we shall prove \eqref{equ:condition_locs_positive}. Note that for the general source locations $y_1,y_2$, after shifting them by $x$, we obtain
\begin{align}
     \mathscr{F}[\hat{\mu}](\omega)e^{i x \omega} = \mathscr{F}[\mu](\omega)e^{i x \omega}+ \mathbf{W}_1(\omega)e^{i x \omega}, \quad \left| \mathbf{W}_1(\omega)e^{i x \omega}\right| < 2\sigma, 
     \label{equ:shifting_invariance}
 \end{align}
with $\omega \in [-\Omega,\Omega]$, we can transform the problem into the case when $y_1=-y_2$. Therefore, we assume $y_1=-d/2$  and $y_2=d/2$. 
In this setting, the measurement $\mathbf{Y}$ is generated by
\begin{align*}
    \mu = m\delta_{-\frac{d}{2}}+\beta m\delta_{\frac{d}{2}},\quad d\in\pare{0,\frac{\pi}{\Omega}},\ \beta\geqs1, \ m>0.
\end{align*}

Define $s:=\sqrt{\beta+1}\geqs\sqrt{2}$, and we consider the two-support positive measure 
\[
\widehat{\mu}
:=\frac{1}{2}m s(s+1)\,\delta_{\frac{d}{2}-\frac{d}{s}}
+\frac{1}{2}m s(s-1)\,\delta_{\frac{d}{2}+\frac{d}{s}} .
\]
A direct computation gives, for $\babs \omega \leqs\Omega$,
\begin{align*}
&\bracket{\mathscr{F}[\widehat\mu](\omega)-\mathscr{F}[\mu](\omega)} \\
= &m\bracket{ e^{-i \phi} -\frac{1}{2}s(s+1)e^{i (\phi - \frac{2\phi}{s})}
+(s^2-1)e^{i \phi} - \frac{1}{2}s(s-1)e^{i (\phi + \frac{2\phi}{s})}} \\
=& 2m \sqrt{\pare{ s^2\sin^2\frac{\phi}{s}-\sin^2 \phi}^2
+\pare{s\sin \frac{\phi}{s} \cos\frac{\phi}{s} - \sin \phi\cos\phi}^2}.
\end{align*}
where $\phi:=\frac{d\omega}{2}$. Since $d\in \pare{0,\frac{\pi}{\Omega}}$ and $|\omega|\leqs\Omega$, we have
\[
0\leqs |\phi|\leqs \frac{d\Omega}{2}\leqs \frac{\pi}{2},
\quad\text{and}\quad 0\leqs \frac{\phi}{s}\leqs \frac{\pi}{2}.
\]
Applying Lemma.\ref{lem:sincosinequality} yields
\begin{align*}
&\pare{ s^2\sin^2\frac{\phi}{s}-\sin^2\phi}^2+\pare{s\sin\frac{\phi}{s}\cos\frac{\phi}{s}-\sin\phi\cos\phi}^2 \\
=&\pare{s\sin\frac{\phi}{s}+\sin\phi}^2\pare{s\sin\frac{\phi}{s}-\sin\phi}^2 \\
&+\pare{s\sin\frac{\phi}{s}\cos\frac{\phi}{s}-\sin\phi\cos\phi}^2\\
\leqs &(2\phi)^2\pare{\frac{\phi^3}{6}}^2 + \pare{\phi-\pare{\phi-\frac{\phi^3}{6}}\pare{1-\frac{\phi^2}{2}}}^2 \\
=& \frac{4}{9}\phi^6+\frac{1}{144}\phi^{10}.
\end{align*}
Consequently, for $|\omega|\leqs\Omega$,
\[
\babs{\mathscr{F}[\widehat\mu](\omega)-\mathscr{F}[\mu](\omega)}
\leqs 2m\pare{\frac{d\Omega}{2}}^3\sqrt{\frac{4}{9}+\frac{1}{144}\pare{\frac{\pi}{2}}^4}.
\]
Thus we obtain $
\babs{\mathcal F[\widehat\mu](\omega)-\mathcal F[\mu](\omega)} <2\sigma$ whenever \eqref{equ:condition_locs_positive} holds, which indicates that $\widehat \mu$ is $\sigma$-admissible of $\mu$ with two supports. Moreover, the support locations of $\widehat{\mu}$ are
\[
\widehat y_1=\frac{d}{2}-\frac{d}{s},\quad \widehat y_2=\frac{d}{2}+\frac{d}{s}.
\]
If $s> 2$, then $\babs{\widehat y_1-y_1}=d-\frac{d}{s}> \frac{d}{2}$ and $\widehat y_2>0$, no relabeling can place a spike within $d/2$ of $y_1$. If $\sqrt2\leqs s\leqs2$, then $|\widehat y_2-y_2|=\frac{d}{s}\geqs\frac{d}{2}$,  no relabeling can place a spike within $d/2$ of $y_2$. Therefore, $\widehat{\mu}$ does not lie in the $\frac{d}{2}$-neighborhood of $\mu$.

\section{Proofs of Results in Section \ref{section:small_theta}} \label{proof:small_theta}

\subsection{Proof of Theorem~\ref{thm:twopointresolution_complex_number_UB}} \label{proof:num_theta_small_UB}
First, we shall prove \eqref{equ:sepacondinumbercomplex_smalltheta}. Given the measurement $\mathbf{Y}$ from the discrete measure  $\mu= m e^{i\theta_1} \delta_{y_1}+\beta m e^{i\theta_2} \delta_{y_2}$ with
$y_1,y_2\in B_{\frac{\pi}{2\Omega}}(0)$, $\theta_1, \theta_2\in \pare{-\pi,\pi}$, $\beta \geqs 1$, and $m>0$. Suppose that $\widehat{\mu}=\widehat{a}\,\delta_{\widehat{y}}$ is a $\sigma$-admissible measure of\, $\mathbf{Y}$. The Definition~\ref{def:sigma_admissible} and model \eqref{equ:modelsetting0} imply that $\mu$ and $\widehat \mu$ satisfy
\begin{align}
\mathscr{F}[\widehat{\mu}](\omega)=\mathscr{F}[\mu](\omega)+\mathbf{W}_1(\omega),
\quad \omega\in[0,2\Omega],
\label{equ:imagemodel2sigma_num_smalltheta}
\end{align}
for some $\mathbf{W}_1$
with $\babs{\mathbf{W}_1}<2\sigma$, $\omega\in[0,2\Omega]$. Define $S_1=\{y_2\}$ and $S_2 = \{y_1\}$, so that $\# S_1 = \# S_2=1$.   For any $0<\omega^\ast\leqs \frac{2\Omega}{\# S_t+1}=\Omega$, applying \eqref{equ:LA_8} into \eqref{equ:imagemodel2sigma_num_smalltheta} at $\omega=0$ and $\omega=\omega^\ast$  yields
\[
a_1e^{iy_1\omega^\ast}+a_2e^{iy_2\omega^\ast}+\mathbf{W}_1(\omega^\ast)
=\pare{a_1+a_2+\mathbf{W}_1(0)}e^{i\widehat y\omega^\ast}.
\]
Taking squared magnitudes  removes the dependence on $\widehat y$, giving:
\begin{align*}
\babs{a_1e^{iy_1\omega^\ast}+a_2e^{iy_2\omega^\ast}+\mathbf{W}_1(\omega^\ast)}^2
=\babs{a_1+a_2+\mathbf{W}_1(0)}^2 .
\end{align*}
% \begin{align*}
% &\babs{a_1e^{iy_1\omega^\ast}+a_2e^{iy_2\omega^\ast}+\mathbf{W}_1(\omega^\ast)}^2\\ 
%     =& \babs{a_1}^2 + \babs{a_2}^2 + \babs{\mathbf{W}_1(\omega^\ast)}^2 + 2\text{Re}\pare{a_1 \overline{a}_2 e^{i\pare{y_1-y_2}\omega^\ast}} \\
%     &+2\text{Re}\pare{a_1 e^{i y_1\omega^\ast}\overline{\mathbf{W}_1(\omega^\ast)}} + 2\text{Re}\pare{a_2 e^{i y_2\omega^\ast}\overline{\mathbf{W}_1(\omega^\ast)}}
% \end{align*}
% \begin{align*}
%     &\babs{a_1+a_2+\mathbf{W}_1(0)}^2\\
%     =& \babs{a_1}^2 + \babs{a_2}^2 + \babs{\mathbf{W}_1(0)}^2 + 2\text{Re}\pare{a_1 \overline{a}_2 } \\
%     &+2\text{Re}\pare{a_1 \overline{\mathbf{W}_1(0)}} + 2\text{Re}\pare{a_2\overline{\mathbf{W}_1(0)}}
% \end{align*}
Expanding both sides and rearranging terms yields
\begin{align*}
   & 2\text{Re}\pare{a_1 \overline{a}_2 e^{i\pare{y_1-y_2}\omega^\ast}}-2\text{Re}\pare{ a_1 \overline{a}_2 } = \babs{\mathbf{W}_1(0)}^2 -\babs{\mathbf{W}_1(\omega^\ast)}^2 \notag \\
   &  + 2\text{Re}\pare{ \overline{\mathbf{W}_1(0)} \pare{a_1+a_2 }} -2\text{Re} \pare{\overline {\mathbf{W}_1(\omega^\ast)}\pare{a_1 e^{i y_1\omega^\ast}+a_2  e^{i y_2\omega^\ast}}},
\end{align*}
where $\overline{x}$ denotes the complex conjugate of $x$. Then the left-hand side (LHS) and right-hand side (RHS) above satisfy
\begin{align*}
    \mathrm{LHS}=&2\beta m^2 \bracket{\cos\pare{\pare{y_1-y_2}\omega^\ast+\pare{\theta_1-\theta_2}}-\cos\pare{\theta_1-\theta_2}} ,\\
    \mathrm{RHS}=& \babs{\mathbf{W}_1(0)}^2 -\babs{\mathbf{W}_1(\omega^\ast)}^2+2m\mathrm{Re}\bracket{\overline{\mathbf{W}_1(0)} \pare{e^{i\theta_1}+\beta e^{i\theta_2}}}\notag\\
&+2m\mathrm{Re}\bracket{\overline{\mathbf{W}_1(\omega^\ast)}\pare{e^{i(\theta_1+y_1\omega^\ast)}+\beta e^{i(\theta_2+y_2\omega^\ast)}}}.
\end{align*}
Let $d :=\babs{y_1-y_2}>0$ and assume $\theta:=\theta_1-\theta_2>0$. By translation invariance \eqref{equ:shifting_invariance}, we assume without loss of generality that 
\begin{align*}
    y_1 = \frac{d}{2},\quad y_2=-\frac{d}{2};\quad \theta_1 = \frac{\theta}{2},\quad \theta_2 = -\frac{\theta}{2}.
\end{align*}
Substituting these assumptions into the LHS yields
\begin{align}
\babs{\mathrm{LHS}} = & 2\beta m^2\babs{\cos(d\omega^\ast+\theta)-\cos\theta} \notag \\
=&4\beta m^2\babs{\sin\!\pare{\frac{d\omega^\ast}{2}}\,
\sin\!\pare{\frac{d\omega^\ast}{2}+\theta}}.
\label{equ:lhs_sin}
\end{align}
Since $\babs{\mathbf{W}_1(\cdot)}<2\sigma$ and $\babs{e^{-i\alpha}+\beta e^{i\alpha}}\leqs 1+\beta$, we have
\begin{align*}
    &\babs{|\mathbf{W}_1(0)|^2-|\mathbf{W}_1(\omega^\ast)|^2}< 4\sigma^2, \\
    &2\babs{\text{Re}\pare{\mathbf{W}_1(\cdot)\,m(e^{-i\alpha}+\beta e^{i\alpha})}}< 4\sigma m(1+\beta).
\end{align*}
Therefore, substituting these assumptions and inequalities into the RHS gives the bound
\begin{equation}
\babs{\mathrm{RHS}} < 4\sigma^2+8(\beta+1)m\sigma .
\label{equ:rhs_bound}
\end{equation}
Combining \eqref{equ:lhs_sin}--\eqref{equ:rhs_bound} and dividing both sides by $4\beta m^2$ yields
\begin{align*}
\babs{\sin\!\pare{\frac{d\omega^\ast}{2}}\,
\sin\!\pare{\frac{d\omega^\ast}{2}+\theta}}< \frac{1}{\beta}\pare{\frac{\sigma}{m}}^{\!2}
+\frac{\beta+1}{\beta}\frac{2\sigma}{m}.
\end{align*}
Under $0<\sigma/m\leqs 1/2$, we further obtain $\frac{1}{\beta}(\sigma/m)^2\leqs \frac{1}{2\beta}(\sigma/m)$, and it follows
\begin{equation}
\babs{\sin\!\pare{\frac{d\omega^\ast}{2}}\,
\sin\!\pare{\frac{d\omega^\ast}{2}+\theta}}
< \pare{2+\frac{2.5}{\beta}}\frac{\sigma}{m} .
\label{equ:basic_ineq_complex_num}
\end{equation}
Set $\omega^*=\Omega$. If $
0<\theta\leqs \frac{4\pi}{3}\sqrt{1+\frac{1.25}{\beta}}\pare{\frac{\sigma}{m}}^\frac{1}{2}$ and $\frac{d\Omega}{2}+\theta\leqs\frac{\pi}{2}$, then
\begin{align}
    \babs{\sin\!\pare{\frac{d\Omega}{2}}\sin\!\pare{\frac{d\Omega}{2}+\theta}}
& \geqs \left(\frac{2}{\pi}\right)^2
\pare{\frac{d\Omega}{2}}\pare{\frac{d\Omega}{2}+\theta}\notag \\
&\geqs \frac{2}{\pi^2}
\pare{\frac{d\Omega}{2}+\frac{\theta}{2}}^{\!2}.
\label{equ:sin_sin_lower}
\end{align}
Therefore, when $
0<\theta\leqs \frac{4\pi}{3}\sqrt{1+\frac{1.25}{\beta}}\pare{\frac{\sigma}{m}}^\frac{1}{2}$, if $\frac{d\Omega}{2}+\theta\leqs \frac{\pi}{2}$ and 
\begin{align}
    d\geqs \ d_\star := \frac{2 \pi\pare{\pare{1+\frac{1.25}{\beta}}\frac{\sigma}{m}}^\frac{1}{2}-\theta}{\Omega},
    \label{equ:d_star_num}
\end{align}
then \eqref{equ:basic_ineq_complex_num} does not hold, thus no $\sigma\mbox{-}$admissible measure of\, $\mathbf{Y}$ can have fewer than two supports. If $d$ increases such that  $\frac{d\Omega}{2}+\theta > \frac{\pi}{2}$ or even $d\Omega>\pi$, we instead choose  $\omega^\ast = \frac{ d_\star \Omega}{d}<\Omega$ so that  $\frac{d\omega^\ast}{2}+\theta\leqs \frac{\pi}{2}$ and $d\omega^\ast\leqs\pi$. 

It remains to show that the preceding argument also covers negative relative phases and the near-$\pi$ phase regime. 
First, suppose that $\theta<0$ and assume $y_1=-\frac d2,\quad y_2=\frac d2$.
Then \eqref{equ:lhs_sin} becomes
\begin{align}
|\mathrm{LHS}|
&=2\beta m^2
\left|
\cos(-d\omega^\ast+\theta)
-\cos(\theta)
\right|  \notag \\
&=4\beta m^2
\left|
\sin\!\left(\frac{d\omega^\ast}{2}\right)
\sin\!\left(\frac{d\omega^\ast}{2}-\theta\right)
\right|.
\label{equ:lhs_sin_negative}
\end{align}
This has exactly the same form as \eqref{equ:lhs_sin}, with $\theta$ replaced by $-|\theta|$. Moreover, the estimate \eqref{equ:rhs_bound} is unchanged. Therefore, the same contradiction argument yields the separation condition \eqref{equ:d_star_num} for $\theta<0$ after replacing $\theta$ by $|\theta|$.
Next, consider the near-$\pi$ regime, namely,
\(
\pi-|\theta|
\leq
\frac{4\pi}{3}
\sqrt{1+\frac{1.25}{\beta}}
\left(\frac{\sigma}{m}\right)^{1/2},
\)
When $\theta>0$, we set $t:=\pi-\theta$ and  assume $y_1=-\frac d2,\quad y_2=\frac d2$.
Then \eqref{equ:lhs_sin} becomes
\begin{align}
|\mathrm{LHS}|
&=2\beta m^2
\left|
\cos(-d\omega^\ast+\pi-t)
-\cos(\pi-t)
\right| \notag \\
&=4\beta m^2
\left|
\sin\!\left(\frac{d\omega^\ast}{2}\right)
\sin\!\left(\frac{d\omega^\ast}{2}+t\right)
\right|.
\label{equ:lhs_sin_near_pi_positive}
\end{align}
When $\theta<0$, we set $t:=\pi+\theta = \pi-|\theta|$ and  assume $y_1=\frac d2,\quad y_2=-\frac d2$. Then \eqref{equ:lhs_sin} becomes
\begin{align}
|\mathrm{LHS}|
&=2\beta m^2
\left|
\cos(d\omega^\ast+t-\pi)
-\cos(t-\pi)
\right| \notag \\
&=4\beta m^2
\left|
\sin\!\left(\frac{d\omega^\ast}{2}\right)
\sin\!\left(\frac{d\omega^\ast}{2}+t\right)
\right|.
\label{equ:lhs_sin_near_pi_negative}
\end{align}
Thus \eqref{equ:lhs_sin_near_pi_positive} and \eqref{equ:lhs_sin_near_pi_negative} have exactly the same form as \eqref{equ:lhs_sin}, with $\theta$ replaced by $t$.
And the estimate \eqref{equ:rhs_bound} still unchanged.  

Combining the negative-phase and near-$\pi$ cases, the condition \eqref{equ:d_star_num} applies verbatim with $\theta$ replaced by $|\theta|_{\min}$.

Next, we shall prove \eqref{equ:separation_condition_num_complex_LB_small}. Let $d :=\babs{y_1-y_2}>0$ and assume $\theta:=\theta_1-\theta_2>0$.
By translation invariance \eqref{equ:shifting_invariance}, we shift the coordinate system so that 
\begin{align*}
    y_1 = \frac{d}{2},\quad y_2=-\frac{d}{2};\quad \theta_1 = \frac{\theta}{2},\quad \theta_2 = -\frac{\theta}{2}.
\end{align*}
In this setting, the measurement $\mathbf{Y}$ is generated by
\begin{align*}
    \mu = m e^{i\frac\theta 2}\delta_{\frac{d}{2}}+\beta m e^{-i\frac\theta 2}\delta_{-\frac{d}{2}},\quad d\in\pare{0,\frac{\pi}{\Omega}},\,\beta\geqs 1,\,m>0.
\end{align*}
When $\theta\asymp\pare{\frac{\sigma}{m}}^\frac{1}{2}$, we consider the discrete measure 
\[
\widehat\mu = (\beta+1)m e^{\pare{i\frac{1-\beta}{\beta+1}\frac{\theta}{2}}} \delta_{\frac{1-\beta}{\beta+1}\frac{d}{2}} .
\]
For $\babs\omega\leqs\Omega$, we have
\begin{align*}
 \babs{\mathcal F[\widehat\mu](\omega)-\mathcal F[\mu](\omega)}
% =&\babs{(\beta+1)me^{ i\frac{1-\beta}{\beta+1}\frac{d\omega+\theta}{2}}-m e^{i\frac{d\omega+\theta}{2}} - \beta m e^{-i\frac{d\omega+\theta}{2}}} \notag \\
=&m\babs{\pare{\beta+1} \pare{e^{i\frac{d\omega+\theta}{\beta+1}}-1}-\pare{e^{i\pare{d\omega+\theta}}-1}} \\
    =& m\babs{d\omega+\theta} \babs{\int_0^1 e^{i\frac{d\omega+\theta}{\beta+1}x}-e^{i\pare{d\omega+\theta}t} dx}\notag \\
    \leqs& m\babs{d\omega+\theta}  \int_0^1 \babs{e^{i\frac{d\omega+\theta}{\beta+1}x}-e^{i\pare{d\omega+\theta}x} }dx\notag \\
    =&2 m\babs{d\omega+\theta}  \int_0^1 \babs{\sin\pare{\frac{\beta \pare{d\omega+\theta}}{2\pare{\beta+1}}x}}dx \notag  \\
    % \leqs & \frac{\beta m}{\beta+1}\pare{d\omega+\theta}^2\int_0^1tdt =
    \leqs &\frac{\beta m}{2\pare{\beta+1}}\pare{d\omega+\theta}^2.
\end{align*}
Consequently, if $\theta\leqs \pare{\frac{\sigma}{m}}^\frac{1}{2} $ and  
\begin{align}
    d <   \frac{2\pare{\frac{\beta+1}{\beta}\frac{\sigma}{m}}^{\frac{1}{2}}-\theta}{\Omega},
    \label{proof:num_LB_case1_smalltheta}
\end{align}
then $\babs{\mathscr{F}[\widehat{\mu}](\omega)-\mathscr{F}[\mu](\omega)}< 2\sigma$, and hence there exists a  $\sigma$-admissible measure of $\mu$ with only one support.

When $\pi - \theta \asymp \pare{\frac{\sigma}{m}}^\frac{1}{2}$, define $t:=\pi-\theta$. Then we rewrite the discrete measure by
\begin{align*}
    \mu = m e^{i\frac\theta 2}\delta_{-\frac{d}{2}}+\beta m e^{-i\frac\theta 2}\delta_{\frac{d}{2}}=im e^{-i\frac{t}{2}}\delta_{-\frac{d}{2}}  -i\beta m e^{i\frac{t}{2}}\delta_{\frac{d}{2}}.
\end{align*}
Assume $\beta>1$ and consider the discrete measure 
\[
\widehat{\mu} = i\pare{1-\beta}me^{i\frac{\beta+1}{\beta-1}\frac{t}{2}}\delta_{\frac{\beta+1}{\beta-1}\frac{d}{2}}.
\]
For $\babs{\omega}\leqs\Omega$, we have
\begin{align*}
     \babs{\mathcal F[\widehat\mu](\omega)-\mathcal F[\mu](\omega)}
% =&\babs{i\pare{1-\beta}me^{i\frac{\beta+1}{\beta-1}\frac{d\omega+t}{2}} - i m e^{-i\frac{d\omega+t}{2}} + i \beta m e^{i\frac{d\omega+t}{2}}} \notag \\
% =m\babs{\pare{1-\beta}e^{i\frac{1}{\beta-1}\pare{d\omega+t}} -  e^{-i\pare{d\omega+t}} +  \beta } \\
=&m\babs{\pare{1-\beta}\pare{e^{i\frac{d\omega+t}{\beta-1}}-1}-\pare{e^{-i\pare{d\omega+t}}-1}} \\
\leqs&m\babs{d\omega+t}\int_0^1\babs{e^{i\frac{d\omega+t}{\beta-1}x}-e^{-i\pare{d\omega+t}x}}dx\\
=&2m\babs{d\omega+t}\int_0^1\babs{\sin\pare{\frac{\beta \pare{d\omega+t}}{2\pare{\beta-1}}x}}dx\\
\leqs &\frac{\beta}{2(\beta-1)}m \pare{d\omega+t}^2
\end{align*}
Consequently, if $\pi-\theta\asymp\pare{\frac{\sigma}{m}}^{\frac{1}{2}}$ and
\begin{align}
    d <   \frac{2\pare{\frac{\beta-1}{\beta}\frac{\sigma}{m}}^{\frac{1}{2}}-\pi+\theta}{\Omega},    
    \label{proof:num_LB_case3_smalltheta}
\end{align}
then $\babs{\mathscr{F}[\widehat{\mu}](\omega)-\mathscr{F}[\mu](\omega)}< 2\sigma$, and hence there exists a  $\sigma$-admissible measure of $\mu$ with only one support.

It remains to consider the case $\theta<0$. When
$\babs{\theta}$ is close to zero, we assume the two locations as $y_1=-\frac d2$ and $y_2=\frac d2$, and take
\(\widehat{\mu} = (\beta+1)m
    e^{i\frac{1-\beta}{\beta+1}\frac{\theta}{2}}
    \delta_{-\frac{1-\beta}{\beta+1}\frac d2}.\)
A direct computation gives, for $|\omega|\leq \Omega$,
\begin{align*}
\babs{\mathcal F[\widehat\mu](\omega)-\mathcal F[\mu](\omega)}
    =&m\babs{\pare{\beta+1} \pare{e^{i\frac{d\omega-\theta}{\beta+1}}-1}-\pare{e^{i\pare{d\omega-\theta}}-1}}.
\end{align*}
Hence the condition in \eqref{proof:num_LB_case1_smalltheta} remains valid with
$\theta$ replaced by $|\theta|$.
Similarly, when $\babs{\theta}$ is near $\pi$ and $\beta>1$, we set $t:=\pi+\theta=\pi-\babs{\theta}$ and assume $y_1=\frac{d}{2}$, $y_2=-\frac{d}{2}$. After taking $\widehat{\mu} =i(\beta-1)m e^{-i\frac{\beta+1}{\beta-1}\frac t2} \delta_{-\frac{\beta+1}{\beta-1}\frac d2}$, a direct computation gives,
\begin{align*}
     \babs{\mathcal F[\widehat\mu](\omega)-\mathcal F[\mu](\omega)}
    =&m\babs{\pare{1-\beta} \pare{e^{i\frac{d\omega+t}{\beta-1}}-1}-\pare{e^{-i\pare{d\omega+t}}-1}}.
\end{align*}
Therefore, the condition in \eqref{proof:num_LB_case3_smalltheta} also applies with
$\pi-\theta$ replaced by $\pi-|\theta|$.

Combining the cases $\theta>0$ and $\theta<0$, the lower-bound construction depends on $|\theta|_{\min}$. Hence the separation condition \eqref{equ:separation_condition_num_complex_LB_small} holds, completing the proof.

\subsection{Proof of Theorem~\ref{thm:twopointresolution_complex_number_UB_beta1}}
Let $\mu=\sum_{j=1}^2 a_j \delta_{y_j}$, $a_j = m e^{i\theta_j}$, and $\hat{\mu}=a\,\delta_{\hat{y}}$. A crucial relation is
\begin{align}
    \mathscr{F}[\hat{\mu}](\omega)  =  \mathscr{F}[\mu](\omega) + \mathbf{W}_1(\omega),  \,\,  |\mathbf{W}_1(\cdot)| < 2\sigma, \,   \omega \in [-\Omega,\Omega].
    \label{equ:basicmodel_num_small_beta1}
\end{align}
Note that if \eqref{equ:basicmodel_num_small_beta1} holds, $\hat{\mu}$ can be a $\sigma$-admissible measure of some $Y$ generated by model \eqref{equ:modelsetting0}. This time, resolving two point sources is impossible. Conversely, if \eqref{equ:basicmodel_num_small_beta1} does not hold, $\hat{\mu}$ cannot be any $\sigma$-admissible measure of some $Y$ generated by $\mu$ as in model \eqref{equ:modelsetting0}. 

By translation invariance \eqref{equ:shifting_invariance}, we can transform the problem into the case when $y_1=-y_2$. First assume $\theta>0$. Since $\pi-\theta \asymp \pare{\frac{\sigma}{m}}^\frac{1}{2}$, let $t=\pi-\theta$. Then we consider that the underlying source is $\mu=me^{i\frac{\theta}{2}}\delta_{y_1}+me^{-i\frac{\theta}{2}}\delta_{y_2}=ime^{-i\frac{t}{2}}\delta_{y_1}-ime^{i\frac{t}{2}}\delta_{y_2}$ with $y_2>0$, $y_1 = -y_2$. The measure $\widehat{\mu}$ is $a\delta_{\widehat{y}}$ with $a$ and $\widehat{y}$ to be determined.
From \eqref{equ:basicmodel_num_small_beta1}, we get
\begin{align*}
    \mathbf{W}_1(\omega)= &a e^{i\widehat{y}\omega} -\pare{ime^{i\pare{-\frac{t}{2}+y_1\omega}}-ime^{i\pare{\frac{t}{2}+y_2\omega}}}\\
    = & a e^{i\widehat{y}\omega} -2m\sin\pare{y_2\omega+\frac{t}{2}}.
\end{align*}
Note that for two non-negative values $x,y$, we have
\begin{align}
     \left|x e^{iq} - y\right|^2 = & \left(x\cos(q)-y\right)^2 + x^2\sin^2(q) \notag \\
     = & x^2 + y^2 - 2xy\cos(q)\geqs (x-y)^2
     \label{equ:absolute_lb}
\end{align} 
and the equality is attained when $q=0$. We only consider the case when
\begin{align}
    y_2\Omega+\frac{t}{2} \leqs \frac{\pi}{2} \quad \mathrm{and} \quad  y_2\Omega+\frac{t}{2} \geqs-\frac{\pi}{2}
    \label{equ:condition_y1Omega}
\end{align}
and we shall see that this coincides with the case in the theorem. By the above condition, we have $\sin\pare{y_2\omega+\frac{t}{2}}\geqs0, \omega\in[-\Omega,\Omega]$. Thus by \eqref{equ:absolute_lb}, for every $\omega$,
\begin{align*}
    \babs{\mathbf{W}_1(\omega)}\geqs \babs{\babs{a}-2m\sin\pare{ y_2\omega+\frac{t}{2}}}
\end{align*}
and the minimum is attained when $\widehat{y}=0$ and $a$ is a positive number. We now try to find the condition on $y_2$ so that there exists $a$ satisfying 
\begin{align*}
    \babs{\babs{a}-2m\sin\pare{ y_1\omega+\frac{t}{2}}}<2\sigma,\quad \omega\in[-\Omega,\Omega].
\end{align*}
This is equivalent to 
\begin{align}
    \max_{\omega,\,\omega'} \babs{2m\sin\pare{\frac{d\omega+t}{2}} - 2m\sin\pare{\frac{d\omega'+t}{2}}} < 4\sigma.
    \label{equ:problem_max}
\end{align}
We denote $d=|y_1-y_2|$ and now the condition \eqref{equ:condition_y1Omega} is
\begin{align}
    \frac{d\Omega+t}{2}\leqs\frac{\pi}{2}\quad \mathrm{and} \quad \frac{-d\Omega+t}{2}\geqs -\frac{\pi}{2}.
    \label{equ:condition2_y1Omega}
\end{align}
Under this condition, \eqref{equ:problem_max} becomes
\begin{align*}
    2m\babs{\sin\pare{\frac{d\Omega+t}{2}}-\sin\pare{\frac{-d\Omega+t}{2}}} = 4m\babs{\sin\pare{\frac{d\Omega}{2}}\cos\pare{\frac{t}{2}}}<4\sigma.
\end{align*}
Since $\sin\pare{\frac{d\Omega}{2}}\geqs0$ and $\cos\pare{\frac{t}{2}}\geqs0$,  we get 
\begin{align}
    d<\frac{2}{\Omega}\arcsin \pare{\frac{1}{\cos\pare{\frac{t}{2}}}\frac{\sigma}{m}}.
    \label{equ:separation_beta1_num_UB}
\end{align}
Now the condition \eqref{equ:condition2_y1Omega} holds when
\begin{align*}
\begin{cases}
    \displaystyle \left(\frac{2}{\Omega}\arcsin\left(\frac{1}{\cos(t/2)}\frac{\sigma}{m}\right)\right)\Omega + t \leqs \pi, \\[2.0ex]
    \displaystyle -\left(\frac{2}{\Omega}\arcsin\left(\frac{1}{\cos(t/2)}\frac{\sigma}{m}\right)\right)\Omega + t \geqs -\pi.
\end{cases}
\end{align*}
These two inequalities are equivalent to
\begin{numcases}{}
    \cos\left(t\right)\geqs \frac{2\sigma}{m}-1, \label{equ:equa_condition1}\\
    \arcsin\left(\frac{1}{\cos(t/2)}\frac{\sigma}{m}\right)\leqs\frac{t+\pi}{2}. \label{equ:equa_condition2}
\end{numcases}
When $t$ is close to zero and \(\sigma/m<1/2\), we have \(\frac{2\sigma}{m}-1<0\), so \eqref{equ:equa_condition1} is automatically satisfied. In addition, since \(\arcsin(x)\le \pi/2\) for all \(x\in[-1,1]\), \eqref{equ:equa_condition2} is also automatically satisfied.
Therefore, if (\ref{equ:separation_beta1_num_UB}) holds, there exists a single point source to be the $\sigma$-admissible measure. Otherwise, no such one-point admissible measure exists. 

When $\theta<0$, denoting $t:=\pi+\theta=\pi-\babs{\theta}$ and replacing $y_j$ with $-y_j$ leaves the \eqref{equ:problem_max} unchanged. Hence, the separation conditions derived in \eqref{equ:separation_beta1_num_UB} applies verbatim after substituting $\theta$ with $\babs{\theta}$. This completes the proof.

\subsection{Proof of Theorem~\ref{thm:supportrecoveryupperboundcomplex_small_theta}}\label{proof:locs_theta_small_UB}
First, we shall prove \eqref{equ:sepacondilocationcomplex_small_beta1}. 
Given the measurement $\mathbf{Y}$ generated by the discrete measure  $\mu= m e^{i\theta_1} \delta_{y_1}+\beta m e^{i\theta_2}\delta_{y_2}$ with
$y_1,y_2\in B_{\frac{\pi}{2\Omega}}(0)$, $\theta_1,\theta_2\in \pare{-\pi,\pi}$, $\beta\geqs1$, and $m>0$. Suppose that $\widehat{\mu}=\widehat{a}_1\delta_{\widehat{y}_1}+\widehat{a}_2\delta_{\widehat{y}_2}$ is a complex $\sigma$-admissible measure of\, $\mathbf{Y}$. The Definition~\ref{def:sigma_admissible} and model \eqref{equ:modelsetting0} imply that $\mu$ and $\widehat \mu$ satisfy
\begin{align}
\mathscr{F}[\widehat{\mu}](\omega)=\mathscr{F}[\mu](\omega)+\mathbf{W}_1(\omega),
\quad \omega\in[0,2\Omega],
\label{equ:imagemodel2sigma_locs_smalltheta}
\end{align}
for some $\mathbf{W}_1$
with $\babs{\mathbf{W}_1\pare{\omega}}<2\sigma$, $\omega\in[0,2\Omega]$. Define $S_1$ and $S_2$ as, respectively,
\begin{align*}
    S_1=\{y_2,\widehat{y}_2\},\quad S_2=\{y_1,\widehat{y}_1\}.
\end{align*}
Then $\#S_1=\#S_2=2$. For any $0<\omega^\ast \leqs\frac{2\Omega}{\#S_t+1}=\frac{2\Omega}{3}$,applying \eqref{equ:LA_8} to \eqref{equ:imagemodel2sigma_locs_smalltheta} gives
\begin{align*}
     &a_1\prod_{k=1}^{2} \pare{e^{iy_1\omega^\ast}-e^{i\hat y_k\omega^\ast}} \pare{e^{iy_1\omega^\ast}-e^{i y_2\omega^\ast}}=\pare{e^{i\hat y_1 \omega^\ast}\mathbf{w}_1-\mathbf{w}_2}^T\mathbf{v}_{S_1},\\
     &a_2 \prod_{k=1}^{2} \pare{e^{iy_2\omega^\ast}-e^{i\hat y_k\omega^\ast}} \pare{e^{iy_2\omega^\ast}-e^{i y_1 \omega^\ast}} =\pare{e^{i\hat y_2 \omega^\ast}\mathbf{w}_1-\mathbf{w}_2}^T\mathbf{v}_{S_2},
\end{align*}
where
\begin{align*}
    \mathbf{v}_{S_1}=&
    \begin{bmatrix}
        e^{i\pare{y_2+\hat y_2}\omega^\ast}, &
        -\pare{e^{i y_2 \omega^\ast}+e^{i\hat y_2\omega^\ast}},&    1
    \end{bmatrix}^\top,\\
    \mathbf{v}_{S_2}= & 
    \begin{bmatrix}
        e^{i\pare{y_1+\hat y_1}\omega^\ast},&
        -\pare{e^{i y_1 \omega^\ast}+e^{i\hat y_1\omega^\ast}},& 1
    \end{bmatrix}^\top,\\
    \mathbf{w}_{1}=&\begin{bmatrix}
        \mathbf{W}_1(0),&
        \mathbf{W}_1(\omega^\ast),&
        \mathbf{W}_1(2\omega^\ast)
    \end{bmatrix}^\top,\\
\mathbf{w}_{2}=&\begin{bmatrix}
        \mathbf{W}_1(\omega^\ast),&
        \mathbf{W}_1(2\omega^\ast),&
        \mathbf{W}_1(3\omega^\ast)
    \end{bmatrix}^\top.
\end{align*}
Define
\begin{align*}
    A=&-e^{i\pare{\hat y_1+\hat y_2} \omega^\ast}\mathbf{W}_1(\omega^\ast)+\bracket{e^{i \hat y_1 \omega^\ast}  +  e^{i\hat y_2 \omega^\ast}}\mathbf{W}_1(2\omega^\ast)-\mathbf{W}_1(3\omega^\ast),\\
    B= &e^{ i\pare{\hat y_1  + \hat y_2}\omega^\ast}\mathbf{W}_1(0)-\pare{e^{i \hat y_1 \omega^\ast} +e^{i\hat y_2\omega^\ast}}\mathbf{W}_1(\omega^\ast)+\mathbf{W}_1(2\omega^\ast).
\end{align*}
A direct calculation shows that
\begin{align*}
     \pare{e^{i\hat y_1 \omega^\ast}\mathbf{w}_1-\mathbf{W}_1}^T\mathbf{v}_{S_1}=A+e^{iy_2\omega^\ast}\cdot B, \\
     \pare{e^{i\hat y_2 \omega^\ast}\mathbf{w}_1-\mathbf{W}_1}^T\mathbf{v}_{S_2}=A+e^{iy_1\omega^\ast}\cdot B.
\end{align*}
Eliminating the common term $A$ yields
\begin{align}
\label{A2}
&a_1\pare{e^{iy_1\omega^\ast}-e^{i\widehat y_1\omega^\ast}}\pare{e^{iy_1\omega^\ast}-e^{i\widehat y_2\omega^\ast}} \notag \\
&+a_2\pare{e^{iy_2\omega^\ast}-e^{i\widehat y_2\omega^\ast}}\pare{e^{iy_2\omega^\ast}-e^{i\widehat y_1\omega^\ast}}
=-B.
\end{align}
Using $\babs{\mathbf{W}_1\pare{\omega^\ast}}<2\sigma$ and the bound $\babs{e^{i\widehat y_1 \omega^\ast}+e^{i\widehat y_2\omega^\ast}}\leqs 2$, we obtain the uniform estimate
\begin{equation}\label{A3}
|B|\leqs 2\sigma+4\sigma+2\sigma=8\sigma.
\end{equation}
Applying the identity
$e^{ip}-e^{iq}=2i\,e^{i(p+q)/2}\sin\!\pare{\pare{p-q}/2}$, the magnitude of the LHS in \eqref{A2} can be written as
\[
|{\rm LHS}|
=4m\sqrt{P_1^2+\beta^2P_2^2+2\beta P_1P_2\cos\!\bracket{\pare{y_1-y_2}\omega^\ast+\theta}},
\]
where
\begin{align*}
    &P_1=\sin\!\pare{\frac{(y_1-\widehat y_1)\omega^\ast}{2}}\sin\!\pare{\frac{(y_1-\widehat y_2)\omega^\ast}{2}},\\
    &P_2=\sin\!\pare{\frac{(y_2-\widehat y_1)\omega^\ast}{2}}\sin\!\pare{\frac{(y_2-\widehat y_2)\omega^\ast}{2}}.
\end{align*}
Denote the separation by $d:=\babs{y_1-y_2}$ and assume $y_1>y_2$. Consequently, 
\begin{align}
    |{\rm LHS}| = &4m\sqrt{P_1^2+\beta^2P_2^2+2\beta P_1P_2\cos\pare{d\omega^\ast+\theta}}\notag  \\
    =&4m\sqrt{(\beta P_2+P_1\cos\pare{d\omega^\ast+\theta})^2+P_1^2\sin^2\pare{d\omega^\ast+\theta}} \notag \\
    \geqs& 4m\babs{P_1}\cdot \babs{\sin(d\omega^\ast+\theta)}
    \label{A4}
\end{align}
Reorder $\widehat{y}_1,\widehat{y}_2$ such that $\babs{\widehat{y}_1-y_1}\leqs \babs{\widehat{y}_2-y_1}$, and suppose, toward a contradiction,  that $\babs{\widehat{y}_1-y_1}\geqs \frac{d}{2}$. Then
$\babs{\widehat{y}_2-y_1}\geqs \frac{d}{2}$ as well. Since $y_1, \widehat{y}_1, \widehat{y}_2\in B_{\pi/(2\Omega)}(0)$ and $\omega^\ast\leqs 2\Omega/3$, it follows that
$\bracket{\frac{(y_1-\widehat y_t)\omega^\ast}{2}}\leqs \frac{\pi}{3}$, and hence
\[
|P_1|\geqs \sin^2\!\pare{\frac{d\omega^\ast}{4}}.
\]
Combining \eqref{A2}--\eqref{A4} yields the key inequality
\begin{equation}\label{equ:basic_inequality_of_complex_location}
\sin^2\!\pare{\frac{d\omega^\ast}{4}}\,\babs{\sin(d\omega^\ast+\theta)}< \frac{2\sigma}{m}.
\end{equation}
Let $\omega^* = \frac{2\Omega}{3}$. When $0<\theta \leqs 1.75\pi\pare{\frac{\sigma}{m}}^\frac{1}{3}$, if $0<\frac{2}{3}d\Omega+\theta\leqs \frac{\pi}{2}$, we have the following lower bound 
\begin{align}
    \sin^2\pare{\frac{2}{3}\frac{d\Omega}{4}}\babs{\sin\pare{\frac{2}{3}d\Omega+\theta}}& \geqs \pare{\frac{2}{\pi}}^3 \pare{\frac{2}{3}\frac{d\Omega}{4}}^2\pare{\frac{2}{3}d\Omega+\theta}\notag \\
    & \geqs \pare{\frac{2}{3}d\Omega+\frac{\theta}{3}}^3.
    \label{equ:sin_sin_sin_lower}
\end{align}
Therefore, \eqref{equ:basic_inequality_of_complex_location} gives
\begin{align}
    d<\frac{6\pi\pare{\frac{\sigma}{m}}^\frac{1}{3}-\theta}{2\Omega}.
    \label{equ:locs_complex_small_for_y1}
\end{align}
Therefore, if (\ref{equ:locs_complex_small_for_y1}) does not hold, then we have $|\hat y_1 - y_1|< \frac{d}{2}$ which gives $|\hat y_1 - y_2|> \frac{d}{2}$. Since the LHS of \eqref{A2} also satisfies the lower bound
\begin{align}
    |{\rm LHS}| = &4m\sqrt{P_1^2+\beta^2P_2^2+2\beta P_1P_2\cos\pare{d\omega^\ast+\theta}}\notag  \\
    =&4m\sqrt{(P_1 + \beta P_2\cos\pare{d\omega^\ast+\theta})^2+\beta^2P_2^2\sin^2\pare{d\omega^\ast+\theta}} \notag \\
    \geqs& 4\beta m\babs{P_2}\cdot \babs{\sin(d\omega^\ast+\theta)},
    \label{A5}
\end{align}
if we assume $|\hat y_2 - y_2| \geqs \frac{d}{2}$, repeating the above argument gives that, 
\begin{equation}\label{equ:basic_inequality_of_complex_location_eq2}
\sin^2\!\pare{\frac{d\omega^\ast}{4}}\,\babs{\sin(d\omega^\ast+\theta)}< \frac{2\sigma}{\beta m}.
\end{equation}
Similarly to the arguments above, this cannot hold when (\ref{equ:locs_complex_small_for_y1}) fails and $\frac{2}{3}d\Omega +\theta \leqs\frac{\pi}{2}$. Therefore, $|\hat y_2 - y_2|<\frac{d}{2}$. In summary, 
when $\frac{2}{3}d\Omega+\theta\leqs\frac{\pi}{2}$ and 
\begin{align}
    d\geqs d_\star = \frac{6\pi\pare{\frac{\sigma}{m}}^\frac{1}{3}-\theta}{2\Omega},
    \label{equ:d_star_locs}
\end{align}
then $\babs{\widehat y_1-y_1}<\frac{d}{2} $ and $\babs{\widehat y_2-y_2}<\frac{d}{2} $. If $d$ increases so that  $\frac{2}{3}d\Omega+\theta > \frac{\pi}{2}$ or even $\frac{2}{3}d\Omega>\pi$, we instead choose  $\omega^\ast = \frac{ \frac{2}{3}d_\star \Omega}{d}<\frac{2}{3}\Omega$ to ensure  $\frac{2}{3}d\omega^\ast+\theta\leqs \frac{\pi}{2}$ and $d\omega^\ast\leqs\pi$ holds.
Moreover, if $\theta<0$, replacing $y_j$ with $-y_j$ also leaves the lower bounds in \eqref{A4} and \eqref{A5} unchanged. Similarly, when $\theta$ is close to $\pi$ or $-\pi$, define $t: = \pi-\babs\theta$. Replacing $(y_j,\theta)$ with $(-y_j,t)$ also leaves the lower bounds in \eqref{A4} and \eqref{A5}  unchanged.  Therefore, the separation condition \eqref{equ:d_star_locs}  applies verbatim when $\theta<0$ and $\babs\theta$ is near $\pi$ upon substituting $\theta$ with $\babs{\theta}_{\min}$. Hence, $\widehat{\mu}$ lies within the $d/2$-neighborhood of $\mu$ whenever \eqref{equ:sepacondilocationcomplex_small_beta1} holds.

Next, we shall prove \eqref{equ:separation_condition_supp_complex_LB_small}. By translation invariance \eqref{equ:shifting_invariance}, we shift the coordinate system so that $y_1=-d/2$ and $y_2=d/2$. 
In this setting, the measurement $\mathbf{Y}$ is generated by
\begin{align*}
    \mu = m e^{i\frac\theta 2}\delta_{\frac{d}{2}}+\beta m e^{-i\frac\theta 2}\delta_{-\frac{d}{2}},\quad d\in\pare{0,\frac{\pi}{\Omega}},\,\beta\geqs 1,\,m>0.
\end{align*}
When $\theta>0$ and $\theta\asymp\pare{\frac{\sigma}{m}}^\frac{1}{3}$, denote $s=\sqrt{\beta+1}\geqs\sqrt2$ and consider the  measure
\begin{equation*}
\widehat \mu
= \frac12 s(s-1)m\,e^{-i\left(\frac{\theta}{2}+\frac{\theta}{s}\right)}
\delta_{-\frac{d}{2}-\frac{d}{s}}
+ \frac12 s(s+1)m\,e^{-i\left(\frac{\theta}{2}-\frac{\theta}{s}\right)}
\delta_{-\frac{d}{2}+\frac{d}{s}} .
\end{equation*}
For $\babs\omega\leqs\Omega$, a direct computation shows that 
\begin{align}
\babs{\mathcal F[\widehat\mu](\omega)-\mathcal F[\mu](\omega)}
= m\sqrt{g(\phi)},
\label{equ:F_diff_gphi_polished_1}
\end{align}
where $\phi:=d\omega+\theta\in\left( 0,\frac{3\pi}{2}\right]$ and
\begin{align*}
g(\phi)
:=\pare{s^2-1+\cos\phi-s^2\cos\frac{\phi}{s}}^2
 +\pare{s\sin\frac{\phi}{s}-\sin\phi}^2 .
\end{align*}
When $\phi\in\left(0,\frac{\pi}{2}\right]$, the condition   $s\geqs\sqrt{2}$ implies that $\frac{\phi}{s}\in\pare{0,\frac{\pi}{2s}}\subset(0,\frac{\pi}{2})$. Therefore, applying \eqref{equ:scale_sin_cos}  obtains
\begin{align}
g(\phi)
&=4\pare{s\sin\frac{\phi}{2s}+\sin\frac{\phi}{2}}^2
   \pare{s\sin\frac{\phi}{2s}-\sin\frac{\phi}{2}}^2
 +\pare{s\sin\frac{\phi}{s}-\sin \phi}^2 \notag \\
&<4\phi^2\cdot\pare{\frac{\phi}{2}-\frac{\phi}{2}+\frac{1}{6}\frac{\phi^3}{2^3}}^2
+\pare{\phi-\phi+\frac{1}{6}\phi^3}^2
= \frac{1}{576}\phi^8+\frac{1}{36}\phi^6 \notag \\
&\leqs \pare{\frac{1}{36}+\frac{1}{576}\cdot\frac{\pi^2}{4}}\phi^6
<0.0321\,\phi^6 .
\label{equ:thm8_UB1_polished}
\end{align}
When $\phi\in\left(\frac{\pi}{2},\pi\right]$.
the condition $s\geqs\sqrt2$ implies that $\frac{\phi}{2s}\in\pare{0,\frac{\pi}{2}}$ and $\phi-\frac{\pi}{2}\in\pare{0,\frac{\pi}{2}}$. Therefore, applying \eqref{equ:scale_sin_cos}  obtains
\begin{align}
g(\phi)=& \bracket{-1-\sin\pare{\phi-\frac{\pi}{2}}+2s^2\sin^2\frac{\phi}{2s}}^2 \notag \\
&+\bracket{2s\sin\frac{\phi}{2s}\cos\frac{\phi}{2s}-\cos\pare{\phi-\frac{\pi}{2}}}^2 \notag \\
<&\bracket{-1-\pare{\phi-\frac{\pi}{2}-\frac16(\phi-\frac{\pi}{2})^3}+\frac{\phi^2}{2}}^2 \notag \\
 &+\bracket{\phi-\pare{1-\frac12(\phi-\frac{\pi}{2})^2}}^2 <\;0.0253\,\phi^6 .
 \label{equ:thm8_UB2_polished}
\end{align}
When $\phi\in\left(\pi,\frac{3\pi}{2}\right]$, the condition $s\geqs\sqrt{2}$ implies that $\frac{\phi}{2s}\in\pare{0,\frac{3\pi}{4\sqrt{2}}}$,  $\frac{\phi-\pi}{2}\in\pare{0,\frac{\pi}{2}}$ and $\phi-\pi\in\pare{0,\frac{\pi}{2}}$. Therefore, applying \eqref{equ:scale_sin_cos}  obtains
\begin{align}
g(\phi)
&=4\pare{s\sin\frac{\phi}{2s}+\cos\frac{\phi-\pi}{2}}^2
   \pare{s\sin\frac{\phi}{2s}-\cos\frac{\phi-\pi}{2}}^2 \notag \\
& +\pare{2s\sin\frac{\phi}{2s}\cos\frac{\phi}{2s}+\sin(\phi-\pi)}^2 \notag \\
&<4\pare{\frac{\phi}{2}+1}^2
   \pare{\frac{\phi}{2}-1+\frac12\pare{\frac{\phi-\pi}{2}}^2}^2
 +\pare{2\phi-\pi}^2  \notag \\
&<0.0149\,\phi^6 .
\label{equ:thm8_UB3_polished}
\end{align}
Combining \eqref{equ:thm8_UB1_polished}-\eqref{equ:thm8_UB3_polished}, we conclude that
\begin{align*}
\babs{\mathcal F[\widehat\mu](\omega)-\mathcal F[\mu](\omega)} <m\sqrt{0.0321} \,\pare{d\omega+\theta}^3 .
\end{align*}
Hence, if $\theta \asymp \pare{\frac{\sigma}{ m}}^\frac{1}{3}$ and  the separation condition 
\begin{align}
    \babs{y_1-y_2} \leqs \frac{2.23\pare{\frac{\sigma}{ m}}^\frac{1}{3}- \theta}{\Omega}
\label{equ:locs_UB1_smalltheta}
\end{align}
holds, then
$\babs{\mathcal F[\widehat\mu](\omega)-\mathcal F[\mu](\omega)}<2\sigma$,
so $\widehat\mu$ is $\sigma$-admissible and has two supports. Moreover, the support locations of $\widehat\mu$ are
\[
\widehat y_1=-\frac{d}{2}-\frac{d}{s},
\quad
\widehat y_2=-\frac{d}{2}+\frac{d}{s},
\quad s=\sqrt{\beta+1}\geqs\sqrt2.
\]
If $s\geqs 2$, then \(\babs{\widehat y_2-y_2} =d\pare{1-\frac{1}{s}}\geqs \frac{d}{2},\), no relabeling can place a spike within $d/2$ of $y_2$.  If $\sqrt2\leqs s<2$, then $\babs{\widehat y_1-y_1} =\frac{d}{s}>\frac{d}{2}$, no relabeling can place a spike within $d/2$ of $y_1$ Therefore, $\widehat\mu$ does not lie in the $\frac{d}{2}$-neighborhood of $\mu$.

When $\pi - \theta \asymp \pare{\frac{\sigma}{m}}^\frac{1}{3}$, define $t:=\pi-\theta$. By translation invariance \eqref{equ:shifting_invariance}, we rewrite the discrete measure by 
\begin{align*}
    \mu = m e^{i\frac\theta 2}\delta_{-\frac{d}{2}}+\beta m e^{-i\frac\theta 2}\delta_{\frac{d}{2}}=im e^{-i\frac{t}{2}}\delta_{-\frac{d}{2}}  -i\beta m e^{i\frac{t}{2}}\delta_{\frac{d}{2}}.
\end{align*}
Define $s' := \sqrt{\frac{\beta-1}{\beta}} \in (0,1)$ and consider the  measure
\begin{align*}
    \widehat{\mu}
    = \frac{s'}{2(1-s')}\, i m\, e^{i(\frac{t}{2}-\frac{t}{s'})}\,
      \delta_{\frac{d}{2}-\frac{d}{s'}}
    - \frac{s'}{2(1+s')}\, i m\, e^{i(\frac{t}{2}+\frac{t}{s'})}\,
      \delta_{\frac{d}{2}+\frac{d}{s'}} .
\end{align*}
A direct computation shows that, for $|\omega|\leqs \Omega$,
\begin{align}
    & \babs{\mathscr{F}[\widehat{\mu}](\omega)-\mathscr{F}[\mu](\omega)}\\
    =& m\beta \sqrt{\pare{s'^2-1+\cos\phi'-s'^2\cos\frac{\phi'}{s'}}^2
    +\pare{s'\sin\frac{\phi'}{s'}-\sin\phi'}^2},
    \label{equ:F_diff_gphi_polished_2}
\end{align}
where $\phi':=d\omega+t$.
Proceeding as in the case $\theta \asymp \pare{\frac{\sigma}{m}}^\frac{1}{3}$, we obtain
\begin{align*}
    \babs{\mathscr{F}[\widehat{\mu}](\omega)-\mathscr{F}[\mu](\omega)} < m\beta\,\sqrt{0.0321}\,(d\omega+t)^3.
\end{align*}
Consequently, if
\begin{align}
    d<\frac{2.23\pare{\frac{\sigma}{ \beta m}}^\frac{1}{3}- \pi +\theta}{\Omega}, \quad\pi-\theta \asymp \pare{\frac{\sigma}{m}}^\frac{1}{3},
    \label{equ:locs_UB3_smalltheta}
\end{align}
then $\babs{\mathscr{F}[\widehat{\mu}](\omega)-\mathscr{F}[\mu](\omega)}< 2\sigma$, and hence $\widehat{\mu}$ is a   $\sigma$-admissible  complex measure of\, $\mathbf{Y}$ but not within the $\frac{d}{2}$-neighborhood of $\mu$.

When $\theta<0$ and $\babs{\theta}$ is close to zero, replacing $\pare{y_j,\, \widehat{y}_j}$ with $\pare{-y_j,\, -\widehat{y}_j}$ leaves the modulus in \eqref{equ:F_diff_gphi_polished_1}  unchanged. When $\theta<0$ and $\babs{\theta}$ is close to $\pi$, define $t:=\pi+\theta=\pi-\babs{\theta}$. Replacing $\pare{y_j,\, \widehat{y}_j,\,\widehat{a}_j}$ with $\pare{-y_j,\, -\widehat{y}_j,\,\overline{\widehat a_j}}$ leaves the modulus in \eqref{equ:F_diff_gphi_polished_2}  unchanged.
Hence, the separation conditions derived in \eqref{equ:locs_UB1_smalltheta} and \eqref{equ:locs_UB3_smalltheta} apply verbatim when $\theta<0$ upon substituting $\theta$ with $|\theta|$.

\section{Proofs of Results in Section \ref{section:large_theta}} \label{proof:large_theta}

\subsection{Proof of Theorem~\ref{thm:complex_number_UB_large_theta}}

First, we shall prove \eqref{equ:sepacondinumbercomplex_largetheta}. Given the measurement $\mathbf{Y}$ from the discrete measure  $\mu=m e^{i\theta_1} \delta_{y_1}+\beta m e^{i\theta_2} \delta_{y_2}$ with
$y_1,y_2\in B_{\frac{\pi}{2\Omega}}(0)$,
$\theta_1,\theta_2\in \pare{-\pi,\pi}$, $\beta\geqs1$, and $m>0$. Suppose that $\widehat{\mu}=\widehat{a}\,\delta_{\widehat{y}}$ is a $\sigma$-admissible measure of\, $\mathbf{Y}$. The Definition~\ref{def:sigma_admissible} and model \eqref{equ:modelsetting0} imply that $\mu$ and $\widehat \mu$ satisfy
\begin{align}
\mathscr{F}[\widehat{\mu}](\omega)=\mathscr{F}[\mu](\omega)+\mathbf{W}_1(\omega),
\quad \omega\in[0,2\Omega],
\label{equ:imagemodel2sigma_num_largetheta}
\end{align}
for some $\mathbf{W}_1$
with $\babs{\mathbf{W}_1\pare{\cdot}}<2\sigma$, $\omega\in[0,2\Omega]$. Considering the argument in the proof of \eqref{equ:sepacondinumbercomplex_smalltheta}, for any $\omega^\ast \in \left(0,\Omega\right]$,  we can obtain the similar relationship as 
\begin{equation}
\babs{\sin\!\pare{\frac{d\omega^\ast}{2}}\,
\sin\!\pare{\frac{d\omega^\ast}{2}+\babs{\theta}_{\min}}}
< \pare{2+\frac{2.5}{\beta}}\frac{\sigma}{m} :=\varepsilon.
\label{equ:basic_ineq_complex_num_large}
\end{equation}
The condition $d<d_\star$, with $d_\star$ defined in \eqref{equ:d_star_num}, is generally not enough to guarantee \eqref{equ:basic_ineq_complex_num_large} when
\(
|\theta|_{\min}>
\frac{4\pi}{3}
\sqrt{1+\frac{1.25}{\beta}}
\left(\frac{\sigma}{m}\right)^{1/2}.
\)
We therefore develop an alternative bound. Since $|\theta|_{\min}$ is symmetric with respect to
$|\theta|=\pi/2$, we focus on the regime
\[
\frac{4\pi}{3}\sqrt{1+\frac{1.25}{\beta}}\pare{\frac{\sigma}{m}}^\frac{1}{2}<\babs{\theta}_{\min}\leqs\frac{\pi}{2},
\]
and distinguish two subregimes.

\medskip
\noindent
\emph{Subregime A:} 
\begin{align*}
    \max\left\{\babs{\theta}_{\min},\frac{d\omega^\ast}{2}\right\}<\frac{d\omega^\ast}{2}+\babs{\theta}_{\min}\leqs\pi-\babs{\theta}_{\min}.
\end{align*}
In this range,
\begin{align*}
    &\babs{\sin\!\pare{\frac{d\omega^\ast}{2}}\,
\sin\!\pare{\frac{d\omega^\ast}{2}+\babs{\theta}_{\min}}}\\
\geqs & \sin\!\pare{\frac{d\omega^\ast}{2}} \cdot \max\left\{ \sin\!\pare{\frac{d\omega^\ast}{2}}, \sin\babs{\theta}_{\min}\right\},
\end{align*}
and hence \eqref{equ:basic_ineq_complex_num_large} implies
\begin{align}
    \sin\!\pare{\frac{d\omega^\ast}{2}}<   \min \pare{\frac{\varepsilon}{\sin\babs{\theta}_{\min}},\,\sqrt{\varepsilon}}
    \label{equ:num_UB_largetheta_eq1}
\end{align}
Moreover, since  $\frac{\pi}{2}\geqs \babs{\theta}_{\min}> \frac{4\pi}{3}\sqrt{1+\frac{1.25}{\beta}}\pare{\frac{\sigma}{m}}^\frac{1}{2} = \frac{2\sqrt{2}\pi}{3}\varepsilon^\frac{1}{2}$,  the bound $\sin t \geqs\frac{2}{\pi}t$ for $\bracket{0,\frac{\pi}{2}}$ yields
\begin{align*}
    \sin\babs{\theta}_{\min} > \sin\pare{\frac{2\sqrt{2}\pi}{3}\varepsilon^\frac{1}{2}} \geqs \frac{2}{\pi}\cdot \frac{2\sqrt{2}\pi}{3}\varepsilon^\frac{1}{2}=\frac{4\sqrt{2}}{3}\varepsilon^\frac{1}{2}>\varepsilon^\frac{1}{2}.
\end{align*}
Consequently, \eqref{equ:num_UB_largetheta_eq1} simplifies to
\begin{align}
    \sin\!\pare{\frac{d\omega^\ast}{2}}< \frac{\varepsilon}{\sin\babs{\theta}_{\min}}=\frac{\pare{2+\frac{2.5}{\beta}}\frac{\sigma}{m}}{\sin \!(\babs{\theta}_{\min})}.
  \label{equ:num_UB_largetheta_eq2}
\end{align}

\medskip
\noindent
\emph{Subregime B:}
\begin{align*}
    \frac{d\omega^\ast}{2}+\babs{\theta}_{\min}\geqs \pi-\babs{\theta}_{\min},
\end{align*}
which implies $\tfrac{d\omega^\ast}{4}+\babs{\theta}_{\min}\geqs \tfrac{\pi}{2}$.
Since $\omega^\ast\in(0,\Omega]$ and $y_1,y_2\in B_{\frac{\pi}{2\Omega}}(0)$, we may select $\omega^\ast=\Omega/2$, so that
\begin{align*}
\frac{\pi}{2}\leqs \babs{\theta}_{\min}+\frac{d\omega^\ast}{2}
=\babs{\theta}_{\min}+\frac{d\Omega}{4}
\leqs \frac{3\pi}{4}.
\end{align*}
Applying \eqref{equ:basic_ineq_complex_num_large} with $\omega^\ast=\Omega/2$ then gives
\begin{align*}
\babs{\sin\!\pare{\frac{d\Omega}{4}}\sin\frac{3\pi}{4}}<\babs{\sin\!\pare{\frac{d\Omega}{4}}\,
\sin\!\pare{\frac{d\Omega}{4}+\babs{\theta}_{\min}}}
< \varepsilon,
\end{align*}
and therefore
\begin{align}
    \sin\pare{\frac{d\Omega}{4}}<\sqrt{2}\varepsilon=\sqrt{2}\pare{2+\frac{2.5}{\beta}}\frac{\sigma}{m}.
\label{equ:num_UB_largetheta_eq3}
\end{align}
Moreover, since $d\omega^\ast\leqs\pi$, the condition $\frac{d\omega^\ast}{2}+\babs{\theta}_{\min}<\pi-\babs{\theta}_{\min}$ is ensured whenever $\babs{\theta}_{\min}<\frac{\pi}{4}$. Hence, denote $C_1\pare{\frac{\sigma}{m}} =  \frac{4\pi}{3}\sqrt{1+\frac{1.25}{\beta}}\pare{\frac{\sigma}{m}}^\frac{1}{2}$ and combining \eqref{equ:num_UB_largetheta_eq2} and \eqref{equ:num_UB_largetheta_eq3} yields the following sufficient conditions 
\begin{align}
\begin{cases}
     d<\frac{3}{\Omega}\arcsin\pare{\frac{\pare{2+\frac{2.5}{\beta}}\frac{\sigma}{m}}{\sin\babs{\theta}_{\min}}}, & C_1\pare{\frac{\sigma}{m}}<\babs{\theta}_{\min}<\frac{\pi}{4};\\
     d<\frac{4}{\Omega}\arcsin\pare{\pare{2\sqrt{2}+\frac{2.5\sqrt{2}}{\beta}}\frac{\sigma}{m}}, &\frac{\pi}{4}\leqs\babs{\theta}_{\min}\leqs\frac{\pi}{2}.
\end{cases}
\label{equ:num_complex_largetheta}
\end{align}

Next, we shall prove \eqref{equ:separation_condition_num_complex_LB_large}. By translation invariance \eqref{equ:shifting_invariance} of the measurement model \eqref{equ:imagemodel2sigma_num_largetheta}, we shift the coordinate system so that $y_1=-d/2$ and $y_2=d/2$. 
In this setting, the measurement $\mathbf{Y}$ is generated by
\begin{align*}
    \mu = m e^{i\frac\theta 2}\delta_{\frac{d}{2}}+\beta m e^{-i\frac\theta 2}\delta_{-\frac{d}{2}},\quad d\in\pare{0,\frac{\pi}{\Omega}},\,\beta\geqs 1,\,m>0.
\end{align*}
Consider the one-support complex measure
\[
\widehat \mu = m\pare{\beta  e^{-i\frac{\theta}{2}}+ e ^{i\frac{\theta}{2}}}\delta_{-\frac{d}{2}}.
\]
For all $\babs \omega\leqs \Omega$, we have
\begin{align*}
    & \babs{\mathcal F[\widehat\mu](\omega)-\mathcal F[\mu](\omega)} \\
    =& m\babs{\pare{\beta  e^{-i\frac{\theta}{2}}+ e ^{i\frac{\theta}{2}}}e^{-i\frac{d\omega}{2}}  -   e ^{i\frac{ \theta}{2}} e^{i\frac{d\omega}{2}}-\beta  e ^{-i\frac{ \theta}{2}}  e^{-i\frac{d\omega}{2}}}\\
    =&m\babs{e^{i \frac{\theta-d\omega}{2}}-e^{i \frac{\theta+d\omega}{2}}}=2m\sin\pare{\frac{d\omega}{2}}.
\end{align*}
Therefore, if 
\begin{align*}
    d< \frac{2}{\Omega}\mathrm{arcsin}\pare{\frac{\sigma}{m}},
\end{align*}
then $\babs{\mathscr{F}[\widehat{\mu}](\omega)-\mathscr{F}[\mu](\omega)}< 2\sigma$. Hence, there exists a  $\sigma$-admissible measure of $\mu$ with only one support whenever \eqref{equ:separation_condition_num_complex_LB_large} holds.

\subsection{Proof of Theorem~\ref{thm:supportrecoveryupperboundcomplex_large_theta}}

First, we shall prove \eqref{equ:sepacondilocationcomplex_large}. 
Given the measurement $\mathbf{Y}$ from the discrete measure $\mu=m e^{i\theta_1} \delta_{y_1}+\beta m e^{i\theta_2} \delta_{y_2}$ with
$y_1,y_2\in B_{\frac{\pi}{2\Omega}}(0)$,
$\theta_1\,\theta_2\in \pare{-\pi,\pi}$, $\beta\geqs1$, and $m>0$. Suppose that  $\widehat{\mu}=\widehat{a}_1\,\delta_{\widehat{y}_1}+\widehat{a}_2\,\delta_{\widehat{y}_2}$ is a $\sigma$-admissible measure of\, $\mathbf{Y}$. By Definition~\ref{def:sigma_admissible}, we can write the following mismatch model
\begin{align}
\mathscr{F}[\widehat\mu](\omega)=\mathscr{F}[\mu](\omega)+\mathbf{W}_1(\omega),\;\omega\in [0,2\Omega].
\label{equ:model0_num_large_theta}
\end{align}
for some $\mathbf{W}_1$
with $\babs{\mathbf{W}_1\pare{\cdot}}<2\sigma$, $\omega\in[0,2\Omega]$. Considering the argument in the proof of \eqref{equ:sepacondilocationcomplex_small_beta1},  after reordering $\babs{\widehat{y}_1-y_1}\leqs \babs{\widehat{y}_2-y_1}$ and supposing $\babs{\widehat{y}_1-y_1}\geqs \frac{d}{2}$, we can obtain the similar relationship as (\ref{equ:basic_inequality_of_complex_location}), that for any $\omega^\ast \in \left(0,\frac{2\Omega}{3}\right]$,
\begin{equation}\label{equ:basic_ineq_complex_locs_large}
\sin^2\!\pare{\frac{d\omega^\ast}{4}}\,\babs{\sin\pare{d\omega^\ast+\babs{\theta}_{\min}}} < \frac{2\sigma}{m}.
\end{equation}
The condition $d<d_\star$, with $d_\star$ defined in \eqref{equ:d_star_locs}, is generally not enough to guarantee    \eqref{equ:basic_ineq_complex_locs_large} when $\babs{\theta}_{\min}>1.75\pi\pare{\frac{\sigma}{m}}^\frac{1}{3}$.  the lower generally not sufficiently small to be below $\frac{2\sigma}{m}$. We therefore develop an alternative bound. Since $\babs{\theta}_{\min}$ is symmetric with respect to
$\babs{\theta}=\pi/2$, we focus on the regime
\begin{align*}
    1.75\pi\pare{\frac{\sigma}{m}}^\frac{1}{3}<\babs{\theta}_{\min}\leqs\frac{\pi}{2},
\end{align*}
and distinguish two subregimes.

\medskip
\noindent
\emph{Subregime A:} 
\begin{align*}
    \max{\{ d\omega^\ast, \babs{\theta}_{\min} \} } \leqs d\omega^\ast + \babs{\theta}_{\min} \leqs \pi-\babs{\theta}_{\min}.
\end{align*}
In this range, 
\begin{align*}
    &\sin^2\!\pare{\frac{d\omega^\ast}{4}}\,\babs{\sin(d\omega^\ast+\babs{\theta}_{\min})}\\
    \geqs&\sin^2\!\pare{\frac{d\omega^\ast}{4}}\cdot \max\left\{ \sin\!\pare{ d\omega^\ast}, \sin\babs{\theta}_{\min}\right\},
\end{align*}
and hence \eqref{equ:basic_ineq_complex_locs_large} implies
\begin{align*}
    \sin\!\pare{\frac{d\omega^\ast}{4}} <\min\left\{ \pare{\frac{\frac{2\sigma}{m}}{\sin \pare{\babs{\theta}_{\min}}}}^\frac{1}{2},\pare{\frac{2\sigma}{m}}^\frac{1}{3}  \right\}.
\end{align*}
Moreover, since $1.75\pi\pare{\frac{\sigma}{m}}^\frac{1}{3}<\babs{\theta}_{\min}<\frac{\pi}{2}$, we have
\begin{align*}
    \sin \pare{\babs{\theta}_{\min}}> \frac{2}{\pi}\cdot \babs{\theta}_{\min} \geqs \frac{2}{\pi} \cdot \frac{2\pare{3\sqrt{2}-3}\pi}{\sqrt{2}}\pare{\frac{\sigma}{m}}^\frac{1}{3} > \pare{\frac{2\sigma}{m}}^\frac{1}{3}.
\end{align*}
Consequently,  \eqref{equ:basic_ineq_complex_locs_large} reduces to
\begin{align}
    \sin\!\pare{\frac{d\omega^\ast}{4}}< \pare{\frac{\frac{2\sigma}{m}}{\sin \pare{\babs{\theta}_{\min}}}}^\frac{1}{2}.
    \label{equ:UB_eq1_locs_largetheta}
\end{align}

\medskip
\noindent
\emph{Subregime B:}
\begin{align*}
    \pi-\babs{\theta}_{\min} \leqs d\omega^\ast + \babs{\theta}_{\min} \leqs \pi,
\end{align*}
which implies $ \babs{\theta}_{\min} + \frac{d\omega^\ast}{2} \geqs \frac{\pi}{2}$. Since  $\omega^\ast \in \left(0,\frac{2\Omega}{3}\right]$ and $y_1,y_2\in B_{\frac{\pi}{2\Omega}}$, we may choose $\omega^\ast=\frac{\Omega}{2}$, so that
\begin{align*}
    \frac{\pi}{2}\leqs\babs{\theta}_{\min} + d\omega^\ast=\babs{\theta}_{\min} + \frac{d\Omega}{2}\leqs \frac{\pi}{3}+\babs{\theta}_{\min}.
\end{align*}
Applying  \eqref{equ:basic_ineq_complex_locs_large} with then $\omega^\ast=\frac{\Omega}{2}$ gives
\begin{align*}
    \babs{\sin^2\pare{\frac{d\Omega}{8}}\sin\pare{\frac{\pi}{3}+\babs{\theta}_{\min}}}<\babs{\sin^2\pare{\frac{d\Omega}{8}}\,
\sin\!\pare{\frac{d\Omega}{2}+\babs{\theta}_{\min}}}
< \frac{2\sigma}{m},
\end{align*}
and thus,
\begin{align}
    \sin\pare{\frac{d\Omega}{8}}< \pare{\frac{\frac{2\sigma}{m}}{\sin\pare{\frac{\pi}{3}+\babs{\theta}_{\min}}}}^\frac{1}{2}.
    \label{equ:UB_eq2_locs_largetheta}
\end{align}
Furthermore, since $d\omega^\ast\leqs\frac{2\pi}{3}$, the condition $d\omega^\ast+\babs{\theta}_{\min}<\pi-\babs{\theta}_{\min}$ is ensured whenever $\babs{\theta}_{\min}<\frac{\pi}{6}$. Hence, combining \eqref{equ:UB_eq1_locs_largetheta} and \eqref{equ:UB_eq2_locs_largetheta} gives the following conditions
\begin{align}
    \begin{cases}
         d<\frac{6}{\Omega}\arcsin\pare{\pare{\frac{\frac{2\sigma}{m}}{\sin \pare{\babs{\theta}_{\min}}}}^\frac{1}{2}}, & 1.75\pi\pare{\frac{\sigma}{m}}^\frac{1}{3}<\babs{\theta}_{\min}<\frac{\pi}{6} ;\\
         d<\frac{8}{\Omega}\arcsin\pare{\pare{\frac{\frac{2\sigma}{m}}{\sin\pare{\babs{\theta}_{\min}+\frac{\pi}{3}}}}^\frac{1}{2}}, &  \frac{\pi}{6}\leqs\babs{\theta}_{\min} \leqs\frac{\pi}{2}.
    \end{cases}
    \label{equ:condi_locs_eq1}
\end{align}
Therefore, if (\ref{equ:condi_locs_eq1}) does not hold, then we have $|\hat y_1 - y_1|< \frac{d}{2}$ which gives $|\hat y_1 - y_2|> \frac{d}{2}$. Assuming now $|\hat y_2 - y_2| \geqs \frac{d}{2}$, we will have the same relation as (\ref{equ:basic_inequality_of_complex_location_eq2}), that for any $\omega^\ast \in \left(0,\frac{2\Omega}{3}\right]$,
\begin{equation}\label{equ:basic_ineq_complex_locs_large_eq2}
\sin^2\!\pare{\frac{d\omega^\ast}{4}}\,\babs{\sin(d\omega^\ast+\babs{\theta}_{\min})} < \frac{2\sigma}{\beta m}.
\end{equation}
By the same arguments as above, this cannot hold when (\ref{equ:condi_locs_eq1}) does not hold. Therefore, $|\hat y_2 - y_2|<\frac{d}{2}$. In summary, under separation condition (\ref{equ:sepacondilocationcomplex_large}) in the theorem, we have $\babs{\widehat y_1-y_1}<\frac{d}{2} $ and $\babs{\widehat y_2-y_2}<\frac{d}{2} $.

Next, we shall prove \eqref{equ:separation_condition_locs_complex_LB_large}. By translation invariance of the measurement model \eqref{equ:model0_num_large_theta}, we shift the coordinate system so that $y_1=-d/2$ and $y_2=d/2$. 
In this setting, the measurement $\mathbf{Y}$ is generated by
\begin{align*}
    \mu =m e^{i\frac{ \theta}{2}}\delta_{\frac{d}{2}} + \beta m e^{-i\frac{ \theta}{2}}\delta_{-\frac{d}{2}},  \quad\beta\geqs1,\quad m>0.
\end{align*}
Consider the two-support complex measure
\begin{align*}
    \widehat{\mu}= \pare{\frac{1}{2}e^{i\frac{\theta}{2}}+\beta e^{-i\frac{\theta}{2}}}m \delta_{-\frac{d}{2}} + \frac{1}{2} m e^{i\frac{ \theta}{2}}\delta_{\frac{3d}{2}}.
\end{align*}
For $\babs \omega\leqs\Omega$, we have
\begin{align*}
    & \babs{\mathcal F[\widehat\mu](\omega)-\mathcal F[\mu](\omega)} \\
    =&m\babs{ \pare{\frac{1}{2}e^{i\frac{\theta}{2}}+\beta e^{-i\frac{\theta}{2}}}e^{-i\frac{d\omega}{2}}+\frac{1}{2} e^{i\pare{\frac{\theta}{2}+\frac{3d\omega}{2}}} - e^{i\pare{\frac{\theta}{2} +  \frac{d\omega}{2}}}-\beta  e^{-i \pare{\frac{\theta}{2}+\frac{d\omega}{2}}}} \\
    =&m\babs{\frac{1}{2}e^{-i\frac{d\omega}{2}}+\frac{1}{2}e^{i\frac{3d\omega}{2}}-e^{i\frac{d\omega}{2}}} = m\babs{\frac{1}{2}\pare{e^{-i d\omega}+e^{i d\omega}} -1}\\
    =&2m\sin^2 \pare{\frac{d\omega}{2}}.
\end{align*}
Therefore, if 
\begin{align*}
    d<\frac{2}{\Omega}\arcsin \pare{\pare{\frac{\sigma}{m}}^\frac{1}{2}},
\end{align*}
then $\babs{\mathcal F[\widehat\mu](\omega)-\mathcal F[\mu](\omega)}<2\sigma$, hence there exists a $\sigma$-admissible measure of\, $\mathbf{Y}$ with two supports that does not lie in the $\frac{d}{2}$-neighborhood of $\mu$.

}   %appendix的大括号

\bibliographystyle{IEEEtran}
\bibliography{reference_final}  %参考文献都放在这个包里

@article{STED,
title={Video-rate far-field optical nanoscopy dissects synaptic vesicle movementVideo-rate far-field optical nanoscopy dissects synaptic vesicle movement},
author={Westphalsilvio, V. and Rizzolimarcel, O. and Lauterbachdirk, A. and Kaminereinhard, J. and Hell, S.W.},
journal={Science},
	volume={320},
pages={246--249},
year={2008},
}

@article{PALM,
title={Imaging intracellular fluorescent proteins at nanometer resolution},
author={Betzig, E. and Patterson, G.H. and Sougrat, R. and  Lindwasser, O.W. and Olenych, S. and Bonifacino, J.S. and Davidson, M.W. and
Lippincott-Schwartz, J. and Hess, H.F.},
journal={Science},
	volume={313},
pages={1642--1645},
year={2006},
}

@article{STORM,
title={Sub-diffraction-limit imaging by stochastic optical reconstruction
microscopy (STORM)},
author={Rust, M.J. and Bates, M. and Zhuang, X.},
journal={ Nat. Methods},
	volume={3},
number={10},
pages={793--796},
year={2006},
}

@article{hell1994breaking,
	title={Breaking the diffraction resolution limit by stimulated emission: stimulated-emission-depletion fluorescence microscopy},
	author={Hell, Stefan W and Wichmann, Jan},
	journal={Optics letters},
	volume={19},
	number={11},
	pages={780--782},
	year={1994},
	publisher={Optical Society of America}
}

@article{hess2006ultra,
	title={Ultra-high resolution imaging by fluorescence photoactivation localization microscopy},
	author={Hess, Samuel T and Girirajan, Thanu PK and Mason, Michael D},
	journal={Biophysical journal},
	volume={91},
	number={11},
	pages={4258--4272},
	year={2006},
	publisher={Elsevier}
}

@article{batenkov2020conditioning,
	title={Conditioning of partial nonuniform Fourier matrices with clustered nodes},
	author={Batenkov, Dmitry and Demanet, Laurent and Goldman, Gil and Yomdin, Yosef},
	journal={SIAM Journal on Matrix Analysis and Applications},
	volume={41},
	number={1},
	pages={199--220},
	year={2020},
	publisher={SIAM}
}

@article{denoyelle2017support,
	title={Support recovery for sparse super-resolution of positive measures},
	author={Denoyelle, Quentin and Duval, Vincent and Peyr{\'e}, Gabriel},
	journal={Journal of Fourier Analysis and Applications},
	volume={23},
	number={5},
	pages={1153--1194},
	year={2017},
	publisher={Springer}
}

@article{duval2015exact,
	title={Exact support recovery for sparse spikes deconvolution},
	author={Duval, Vincent and Peyr{\'e}, Gabriel},
	journal={Foundations of Computational Mathematics},
	volume={15},
	number={5},
	pages={1315--1355},
	year={2015},
	publisher={Springer}
}

@article{liu2021mathematicalhighd,
  title={A mathematical theory of computational resolution limit in multi-dimensional spaces},
  author={Liu, Ping and Zhang, Hai},
  journal={Inverse Problems},
  volume={37},
  number={10},
  pages={104001},
  year={2021},
  publisher={IOP Publishing}
}

@article{liu2021mathematicaloned,
	title={A mathematical theory of computational resolution limit in one dimension},
	author={Liu, Ping and Zhang, Hai},
	journal={Applied and Computational Harmonic Analysis},
	volume={56},
	pages={402--446},
	year={2022},
	publisher={Elsevier}
}

@article{liu2021theorylse,
	title={A Theory of Computational Resolution Limit for Line Spectral Estimation},
	author={Liu, Ping and Zhang, Hai},
	journal={IEEE Transactions on Information Theory},
	volume={67},
	number={7},
	pages={4812--4827},
	year={2021},
	publisher={IEEE}
}

@article{morgenshtern2020super,
	title={Super-Resolution of Positive Sources on an Arbitrarily Fine Grid},
	author={Morgenshtern, Veniamin I},
	journal={ J. Fourier Anal. Appl.},
	volume={ 28},
	number={1},
	pages={Paper No. 4.},
	year={2021},
}

@article{poon2019,
	author = {Poon, Clarice. and Peyr{\'e}, Gabriel.},
	title = {MultiDimensional Sparse Super-Resolution},
	journal = {SIAM Journal on Mathematical Analysis},
	volume = {51},
	number = {1},
	pages = {1-44},
	year = {2019},	
}

@article{tang2014near,
	title={Near minimax line spectral estimation},
	author={Tang, Gongguo and Bhaskar, Badri Narayan and Recht, Benjamin},
	journal={IEEE Transactions on Information Theory},
	volume={61},
	number={1},
	pages={499--512},
	year={2014},
	publisher={IEEE}
}

@article{liu2023improved,
  title={Improved resolution estimate for the two-dimensional super-resolution and a new algorithm for direction of arrival estimation with uniform rectangular array},
  author={Liu, Ping and Ammari, Habib},
  journal={Foundations of Computational Mathematics},
  pages={1--50},
  year={2023},
  publisher={Springer}
}

@article{liu2022rslpositive,
	title={A mathematical theory of resolution limits for super-resolution of positive sources},
	author={Liu, Ping and He, Yanchen and Ammari, Habib},
	journal={arXiv preprint arXiv:2211.13541},
	year={2022}
}

@inproceedings{liu2024mathematical,
  title={A mathematical theory of super-resolution and two-point resolution},
  author={Liu, Ping and Ammari, Habib},
  booktitle={Forum of Mathematics, Sigma},
  volume={12},
  pages={e83},
  year={2024},
  organization={Cambridge University Press}
}

@article{liu2021theory,
  title={A theory of computational resolution limit for line spectral estimation},
  author={Liu, Ping and Zhang, Hai},
  journal={IEEE Transactions on Information Theory},
  volume={67},
  number={7},
  pages={4812--4827},
  year={2021},
  publisher={IEEE}
}

@article{candes2014towards,
  title={Towards a mathematical theory of super-resolution},
  author={Cand{\`e}s, Emmanuel J and Fernandez-Granda, Carlos},
  journal={Communications on pure and applied Mathematics},
  volume={67},
  number={6},
  pages={906--956},
  year={2014},
  publisher={Wiley Online Library}
}

@article{donoho1992superresolution,
  title={Superresolution via sparsity constraints},
  author={Donoho, David L},
  journal={SIAM journal on mathematical analysis},
  volume={23},
  number={5},
  pages={1309--1331},
  year={1992},
  publisher={SIAM}
}

@article{demanet2015recoverability,
  title={The recoverability limit for superresolution via sparsity},
  author={Demanet, Laurent and Nguyen, Nam},
  journal={arXiv preprint arXiv:1502.01385},
  year={2015}
}

@article{li2021stable,
  title={Stable super-resolution limit and smallest singular value of restricted Fourier matrices},
  author={Li, Weilin and Liao, Wenjing},
  journal={Applied and Computational Harmonic Analysis},
  volume={51},
  pages={118--156},
  year={2021},
  publisher={Elsevier}
}

@inproceedings{tang2015resolution,
  title={Resolution limits for atomic decompositions via Markov-Bernstein type inequalities},
  author={Tang, Gongguo},
  booktitle={2015 International Conference on Sampling Theory and Applications (SampTA)},
  pages={548--552},
  year={2015},
  organization={IEEE}
}

@article{li2022stability,
  title={Stability and super-resolution of MUSIC and ESPRIT for multi-snapshot spectral estimation},
  author={Li, Weilin and Zhu, Zengying and Gao, Weiguo and Liao, Wenjing},
  journal={IEEE Transactions on Signal Processing},
  volume={70},
  pages={4555--4570},
  year={2022},
  publisher={IEEE}
}

@article{fernandez2016super,
  title={Super-resolution of point sources via convex programming},
  author={Fernandez-Granda, Carlos},
  journal={Information and Inference: A Journal of the IMA},
  volume={5},
  number={3},
  pages={251--303},
  year={2016},
  publisher={Oxford University Press}
}

@inproceedings{ding2024esprit,
  title={The ESPRIT algorithm under high noise: Optimal error scaling and noisy super-resolution},
  author={Ding, Zhiyan and Epperly, Ethan N and Lin, Lin and Zhang, Ruizhe},
  booktitle={2024 IEEE 65th Annual Symposium on Foundations of Computer Science (FOCS)},
  pages={2344--2366},
  year={2024},
  organization={IEEE}
}

@article{batenkov2021super,
  title={Super-resolution of near-colliding point sources},
  author={Batenkov, Dmitry and Goldman, Gil and Yomdin, Yosef},
  journal={Information and Inference: A Journal of the IMA},
  volume={10},
  number={2},
  pages={515--572},
  year={2021},
  publisher={Oxford University Press}
}

@article{duval2020characterization,
  title={A characterization of the non-degenerate source condition in super-resolution},
  author={Duval, Vincent},
  journal={Information and Inference: A Journal of the IMA},
  volume={9},
  number={1},
  pages={235--269},
  year={2020},
  publisher={Oxford University Press}
}

@article{hockmann2024analysis,
  title={Analysis of the sparse super resolution limit using the Cram{\'e}r-Rao lower bound},
  author={Hockmann, Mathias},
  journal={IEEE Transactions on Information Theory},
  volume={71},
  number={1},
  pages={390--395},
  year={2024},
  publisher={IEEE}
}

@article{da2020stable,
  title={On the stable resolution limit of total variation regularization for spike deconvolution},
  author={Da Costa, Maxime Ferreira and Chi, Yuejie},
  journal={IEEE Transactions on Information Theory},
  volume={66},
  number={11},
  pages={7237--7252},
  year={2020},
  publisher={IEEE}
}

@book{bertsekas2009convex,
  title={Convex optimization theory},
  author={Bertsekas, Dimitri},
  volume={1},
  year={2009},
  publisher={Athena Scientific}
}

@article{bubeck2015convex,
  title={Convex optimization: Algorithms and complexity},
  author={Bubeck, S{\'e}bastien},
  journal={Foundations and trends in Machine Learning},
  volume={8},
  number={3-4},
  pages={231--357},
  year={2015},
  publisher={Emerald Publishing limited}
}

@article{li2023convex,
  title={Convex and non-convex optimization under generalized smoothness},
  author={Li, Haochuan and Qian, Jian and Tian, Yi and Rakhlin, Alexander and Jadbabaie, Ali},
  journal={Advances in Neural Information Processing Systems},
  volume={36},
  pages={40238--40271},
  year={2023}
}

@book{katayama2005subspace,
  title={Subspace methods for system identification},
  author={Katayama, Tohru},
  year={2005},
  publisher={Springer}
}

@article{lee2012subspace,
  title={Subspace methods for joint sparse recovery},
  author={Lee, Kiryung and Bresler, Yoram and Junge, Marius},
  journal={IEEE Transactions on Information Theory},
  volume={58},
  number={6},
  pages={3613--3641},
  year={2012},
  publisher={IEEE}
}

@article{liao2016iterative,
  title={Iterative methods for subspace and DOA estimation in nonuniform noise},
  author={Liao, Bin and Chan, Shing-Chow and Huang, Lei and Guo, Chongtao},
  journal={IEEE Transactions on Signal Processing},
  volume={64},
  number={12},
  pages={3008--3020},
  year={2016},
  publisher={IEEE}
}

@article{agarwal2016multiple,
  title={Multiple signal classification algorithm for super-resolution fluorescence microscopy},
  author={Agarwal, Krishna and Mach{\'a}{\v{n}}, Radek},
  journal={Nature communications},
  volume={7},
  number={1},
  pages={13752},
  year={2016},
  publisher={Nature Publishing Group UK London}
}

@article{roy2002esprit,
  title={ESPRIT-estimation of signal parameters via rotational invariance techniques},
  author={Roy, Richard and Kailath, Thomas},
  journal={IEEE Transactions on acoustics, speech, and signal processing},
  volume={37},
  number={7},
  pages={984--995},
  year={2002},
  publisher={IEEE}
}

@article{zhao2018super,
  title={Super-resolution of 3-D GPR signals to estimate thin asphalt overlay thickness using the XCMP method},
  author={Zhao, Shan and Al-Qadi, Imad L},
  journal={IEEE Transactions on Geoscience and Remote sensing},
  volume={57},
  number={2},
  pages={893--901},
  year={2018},
  publisher={IEEE}
}

@article{mangilli2022new,
  title={A new approach for the estimation of lake ice thickness from conventional radar altimetry},
  author={Mangilli, Anna and Thibaut, Pierre and Duguay, Claude R and Murfitt, Justin},
  journal={IEEE Transactions on Geoscience and Remote Sensing},
  volume={60},
  pages={1--15},
  year={2022},
  publisher={IEEE}
}

@article{yakovlev2015non,
  title={Non-destructive evaluation of polymer composite materials at the manufacturing stage using terahertz pulsed spectroscopy},
  author={Yakovlev, Egor V and Zaytsev, Kirill I and Dolganova, Irina N and Yurchenko, Stanislav O},
  journal={IEEE Transactions on Terahertz science and Technology},
  volume={5},
  number={5},
  pages={810--816},
  year={2015},
  publisher={IEEE}
}

@article{liu2017height,
  title={Height measurement of low-angle target using MIMO radar under multipath interference},
  author={Liu, Yuan and Jiu, Bo and Xia, Xiang-Gen and Liu, Hongwei and Zhang, Lei},
  journal={IEEE Transactions on Aerospace and Electronic Systems},
  volume={54},
  number={2},
  pages={808--818},
  year={2017},
  publisher={IEEE}
}

@article{liu2025signal,
  title={Signal-level fusion-based height estimation of low-elevation target for distributed radar},
  author={Liu, Qi and Guo, Rui and Wang, Jiajia and Xu, Shiyou and Chen, Zengping},
  journal={IEEE Transactions on Instrumentation and Measurement},
  year={2025},
  publisher={IEEE}
}

@article{wang2020efficient,
  title={Efficient and unambiguous two-target resolution via subarray-based four-channel monopulse},
  author={Wang, Shengbin Luo and Xu, Zhen-Hai and Yang, Xiao and Li, Zhongren and Wang, Guoyu},
  journal={IEEE Transactions on Signal Processing},
  volume={68},
  pages={885--900},
  year={2020},
  publisher={IEEE}
}

@article{gini2003layover,
  title={Layover solution in multibaseline SAR interferometry},
  author={Gini, Fulvio and Lombardini, Fabrizio and Montanari, Monica},
  journal={IEEE Transactions on Aerospace and Electronic Systems},
  volume={38},
  number={4},
  pages={1344--1356},
  year={2003},
  publisher={IEEE}
}

@article{martorella20143d,
  title={3D interferometric ISAR imaging of noncooperative targets},
  author={Martorella, Marco and Stagliano, Daniele and Salvetti, Federica and Battisti, Nicola},
  journal={IEEE Transactions on Aerospace and Electronic Systems},
  volume={50},
  number={4},
  pages={3102--3114},
  year={2014},
  publisher={IEEE}
}

@article{lombardini2003reflectivity,
  title={Reflectivity estimation for multibaseline interferometric radar imaging of layover extended sources},
  author={Lombardini, Fabrizio and Montanari, Monica and Gini, Fulvio},
  journal={IEEE Transactions on Signal Processing},
  volume={51},
  number={6},
  pages={1508--1519},
  year={2003},
  publisher={IEEE}
}

@article{liu2024multi,
  title={Multi-View Data-Based Layover Information Compensation Method for SAR Image Mosaic},
  author={Liu, Rui and Wang, Feng and Jiao, Niangang and You, Hongjian and Hu, Yuxin and Zhou, Guangyao and Chen, Yao},
  journal={Remote Sensing},
  volume={16},
  number={3},
  pages={564},
  year={2024},
  publisher={MDPI}
}

@article{huang2023joint,
  title={Joint estimation of unresolved leader--follower in the presence of dense false signals using monopulse radar},
  author={Huang, Qianlan and Fan, Hongqi and Cai, Fei and Xiao, Huaitie},
  journal={IEEE Transactions on Aerospace and Electronic Systems},
  volume={59},
  number={6},
  pages={9635--9649},
  year={2023},
  publisher={IEEE}
}

@article{lee2015unambiguous,
  title={Unambiguous angle estimation of unresolved targets in monopulse radar},
  author={Lee, Seung-phil and Cho, Byung-Lae and Lee, Sang-min and Kim, Ji-eun and Kim, Young-soo},
  journal={IEEE Transactions on Aerospace and Electronic Systems},
  volume={51},
  number={2},
  pages={1170--1177},
  year={2015},
  publisher={IEEE}
}

@article{blair2002unresolved,
  title={Unresolved Rayleigh target detection using monopulse measurements},
  author={Blair, William D and Brandt-Pearce, Maite},
  journal={IEEE Transactions on Aerospace and Electronic Systems},
  volume={34},
  number={2},
  pages={543--552},
  year={2002},
  publisher={IEEE}
}

@article{angle2021multiple,
  title={Multiple target tracking with unresolved measurements},
  author={Angle, R Blair and Streit, Roy L and Efe, Murat},
  journal={IEEE Signal Processing Letters},
  volume={28},
  pages={319--323},
  year={2021},
  publisher={IEEE}
}

@article{nandakumaran2008joint,
  title={Joint detection and tracking of unresolved targets with monopulse radar},
  author={Nandakumaran, N and Sinha, A and Kirubarajan, T},
  journal={IEEE Transactions on Aerospace and Electronic Systems},
  volume={44},
  number={4},
  pages={1326--1341},
  year={2008},
  publisher={IEEE}
}

@article{he2017high,
  title={High-resolution imaging and 3-D reconstruction of precession targets by exploiting sparse apertures},
  author={He, Xingyu and Tong, Ningning and Hu, Xiaowei},
  journal={IEEE Transactions on Aerospace and Electronic Systems},
  volume={53},
  number={3},
  pages={1212--1220},
  year={2017},
  publisher={IEEE}
}

@article{lan2017distributed,
  title={Distributed ECM algorithm for OTHR multipath target tracking with unknown ionospheric heights},
  author={Lan, Hua and Liang, Yan and Wang, Zengfu and Yang, Feng and Pan, Quan},
  journal={IEEE Journal of Selected Topics in Signal Processing},
  volume={12},
  number={1},
  pages={61--75},
  year={2017},
  publisher={IEEE}
}

@article{wang2021layover,
  title={Layover compensation method for regional spaceborne SAR imagery without GCPs},
  author={Wang, Huabin and Cheng, Qian and Wang, Taoyang and Zhang, Guo and Wang, Yunming and Li, Xin and Jiang, Boyang},
  journal={IEEE Transactions on Geoscience and Remote Sensing},
  volume={59},
  number={10},
  pages={8367--8381},
  year={2021},
  publisher={IEEE}
}

@article{rogel2021time,
  title={Time of arrival and angle of arrival estimation algorithm in dense multipath},
  author={Rogel, Nuriel and Raphaeli, Dan and Bialer, Oded},
  journal={IEEE Transactions on Signal Processing},
  volume={69},
  pages={5907--5919},
  year={2021},
  publisher={IEEE}
}

@inproceedings{kase2018doa,
  title={DOA estimation of two targets with deep learning},
  author={Kase, Yuya and Nishimura, Toshihiko and Ohgane, Takeo and Ogawa, Yasutaka and Kitayama, Daisuke and Kishiyama, Yoshihisa},
  booktitle={2018 15th Workshop on Positioning, Navigation and Communications (WPNC)},
  pages={1--5},
  year={2018},
  organization={IEEE}
}

@article{haardt2002unitary,
  title={Unitary ESPRIT: How to obtain increased estimation accuracy with a reduced computational burden},
  author={Haardt, Martin and Nossek, Josef A},
  journal={IEEE transactions on signal processing},
  volume={43},
  number={5},
  pages={1232--1242},
  year={2002},
  publisher={IEEE}
}

@incollection{pan2002maximum,
  title={Maximum likelihood estimation},
  author={Pan, Jian-Xin and Fang, Kai-Tai},
  booktitle={Growth curve models and statistical diagnostics},
  pages={77--158},
  year={2002},
  publisher={Springer}
}

@article{mestre2020resolution,
  title={On the resolution probability of conditional and unconditional maximum likelihood DoA estimation},
  author={Mestre, Xavier and Vallet, Pascal},
  journal={IEEE Transactions on Signal Processing},
  volume={68},
  pages={4656--4671},
  year={2020},
  publisher={IEEE}
}

@article{schenck2020probability,
  title={Probability of resolution of partially relaxed deterministic maximum likelihood: An asymptotic approach},
  author={Schenck, David and Mestre, Xavier and Pesavento, Marius},
  journal={IEEE Transactions on Signal Processing},
  volume={69},
  pages={852--866},
  year={2020},
  publisher={IEEE}
}

\begin{IEEEbiography}[{\includegraphics[width=1in,height=1.25in,clip,keepaspectratio]{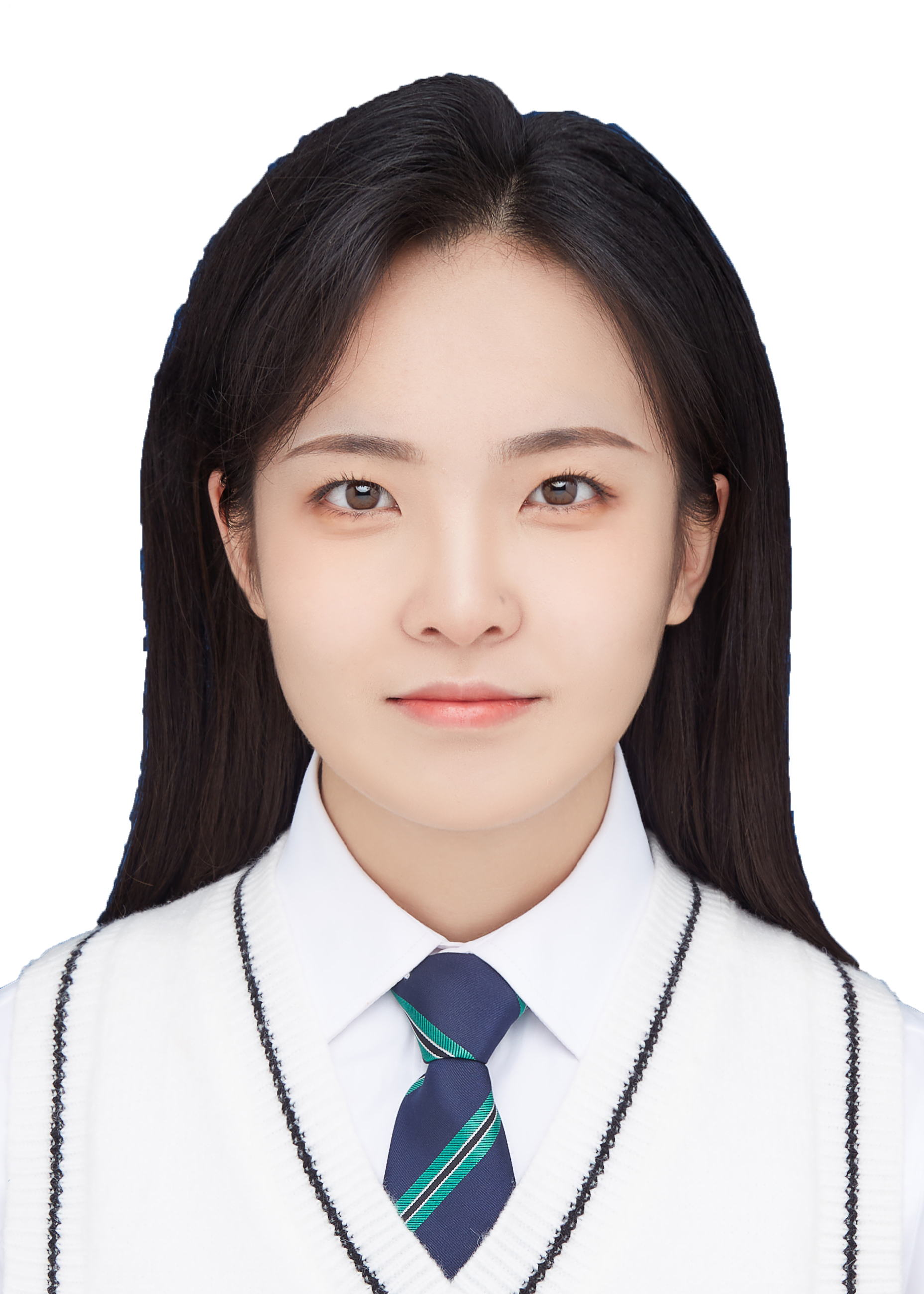}}]{Xiaole He}
was born in Shuozhou, Shanxi, China, in 1999. She received the B.S. degree from  Xidian University, China, in 2021. 

She is currently pursuing the Ph.D. degree at Beijing Institute of Technology, Beijing, China. Her research interests include ISAR imaging, sparse imaging, and super-resolution imaging.
\end{IEEEbiography}

\begin{IEEEbiography}[{\includegraphics[width=1in,height=1.25in,clip,keepaspectratio]{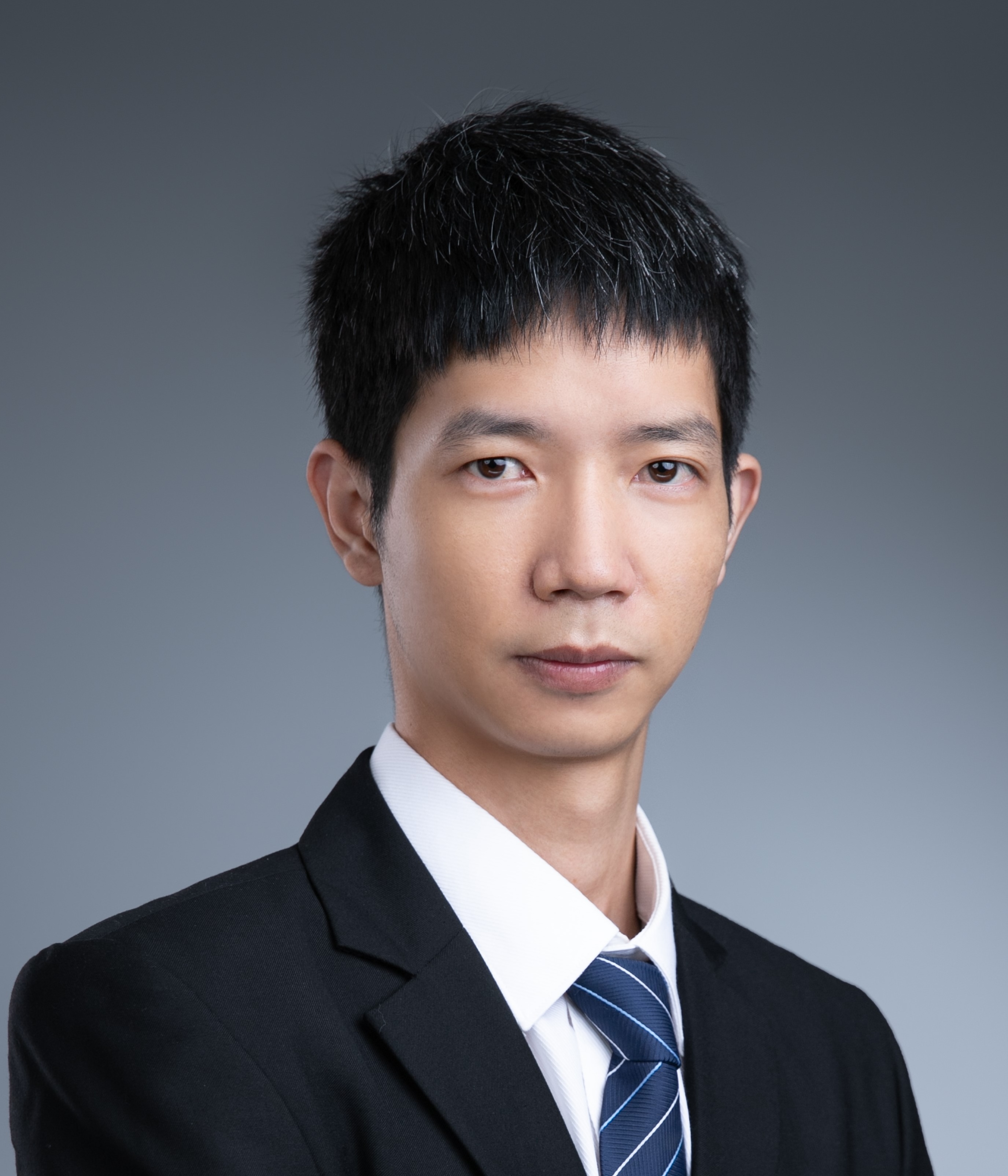}}]{Ping Liu}
received the B.S. degree in mathematics from Wuhan University in 2016, and the Ph.D. degree in mathematics from the Hong Kong University of Science and Technology in 2021.

He was a postdoc with the Department of Mathematics, ETH Z{\"u}rich, from 2021 to 2024. Since 2024, he has been a researcher jointly with the School of Mathematical Sciences and the Institute of Fundamental and Transdisciplinary Research, Zhejiang University, Hangzhou, China.
His current research interests include super-resolution imaging, array signal processing, inverse problems, and topological phononics.

\end{IEEEbiography}

\begin{IEEEbiography}[{\includegraphics[width=1in,height=1.25in,clip,keepaspectratio]{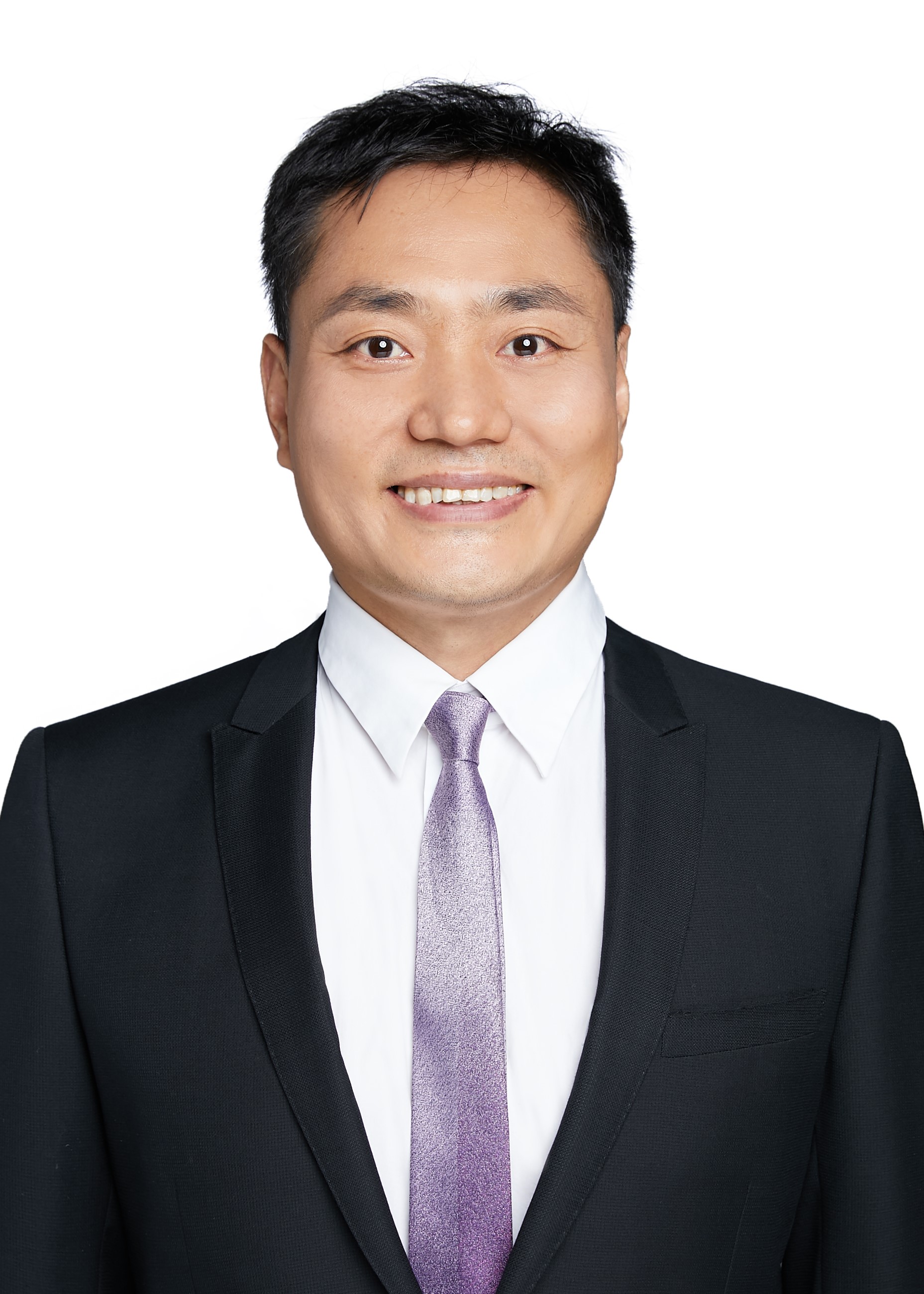}}]{Junling Wang}
(Member, IEEE) received the B.S. and M.E. degrees from China University of Petroleum, Qingdao, China, in 2005 and 2008, respectively, and the Ph.D. degree from Beijing Institute of Technology (BIT), Beijing, China, in 2013.

He was an exchange student with the Department of Signal Theory and Communications, Universitat Polit{\`e}cnica de Catalunya, Barcelona, Spain, in 2010. Since 2013, he has been with the School of Information and Electronics, BIT, Beijing, China, where he is currently an Associate Professor. His current research interests include satellite detection and imaging, and radar signal processing.
\end{IEEEbiography}

\vfill

\end{document}